\def\fulldetails{1}
\title{Sampling Correctors}
\author{
  \ccolor{Cl\'{e}ment L. Canonne}\thanks{Stanford University. Email: \email{ccanonne@cs.stanford.edu}. Supported by a Motwani Postdoctoral Fellowship. Part of this work was performed while the author was a graduate student at Columbia University, supported by NSF grants CCF-1115703 and NSF CCF-1319788; and an intern at Microsoft Research New England.}
  \and \tcolor{Themis Gouleakis}\thanks{CSAIL, MIT. Email: \email{tgoule@mit.edu}
Research supported by NSF grants CCF-1420692, CCF-1217423 and CCF-1065125.}
  \and \rcolor{Ronitt Rubinfeld}\thanks{CSAIL, MIT and the Blavatnik School of Computer Science, Tel Aviv University. Email: \email{ronitt@csail.mit.edu}. 
Research supported by ISF grant 1536/14 and NSF grants CCF-1420692, CCF-1217423 and CCF-1065125.}
}
\begin{document}
\setcounter{page}{0}
\pagenumbering{gobble}
\maketitle
\begin{abstract}
In many situations, sample data is obtained from a noisy or imperfect
source. In order to address such corruptions, this paper introduces the
concept of a \emph{sampling corrector}. Such algorithms use structure
that the distribution is purported to have, in order to allow one to
make ``on-the-fly'' corrections to samples drawn from probability
distributions. These algorithms then act as filters between the noisy
data and the end user.

We show connections between sampling correctors, distribution learning
algorithms, and distribution property testing algorithms. We show that
these connections can be utilized to expand the applicability of known
distribution learning and property testing algorithms as well as to achieve improved algorithms for those tasks.

As a first step, we show how to design sampling correctors using proper learning algorithms.
We then focus on the question of whether algorithms for sampling
correctors can be more efficient in terms of sample complexity than
learning algorithms for the analogous families of distributions. When
correcting monotonicity, we show that this is indeed the case 
when also granted query access to the cumulative distribution function. We also obtain sampling correctors for monotonicity even without this stronger type of access, provided that the distribution be originally \emph{very} close to monotone (namely, at a distance $O(1/\log^2 n)$). In addition to that, we consider a restricted error model that aims at capturing
``missing data'' corruptions. In this model, we show that distributions
that are close to monotone have sampling correctors that are
significantly more efficient than achievable by the learning approach.

We consider the question of whether an additional source of independent
random bits is required by sampling correctors to implement the
correction process. We show that for correcting close-to-uniform
distributions and close-to-monotone distributions, no additional source
of random bits is required, as the samples from the input source itself
can be used to produce this randomness.

\end{abstract}
 
\clearpage
\setcounter{tocdepth}{2}  \tableofcontents
\clearpage
\pagenumbering{arabic}

\section{Introduction}

Data consisting of samples from distributions 
is notorious for reliability issues:  Sample
data can be greatly affected by noise, calibration
problems or other faults in the sample recording process;
portions of data may be lost; 
extraneous samples may be erroneously recorded.
Such noise may be completely random, or may have
some underlying structure. To give a sense
of the range of difficulties one might have
with sample data, we mention some examples:
A sensor network which tracks traffic data
may have dead sensors  which transmit
no data at all, or other sensors that are defective
and transmit arbitrary numbers. Sample data
from surveys may suffer from response  rates
that are correlated  with location or socioeconomic
factors. Sample data from species
distribution models are prone to
geographic location errors~\cite{HBTB:14}.

Statisticians have grappled with defining a methodology
for working with
distributions in the presence of noise by {\em correcting} 
the samples. If, for example, you know that the uncorrupted distribution is
Gaussian, then it would be natural to correct the samples
of the distribution to  the nearest Gaussian.   
The challenge in defining this methodology is:
how do you correct the samples if you do not
know much about the original uncorrupted distribution?   
To analyze distributions with noise in a principled way,
approaches have included \emph{imputation} \cite{book:imputation,book:incomplete:data,SMP:07} for the case of missing or incomplete data, and \emph{outlier detection and removal} \cite{book:outliers,article:outliers,book:outliers:detect} to handle ``extreme points'' deviating significantly from the underlying distribution. 
More generally, the question of coping with the \emph{sampling bias} inherent to many strategies (such as opportunity sampling) used in studying rare events or species, or with inaccuracies in the reported data, is a key challenge in many of the natural and social sciences 
(see e.g.~\cite{Signor:1982,Senese:03,Scholarpedia:2008}). 
While these problems are usually dealt with drawing on additional knowledge or by using specific modeling assumptions, 
no general procedure is known that addresses them in a systematic fashion.

In this work, we propose a  methodology which is 
based on using {\em known structural properties} of the distribution
to design {\em sampling correctors} which
``correct'' the sample data. 
While assuming these structural properties is in itself a type of modeling, it is in general much weaker than postulating a strict form of the data (e.g., that it follows a linear model perturbed by Gaussian noise).
Examples of structural properties  which might be used to
correct samples include the property of being
bimodal, a mixture of several Gaussians, a mixture of piecewise-polynomial distributions, or an independent joint distribution.
Within this methodology, the main
question is: how best can one output samples of a distribution in which
on one hand, the structural properties are restored, and on the other hand,
the corrected distribution is close to the original distribution?
We show that this task is intimately connected
to distribution learning tasks, 
but we also give instances in which
such tasks can be performed strictly more efficiently.

\subsection{Our model}

We introduce two (related) notions of algorithms to correct distributions: 
\emph{sampling correctors}
and \emph{sampling improvers}.
Although the precise definitions are deferred to \autoref{sec:definitions}, 
we describe and state informally what we mean by these. 
In what follows, \domain is a finite
domain, \property is any fixed property of 
distributions, i.e., a subset of distributions,  over \domain
and distances between distributions are measured according
to their \emph{total variation distance}.\footnote{The total variation
distance is defined as
 $\totalvardist{\D_1}{\D_2} 
\eqdef \max_{S\subseteq \domain}\left( \D_1(S)-\D_2(S) \right) 
= \frac{1}{2}\sum_{x\in \domain} \abs{\D_1(x) - \D_2(x)}.$
}

\medskip 

A \emph{sampling corrector} for \property 
is a randomized algorithm which gets samples from 
a distribution $\D$ guaranteed to be \eps-close to having property
\property, and outputs a sample from a ``corrected distribution'' $\tilde{\D}$ 
which, with high probability, \textsf{(a)} \emph{has} the property; and \textsf{(b)} 
is still close to the original distribution $\D$ (i.e., within distance $\eps_1$). The 
\emph{sample complexity} of such a corrector is the number of samples it needs to 
obtain from $\D$ in order to output one from $\tilde{\D}$. 

\new{
To make things concrete, we give a simple example of correcting independence of distributions over a product space $[n]\times[m]$. 
For each pair of samples $(x,y)$ and $(x^\prime, y^\prime)$ from a distribution $\D$ which is \eps-close to
independent, output \emph{one} sample $(x,y^\prime)$. As $x$ and $y^\prime$ are independent, the resulting distribution clearly has the property; and it can be shown that if $\D$ was indeed $\eps$-close to independent, then the distribution of $(x,y^\prime)$ will indeed be $3\eps$-close to $\D$~\cite{SV:98}. (Whether this sample complexity can be reduced further to $q < 2$, even on average, is an open question.)
}

Note that in some settings it may be too much to ask for complete correction 
(or may even not be the most desirable option). 
For this reason, we also consider the weaker notion of
\emph{sampling improvers}, 
which is similar to a sampling corrector but is only required to
transform the distribution into a new distribution which is 
{\em closer} to having the property \property.

\smallskip

One naive way to solve these problems, the ``learning approach,''
is to approximate the probability mass function of $\D$, 
and find a candidate $\tilde{\D} \in \property$.
Since we assume we have a complete description
of $\tilde{\D}$, we can then output samples according to $\tilde{\D}$ 
without further access to $\D$.  
In general, such an approach can be very inefficient in terms of time complexity.
However, if there is 
an {\em efficient} agnostic proper learning algorithm\footnotemark{} for $\property$, 
we show that this approach can lead to efficient sampling correctors.
For example, we use such an approach to give sampling correctors
for the class of monotone distributions.
\smallskip
\footnotetext{\label{ftn:learning}Recall that 
a \emph{learning algorithm} for a 
class of distributions \class is an algorithm which 
gets independent samples from an unknown distribution $\D\in\class$; 
and on input \eps must, with high probability, 
output a hypothesis which is \eps-close to $\D$ in total variation distance. 
If the hypotheses the algorithm produces are guaranteed to belong to \class as well, 
we call it a \emph{proper learning algorithm}. 
Finally, if the~--~not-necessarily proper~--~algorithm is able to learn 
distributions that are only \emph{close} to \class, 
returning a hypothesis at a distance at most $\opt+\eps$ from $\D$ 
-- where $\opt$ is the distance from $\D$ to the class, 
it is said to be \emph{agnostic}.
For a formal definition of these concepts, 
the reader is referred to~\autoref{ssec:connections:agnostic} and~\autoref{appendix:definitions}.}

In our model, we wish to optimize the following 
two parameters of our correcting algorithms:
The first parameter 
is the number of samples of $\D$ needed to output samples of $\tilde{\D}$. The
second parameter is
the number of {\em additional} truly random bits 
needed for outputting samples of $\tilde{\D}$.
Note that in the above learning approach, the dependence on each
of these parameters could be quite large.
Although these parameters are not independent of each
other (if $\D$ is of high enough entropy, then it can be used
to simulate truly random bits), they can be thought of as complementary, as one typically will aim at a tradeoff between the two. 
Furthermore, a parsimonious use of extra random bits may be crucial 
for some applications,
while in others the correction of the data itself is the key factor; 
for this reason, 
we track each of the parameters separately.
For any property $\property$, the main question is whether one
can achieve improved complexity  in terms of
these parameters over the use of the naive (agnostic) learning approach
for $\property$.

\subsection{Our results}

Throughout this paper, we will focus on two particular properties of interest, 
namely \emph{uniformity} and \emph{monotonicity}. 
The first one, arguably one of the most natural and illustrative properties to be considered, 
is nonetheless deeply challenging in the setting of randomness scarcity. As for the second, 
not only does it provide insight in the workings of sampling correctors as well as non-trivial connections and algorithmic results, 
but is also one of the most-studied classes of distributions in the statistics and probability literature, with a body of work covering several decades (see 
e.g.~\cite{Grenander:56,Birge:87,BKR:04,DDS:12}, or \cite{DDSVV:13} for a detailed list of references). 
Moreover, recent work on distribution testing~\cite{DDSVV:13,CDGR:15} shows strong 
connections between monotonicity and a wide range of other properties, 
such as for instance log-concavity, Monotone Hazard Risk and Poisson Binomial Distributions. 
This gives evidence that the study of monotone distributions may have direct implications for 
correction of many of these ``shape-constrained properties.''

\paragraph{Sampling correctors, learning algorithms and property testing algorithms.}
We begin by showing implications of the existence of
sampling correctors for the existence of various types of learning and property
testing algorithms in other 
models. We first show in \autoref{theo:connection:regular:learning} that learning algorithms for a distribution class
imply sampling correctors for distributions in this class (under
{\em any} property to correct) with the same sample complexity,
though not necessarily the same running time dependency.  
However,  when efficient agnostic proper
learning algorithms for a distribution class
exist, {we show that} there are efficient sampling correctors 
for the same class.
In~\cite{Birge:87,CDSS:14} efficient algorithms for agnostic learning of
concise representations
for several families of distributions are given, including distributions
that are monotone, $k$-histograms, Poisson binomial, and sums of $k$ independent
random variables.  Not all of these algorithms are proper. 

Next, we show in~\autoref{theo:connection:agnostic:learning} that the existence of \textsf{(a)} an efficient
learning algorithm, as e.g. in \cite{ILR:12,CDSS:13,DDS:PBD:12,DDOST:13}, and \textsf{(b)} an efficient sampling
corrector for a class of distributions implies an efficient 
{\em agnostic} learning
algorithm for the same class of distributions.
It is well known that agnostic learning 
can be much harder than non-agnostic learning, 
as in the latter the algorithm is able to leverage structural properties of the class \class. \new{Thus, by the above result we also get that any agnostic learning lower bounds can be used to obtain sampling corrector lower bounds.}

Our third result in this section, \autoref{theo:connection:corrector:testing}, shows that an efficient property tester, 
an efficient distance estimator (which computes an additive estimate of the distance between two distributions) and an efficient sampling corrector for
a distribution class imply a tolerant property tester with complexity
equal to the complexity of correcting the number of samples required to run both the tester and estimator.\footnote{Recall that the difference between  testing and tolerant testing lies in that the former asks to distinguish whether an unknown distribution \emph{has} a property, or is far from it, while the latter requires to decide whether the distribution is \emph{close} to the property versus far from it. (See~\autoref{appendix:definitions} for the rigorous definition.)}
As tolerant property testing can be much more difficult than
property testing \cite{GRexp:00,BFRSW:10,Paninski:08,ValiantValiant:11}, this gives a general purpose way of getting
both upper bounds on tolerant property testing and lower bounds
on sampling correctors.\medskip

\noindent We describe how these results can be employed in \autoref{sec:connections}, where we give 
specific applications in achieving improved property testers for various properties.

\paragraph{Is sampling correction easier than learning?}
We next turn to the question of whether there are natural examples of sampling correctors whose query complexity is
asymptotically smaller 
than that of distribution learning algorithms for the same class. 
While the sample complexity of learning monotone 
distributions is known to be $\bigOmega{\log{n}}$~\cite{Birge:87} 
(this lower bound on the sample and query complexity
holds even when the algorithm is allowed both to make
queries to the cumulative distribution function
as well as to access samples of the distribution), 
we present in \autoref{sec:monotonicity:oblivious} an oblivious sampling corrector for 
monotone distributions whose sample complexity is $\bigO{1}$ {and} that corrects error that is smaller than $\eps \leq \bigO{1/\log^2{n}}$. 
This is done by first implicitly approximating the distribution by a ``histogram'' on only a small number of intervals, 
using ingredients from \cite{Birge:87}. 
This (very close) approximation can 
then be combined, still in an oblivious way, with a carefully chosen slowly decreasing distribution, so that the resulting 
mixture is not only guaranteed to be monotone, but also close to the original distribution.

It is open whether there exist sampling correctors for monotone distributions
with sample complexity $\littleO{(\log n)/\eps^3}$ that can correct arbitrary error $\eps\in(0,1)$, thus beating the sample complexity of the ``learning approach.'' (We note however that a logarithmic dependence on $n$ is inherent \new{when $\eps = \omega(1/\log n)$}, as pointed out to us by Paul Valiant~\cite{Val:15:sketch}.)

Assuming a stronger type of access to the unknown distribution -- namely, query access to its cumulative distribution function (\cdf) as in \cite{BDKR:05,CR:14}, we 
describe in~\autoref{sec:monotonicity:cdf} a sampling corrector for monotonicity with (expected) query complexity $\bigO{\sqrt{\log n}}$ {which works for arbitrary $\eps\in(0,1)$}. At a high-level, our 
algorithm combines the ``succinct histogram'' technique mentioned above with a two-level bucketing approach to correct the distribution first at a very coarse level 
only (on ``superbuckets''), 
and defer the finer corrections (within a given superbucket) to be made on-the-go at query time. 
The challenge in this last part is that
one must ensure that all of these disjoint local 
 corrections are consistent with each other -- 
and crucially, \emph{with all future sample corrections.} 
 {To achieve this, we use a ``boundary correction'' subroutine which fixes potential violations between two neighboring superbuckets by evening out the boundary differences. To make it possible, we use rejection sampling to allocate adaptively an extra ``budget'' to each superbucket that this subroutine can use for corrections.}

\paragraph{Restricted error models.}
Since many of the sampling correction problems are 
difficult to solve in general, we suggest error models
for which more efficient sampling correction algorithms
may exist. A first class of error models, which we refer to
as {\em missing data errors}, is introduced in~\autoref{sec:specific:errors} and defined as follows -- given a distribution
over $[n]$, all samples in some interval $[i,j]$ for $1<i<j<n$ are
deleted.  Such errors could correspond to samples from a sensor
network where one of the sensors ran out of power; 
emails mistakenly deleted by a spam filter; or samples
from a study in which some of the paperwork got lost.
Whenever the input distribution $\D$, 
whose distance from monotonicity is {$\eps\in(0,1)$}, 
falls under this model, we give a sampling improver that is able to find 
a distribution both $\eps_2$-close to monotone
 and $\bigO{\eps}$-close to the original 
using $\tildeO{{1}/{\eps_2^{3}}}$ samples. 
The improver works in two stages. In the ``preprocessing stage,'' 
 we detect the location of the missing interval 
(when the missing weight is sufficiently large) 
and then estimate its missing weight, using a 
``learning through testing'' approach from \cite{DDS:12} to keep the sample complexity under control. 
In the second stage, 
we give a procedure 
by which the algorithm can use its knowledge
of the estimated missing interval to
 correct the distribution by rejection sampling.
 
 \paragraph{Randomness Scarcity.}
We {then} consider the case where only a limited amount of randomness (other than the input distribution) is available, and optimizing its use, possibly at the cost of worse parameters and/or sample complexity of our sampling improvers, is crucial. This captures situations where generating the random bits the algorithm use is either expensive\footnote{On this topic, see for instance the discussion in \cite{KR:94,IZ:89}, and references therein.} (as in the case of physical implementations relying on devices, such as Geiger counters or Zener diodes) or undesirable  (e.g., when we want the output distribution to be a deterministic function of the input data, for the sake of reproducibility or parallelization). We focus on this setting in \autoref{sec:focus:randomness}, and provide sampling correctors and improvers for uniformity that use samples \emph{only} from the input distribution. For example, we give a sampling improver that, given access to distribution $\eps$-close to uniform, grants access to a distribution $\eps_2$-close to uniform distribution and has \emph{constant} sample complexity $O_{\eps, \eps_2}(1)$. We achieve this by exploiting the fact that the uniform distribution is not only an absorbing element for convolution in Abelian groups, but also an \emph{attractive fixed point} with high convergence rate. That is, by convolving a distribution with itself (i.e., summing independent samples modulo the order of the group) one gets very quickly close to uniform. Combining this idea with a different type of improvement (based on a von Neumann-type ``trick'') allows us to obtain an essentially optimal tradeoff between closeness to uniform and to the original distribution.

\subsection{Open problems}

\paragraph*{Correcting vs. Learning}
A main direction of
interest would be to obtain more examples of 
properties for which correcting is strictly more efficient than 
(agnostic or non-agnostic) learning.   Such examples would be
insightful even if they are more efficient only
in terms of the number of samples 
required from the original distribution, 
without considering the additional randomness requirements for
generating the distribution.
More specifically, one may ask whether 
there exists a 
sampling corrector for 
monotonicity of 
distributions (i.e., one that beats the learning bound from \autoref{lemma:learning:corrector:birge})
for all $\epsilon<1$ which uses at most
$\littleO{(\log n)/\eps^3}$ samples from the original distribution per sample
output of the corrected distribution. \new{Other properties of interest, among many, include log-concavity of distributions, having a piecewise-constant density (i.e., being a $k$-histogram for some fixed value $k$), or being a Poisson Binomial Distribution.}

\paragraph*{The power of additional queries}
Following the line of work pursued in \cite{CFGM:13,CRS:14,CR:14} (in the setting of distribution testing), 
it is natural in many situations to consider additional types of queries to the input distribution: 
e.g., either \emph{conditional queries} 
(getting a sample conditioned on a specific subset of the domain) or \emph{cumulative queries} 
(granting query access to the cumulative distribution function, 
besides the usual sampling). 
By providing algorithms with this extended access to the underlying probability distribution, 
can one obtain faster sampling correctors for specific properties, as we \new{do} in~\autoref{sec:monotonicity:cdf} in the case of monotonicity?

\paragraph*{Confidence boosting}
Suppose that there exists, for some property \property, a sampling improver \Algo that only guarantees a success probability\footnote{We note that the case of interest here is of batch sampling improvers: indeed, in order to generate a single draw, a sampling improver acts in a non-trivial way only if the parameter $\eps$ is greater than its failure probability $\delta$. If not, a draw from the original distribution already satisfies the requirements.} of $2/3$. Using \Algo as a black-box, can one design a sampling improver $\Algo^\prime$ which succeeds with probability $1-\delta$, for any $\delta$?

More precisely, let $\Algo$ be a batch improver for \property which, when queried, makes $q(\eps_1,\eps_2)$ queries and provides $t\geq 1$ samples, with success probability at least $2/3$. Having black-box access to $\Algo$, can we obtain a batch improver $\Algo^\prime$ which on input $\delta>0$ provides $t^\prime \geq 1$ samples, with success probability at least $1-\delta$? If so, what is the best $t^\prime$ one can achieve, and what is the minimum query complexity of $\Algo^\prime$ one can get (as a function of $q(\cdot,\cdot)$, $t^\prime$ and $\delta$)?

This is known for property testing (by running the testing algorithm independently $\bigO{\log(1/\delta)}$ times and taking the majority vote), 
as well as for learning (again, 
by running the learning algorithm many times, 
and then doing hypothesis testing, e.g.\xspace\emph{\`a la}  
\cite[Theorem 19]{DK:13}). 
However, these approaches do not straightforwardly 
generalize to sampling improvers or correctors, 
respectively because the output is not a single bit, 
and as we only obtain a sequence of samples (instead of an actual, 
fully-specified hypothesis distribution).

\subsection{Previous work}
Dealing with noisy or incomplete datasets has been a challenge in Statistics and data sciences, and many methods 
have been proposed to handle them. One of the most widely used, \emph{multiple imputation} (one of many variants of the 
general paradigm of \emph{imputation}) was first introduced by Rubin~\cite{Rubin:87} and consists of the creation of several complete datasets from an incomplete one. 
Specifically, one first obtains these new datasets by filling in the missing values randomly according to a maximum likelihood (ML) distribution computed from the observations and a modeling assumption made on the data. The parameters of this model are then updated using the new datasets and the ML distribution is computed again. This resembles the Expectation-Maximization (EM) algorithm, \new{a heuristic} which can also be used for similar problems, as e.g.~in \cite{DLR77}. \new{In cases where the EM algorithm does converge to the ML distribution,} after a few iterations one can get both accurate parameter estimates and the right distribution to sample data from. Assuming the assumptions chosen to model 
the data did indeed reflect its true 
 distribution, and that the number of these new datasets was large enough, this can be shown to yield statistically accurate and unbiased 
 results~\cite{book:incomplete:data,book:imputation}.

From a Theoretical Computer Science perspective, the problem of local correction of data
has received much attention in the contexts
of self-correcting programs, locally correctable codes, and
local filters for graphs and functions over $[n]^d$ 
(some examples include 
\cite{BLR:90,book:yekhanin:ldc,ACCL:08,SS:10,BGJRW:12,JR:11}). 
To {the best of} our knowledge, this is the first work to address the
correction of data from distributions. (We observe that Chakraborty et al. consider in \cite{CGM:11} a different question, although of 
a similar distributional flavor: namely, given query access to a Boolean function $f\colon\{0,1\}^n\to\{0,1\}$ which is close to a $k$-junta $f^\ast$, they show how 
to approximately generate uniform PAC-style samples of the form $\langle x,g^\ast(x)\rangle$ where $x\in\{0,1\}^k$ and $g^\ast$ is the function underlying $f^\ast$. They then describe how to apply this ``noisy sampler'' primitive to test whether a function is close to being a junta.)

In this work, we show that the problem of estimating distances
between distributions is related. There has been much work on this
topic, but we note the following result:
\cite{DDSVV:13} show how to estimate the total 
variation distance between $k$-modal probability distributions.\footnote{A probability distribution $\D$ is \emph{$k$-modal} if there exists a partition of $[n]$ in $k$ intervals such that $\D$ is monotone (increasing or decreasing) on each.}
The authors give a reduction of their problem into one with logarithmic size, 
using a result by Birg\'e on monotone distributions~\cite{Birge:87}. In particular, 
one can partition the domain $\domain=[n]$ into $\log n/\eps$ intervals in a oblivious way, 
such that the ``flattening'' of any monotone distribution according 
to that interval is $\bigO{\eps}$-close to the original one. 
We use similar ideas in order to obtain some of the results in the present paper.

\new{Another related field in Statistics is that of \emph{robust statistics}, which is concerned with estimating the parameters of a model in spite of a fraction of the data being corrupted (equivalently, under some model misspecification). This type of question is similar in spirit to our setting, in that it aims at overcoming noisy or corrupted data; however, the focus there is on \emph{learning} characteristics of the purported model, instead of ``removing'' the noise (which could be seen as a less demanding goal, and a possible approach towards the learning task itself). Well-studied in Statistics since the seminal work of Tukey~\cite{Tukey:60}, robust statistics have recently been studied from a computational and algorithmic viewpoint: we refer the reader to~\cite{HR:09,HRRS:11} for surveys and overviews of the field itself, and to~\cite{LRV:16,DKKLMS:16,CSV:17,DKKLMS:17,BDLS:17,DKKLMS:18} for recent advances from a theoretical computer science perspective.

Finally, it} is instructive to compare the goal of our model of distribution
sampling correctors to that of extractors: in spite of many similarities, the two have essential differences and the results are in many cases incomparable. We defer this discussion to \autoref{sec:uniformity}.
 
\section{Preliminaries}\label{sec:preliminaries}
Hereafter, we write $[n]$ for the set $\{1,\dots,n\}$, and $\log$ for the logarithm in base $2$. A \emph{probability distribution} over a finite domain $\domain$ is a non-negative function $\D\colon\domain\to[0,1]$ such that $\sum_{x\in\domain} \D(x) = 1$; we denote by $\uniform_{\domain}$ the uniform distribution on \domain. Moreover, given a distribution $\D$ over $\domain$ and a set $S\subseteq\domain$, we write $\D(S)$ for the total probability weight $\sum_{x\in S} \D(x)$ assigned to $S$ by $\D$.

\paragraph{Previous tools from probability.}
As previously mentioned, in this work we will be concerned with the total variation distance between distributions. Of interest for the analysis of some of our algorithms, and assuming \domain is totally ordered (in our case, $\domain=[n]$), one can also define the \emph{Kolmogorov distance} between $\D_1$ and $\D_2$ as
\begin{equation}\label{eq:def:dk}
  \kolmogorov{\D_1}{\D_2} \eqdef \max_{x\in \domain}\abs{F_1(x)-F_2(x)} \end{equation}
where $F_1$ and $F_2$ are the respective cumulative distribution functions (\cdf) of $\D_1$ and $\D_2$. Thus, the Kolmogorov distance is the $\lp[\infty]$ distance between the \cdf's; and $\kolmogorov{\D_1}{\D_2} \leq \totalvardist{\D_1}{\D_2} \in [0,1]$. \smallskip

We first state the following theorem, which guarantees that for any two distributions $\D_1$, $\D_2$, applying any (possibly randomized) function to both $\D_1$ and $\D_2$ can never increase their total variation distance:
\begin{fact}[Data Processing Inequality for Total Variation Distance]\label{lemma:data:processing:inequality:total:variation}
Let $\D_1$, $\D_2$ be two distributions over a domain $\Omega$. Fix any randomized function\footnote{Which can be seen as a distribution over functions over $\Omega$.} $F$ on $\Omega$, and let $F(\D_1)$ be the distribution such that a draw from $F(\D_1)$ is obtained by drawing independently $x$ from $\D_1$ and $f$ from $F$ and then outputting $f(x)$ (likewise for $F(\D_2)$).
Then we have
\[
\totalvardist{ F(\D_1) }{  F(\D_2) }  \leq \totalvardist{ \D_1 }{ \D_2 }. \]
\end{fact}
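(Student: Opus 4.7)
The plan is to establish the inequality first for deterministic $f$ and then extend to randomized $F$ by averaging, relying throughout on the sup characterization
\[
\totalvardist{\D_1'}{\D_2'} = \max_{T \subseteq \Omega'} \bigl( \D_1'(T) - \D_2'(T) \bigr)
\]
recalled in the footnote on page~2.

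For the deterministic case, I would fix any function $f\colon \Omega \to \Omega'$ and observe that the push-forward satisfies $f(\D)(T) = \D(f^{-1}(T))$ for every $T \subseteq \Omega'$. Hence
\[
f(\D_1)(T) - f(\D_2)(T) = \D_1(f^{-1}(T)) - \D_2(f^{-1}(T)) \leq \totalvardist{\D_1}{\D_2},
\]
since $f^{-1}(T)$ is a particular subset of $\Omega$. Taking the maximum over $T$ yields $\totalvardist{f(\D_1)}{f(\D_2)} \leq \totalvardist{\D_1}{\D_2}$.

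For the randomized case, I would view $F$ as a probability distribution over deterministic functions on $\Omega$ and use that $f \sim F$ is drawn independently of $x \sim \D_i$. Linearity of expectation then gives, for every $T \subseteq \Omega'$,
\[
F(\D_i)(T) = \shortexpect_{f \sim F}\bigl[ f(\D_i)(T) \bigr],
\]
so that
\[
F(\D_1)(T) - F(\D_2)(T) = \shortexpect_{f \sim F}\bigl[ f(\D_1)(T) - f(\D_2)(T) \bigr] \leq \totalvardist{\D_1}{\D_2}
\]
by the deterministic bound applied for each fixed $f$. Taking the supremum over $T$ would then conclude the proof. This is a classical data-processing inequality and I do not anticipate any genuine obstacle; the only mild care needed is the bookkeeping that $F(\D_i)$ is indeed the push-forward under the independent joint draw of $F$ and $\D_i$, which is what justifies swapping the expectation over $f$ with the evaluation of the fixed event $T$.
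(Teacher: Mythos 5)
Your proof is correct and complete: the two-step argument (first the deterministic case via the identity $f(\D)(T)=\D(f^{-1}(T))$ and the sup characterization of $\totalvardist{\cdot}{\cdot}$, then averaging over the independent draw of $f\sim F$) is the canonical route for the data processing inequality. The paper states this as a Fact without giving a proof, so there is nothing to compare against; your argument is exactly the standard one and has no gaps.
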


Finally, we recall below a fundamental fact from probability theory that will be useful to us, the \emph{Dvoretzky--Kiefer--Wolfowitz (DKW) inequality}. Informally, this result says that one can learn the cumulative distribution function of a distribution up to an additive error $\eps$ in $\lp[\infty]$ distance, by taking only $\bigO{1/\eps^2}$ samples from it.
\begin{theorem}[\cite{DKW:56,Massart:90}]\label{theo:dkw}
Let $\D$ be a distribution over $[n]$. Given $m$ independent samples $x_1,\dots ,x_m$ from $\D$, define the empirical distribution $\hat{\D}$ as follows:
\[
\hat{\D}(i) \eqdef \frac{\abs{ \setOfSuchThat{j\in[m]}{x_j=i} } }{m}, \quad i\in[n].
\]
Then, for all $\eps > 0$, $\probaOf{ \kolmogorov{\D}{\hat{\D}} > \eps } \leq 2e^{-2m\eps^2}$, where the probability is taken over the samples.
\end{theorem}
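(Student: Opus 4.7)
The plan is to start from pointwise concentration at a fixed $x$ via Hoeffding's inequality, then upgrade to a uniform-in-$x$ statement at the cost of an extra polynomial factor in $1/\eps$ (which suffices for all applications in this paper, even if the sharp constants of the theorem require more sophisticated machinery).

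First, I fix $x \in [n]$ and observe that $m\hat{F}(x) = \abs{\setOfSuchThat{j\in[m]}{x_j \leq x}}$ is a sum of $m$ independent Bernoulli random variables of parameter $F(x)$. Hoeffding's inequality therefore yields, for every $\eps > 0$,
\[
\probaOf{ |\hat{F}(x) - F(x)| > \eps } \leq 2e^{-2m\eps^2}.
\]

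The main obstacle is turning this pointwise bound into one that is uniform over $x \in [n]$ without paying a factor of $n$ (the theorem statement has no domain-size dependence), since the naive union bound loses exactly this factor. To avoid it, I discretize on the CDF axis rather than on $[n]$: set $K \eqdef \lceil 2/\eps \rceil$ and, for each $i \in \{0,1,\dots,K\}$, let $x_i \eqdef \min\setOfSuchThat{x\in[n]}{F(x) \geq i/K}$, so that consecutive grid values satisfy $F(x_{i+1}) - F(x_i) \leq 1/K \leq \eps/2$. Applying the pointwise Hoeffding bound (with $\eps/2$ in place of $\eps$) to each of the $K+1$ grid points and taking a union bound, with probability at least $1 - 2(K+1)e^{-m\eps^2/2}$ every $x_i$ simultaneously satisfies $|\hat{F}(x_i) - F(x_i)| \leq \eps/2$.

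Now for an arbitrary $x \in [n]$, pick $i$ with $x_i \leq x < x_{i+1}$. Exploiting the monotonicity of both $F$ and $\hat{F}$, I get
\[
\hat{F}(x) - F(x) \leq \hat{F}(x_{i+1}) - F(x_i) \leq (\eps/2) + (\eps/2) = \eps,
\]
and symmetrically $\hat{F}(x) - F(x) \geq -\eps$. This delivers a bound of the form $\probaOf{\kolmogorov{\D}{\hat{\D}} > \eps} \leq \bigO{1/\eps}\cdot e^{-m\eps^2/2}$, which matches the DKW statement up to constants in the exponent and an extra $1/\eps$ prefactor. The hard part is the sharp constants actually stated in the theorem, namely the prefactor $2$ independent of $\eps$ and the optimal exponent $2m\eps^2$ matching the one-sided Hoeffding bound; these are the content of Massart's refinement~\cite{Massart:90} of the original Dvoretzky--Kiefer--Wolfowitz argument and require substantially more delicate tools, typically a reflection-principle-based analysis of the empirical process or a carefully chosen exponential martingale, which go beyond the elementary union bound sketched above. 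Since the theorem is used as a black box in the sequel and the weaker bound above would be enough for all the asymptotic applications of interest to us, I would omit the details of the sharp argument and defer to~\cite{Massart:90}.
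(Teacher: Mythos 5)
The paper does not prove this statement at all: it is imported verbatim as a classical black box from \cite{DKW:56,Massart:90}, so there is no in-paper argument to compare yours against. Your decision to defer the sharp constants (the prefactor $2$ and the exponent $2m\eps^2$) to Massart's refinement is therefore exactly consistent with how the paper treats the result, and supplying an elementary sketch of the weaker $\bigO{1/\eps}\cdot e^{-m\eps^2/2}$ bound is a reasonable bonus; the quantile-grid discretization is indeed the standard way to avoid paying a factor of $n$ in the union bound.

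There is, however, one genuine flaw in your chaining step, precisely because $\D$ lives on the discrete domain $[n]$ and may have large atoms. Your claim that consecutive grid points satisfy $F(x_{i+1}) - F(x_i) \leq 1/K$ is false in general: by definition $F(x_{i+1}) \geq (i+1)/K$, but an atom at $x_{i+1}$ of mass larger than $1/K$ makes $F(x_{i+1})$ jump arbitrarily far above $(i+1)/K$ (take $\D(1)=\D(2)=0.05$, $\D(3)=0.9$: then $x_1 = 3$ and $F(x_1)-F(x_0) = 0.95$). Consequently the upper-bound half of your argument, $\hat{F}(x) - F(x) \leq \hat{F}(x_{i+1}) - F(x_i)$, bounds the deviation by a quantity that need not be small. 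The lower-bound half is fine, since minimality of $x_{i+1}$ does give $F(x) < (i+1)/K \leq F(x_i) + 1/K$ for $x_i \leq x < x_{i+1}$. The standard repair is to enlarge the grid to include the predecessors $x_{i+1}-1$ (equivalently, to control the empirical CDF at left limits as well as at the quantile points): then $\hat{F}(x) \leq \hat{F}(x_{i+1}-1) \leq F(x_{i+1}-1) + \eps/2 < (i+1)/K + \eps/2 \leq F(x) + 1/K + \eps/2$. This doubles the number of grid points but leaves it at $\bigO{1/\eps}$, so your final weak bound survives unchanged.
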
 
\noindent In particular, setting $m=\bigTheta{\frac{\log(1/\delta)}{\eps^2}}$ we get that $\kolmogorov{\D}{\hat{\D}} \leq \eps$ with probability at least $1-\delta$.

\newer{\paragraph{Flattenings.}
\noindent For a distribution $\D$ and a partition of $[n]$ into intervals $\mathcal{I}=(I_1,\dots,I_\ell)$, we define the \emph{flattening of $\D$ with relation to $\mathcal{I}$} as the distribution $\Psi_{\mathcal{I}}(\D)$, where $\Psi_{\mathcal{I}}(\D)(i) = {\D(I_k)}/{\abs{I_k}}$ for all $k\in [\ell]$ and $i\in I_k$. A straightforward computation (see \autoref{appendix:misc:proofs}) shows that such flattening cannot increase the distance between two distributions, i.e.
  \begin{equation}\label{eq:Birge:tv}
      \totalvardist{ \Psi_{\mathcal{I}}(\D_1) }{ \Psi_{\mathcal{I}}(\D_2) } \leq \totalvardist{\D_1}{\D_2}.
  \end{equation}
  }

\paragraph{Monotone distributions.}
We say that a distribution $\D$ on $[n]$ is \emph{monotone} (non-increasing) if its probability mass function is non-increasing, that is if $\D(1)\geq \dots \geq \D(n)$. 
When dealing with monotone distributions, it will be useful to consider the \emph{Birg\'e decomposition}, which is a way to approximate any monotone distribution $\D$ by a histogram, where the latter is supported by logarithmically many intervals \emph{which crucially do not depend on $\D$ itself}:
\begin{definition}[Birg\'e decomposition]\label{def:birge:decomposition}
  Given a parameter $\alpha>0$, the corresponding (oblivious) \emph{Birg\'e decomposition of $[n]$} is the partition $\mathcal{I}_\alpha=(I_1,\dots,I_\ell)$, where $\ell=\bigTheta{\frac{\ln( \alpha n + 1)}{\alpha}}=\bigTheta{\frac{\log n }{\alpha}}$ and $\abs{I_{k}}=\flr{(1+\alpha)^k}$, $1\leq k \leq \ell$. 
\end{definition}

\noindent \newer{For a distribution $\D$ and parameter $\alpha$, define $\birge[\D]{\alpha}$ to be the ``flattened'' distribution with relation to the oblivious decomposition $\mathcal{I}_\alpha$, that is $\birge[\D]{\alpha} = \Psi_{\mathcal{I}_\alpha}(\D)$.} 
The next theorem states that every monotone distribution can be well-approximated by its flattening on the Birg\'e decomposition's intervals:     
\begin{theorem}[\cite{Birge:87,DDSVV:13}]\label{theorem:Birge:obl:decomp}
 If $\D$ is monotone, then $\totalvardist{\D}{\birge[\D]{\alpha}} \leq \alpha$.
\end{theorem}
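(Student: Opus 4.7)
My plan is to argue interval by interval, using monotonicity of $\D$, and then telescope across intervals using the geometric growth of the Birg\'e sizes.

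First, I would decompose
\[
2\totalvardist{\D}{\birge{\alpha}} = \sum_{k=1}^{\ell} \sum_{i \in I_k}|\D(i) - \mu_k|, \qquad \mu_k = \D(I_k)/|I_k|,
\]
and bound each interval contribution. Fix $k$ and let $[a_k,b_k] = I_k$: since $\D$ is non-increasing, there is a crossing point $c_k \in I_k$ at which $\D$ crosses $\mu_k$, and the positive and negative deviations about $\mu_k$ balance. Setting $u=c_k-a_k+1$, $v=b_k-c_k$, $X=\D(a_k)-\mu_k$, $Y=\mu_k-\D(b_k)$, the common deviation $T_k$ satisfies $T_k\leq uX$ and $T_k\leq vY$, so $T_k^2\leq uv\cdot XY$. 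Applying AM-GM to the pairs $(u,v)$ and $(X,Y)$,
\[
\sum_{i \in I_k}|\D(i)-\mu_k| = 2T_k \leq 2\sqrt{uv\cdot XY} \leq \frac{|I_k|}{2}\bigl(\D(a_k)-\D(b_k)\bigr).
\]

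Second, I would telescope across intervals. Setting $d_i \eqdef \D(i) - \D(i+1) \geq 0$ (with $\D(n+1) \eqdef 0$), we have $\D(a_k)-\D(b_k) = \sum_{i=a_k}^{b_k-1} d_i$. Swapping the order of summation,
\[
\sum_k |I_k|(\D(a_k)-\D(b_k)) = \sum_i d_i\cdot |I_{k(i)}|\cdot \indic{i+1 \in I_{k(i)}} \leq \sum_i d_i\cdot |I_{k(i)}|,
\]
where $k(i)$ denotes the Birg\'e interval containing $i$; the indicator silences any contribution from singleton intervals.

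The key quantitative input, and the one I expect to be the main obstacle, is the uniform bound $|I_{k(i)}|\leq C\alpha\cdot i$ on the indices surviving the indicator, for some absolute constant $C$. This follows from the Birg\'e structure: $a_k = 1+\sum_{j<k}|I_j|$ grows essentially like $(1+\alpha)^k/\alpha$ while $|I_k|\leq (1+\alpha)^k$, giving $|I_k|/a_k = O(\alpha)$. The largest ratio occurs at the transition index $k_0 \approx \log 2/\alpha$ where $|I_k|$ first exceeds $1$; a direct calculation gives $|I_{k_0}|/a_{k_0} \leq 2\alpha/\log 2 + o(\alpha)$, and verifying the bound uniformly in the presence of the floor function is the fiddly computation. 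Combined with the Abel identity $\sum_{i=1}^{n} i\cdot d_i = \sum_{i=1}^{n}\D(i) = 1$,
\[
2\totalvardist{\D}{\birge{\alpha}} \leq \frac{1}{2}\sum_i d_i\cdot |I_{k(i)}| \leq \frac{C\alpha}{2}\sum_i i\cdot d_i = \frac{C\alpha}{2},
\]
so $\totalvardist{\D}{\birge{\alpha}}\leq C\alpha/4\leq \alpha$, as claimed.
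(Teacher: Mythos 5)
The paper does not prove this statement; it is imported wholesale from Birg\'e and from Daskalakis et al., so there is no internal proof to compare against. Your argument is a correct, self-contained derivation, and it is a clean one: the per-interval bound $\sum_{i\in I_k}|\D(i)-\mu_k|\le\tfrac{|I_k|}{2}(\D(a_k)-\D(b_k))$ via the double AM--GM is right (the positive and negative deviations about $\mu_k$ do balance, and $u+v=|I_k|$), the telescoping identity with the indicator is right, and the Abel identity $\sum_i i\,d_i=1$ closes the argument.

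Two places deserve care. First, your final display writes $\frac12\sum_i d_i|I_{k(i)}|\le\frac{C\alpha}{2}\sum_i i\,d_i$ \emph{after} having discarded the indicator; read literally this needs $|I_{k(i)}|\le C\alpha i$ for every $i$, which fails for the initial singleton intervals (there $|I_{k(i)}|=1$ but $i$ can be as small as $1$). You must keep the indicator through to the end: bound only the surviving terms by $C\alpha\, i\, d_i$ and then extend the sum to all $i$ using $d_i\ge0$. This is exactly the repair you gesture at, but the displayed chain should reflect it. Second, the quantitative claim you defer is genuinely the crux, because your route needs not just $|I_{k(i)}|=O(\alpha i)$ but $C\le 4$. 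It does hold: for a non-singleton interval $I_k$ one has $|I_k|\le C\alpha\,a_k$ with the supremum attained at the first width-$2$ interval, where $a_{k_0}=k_0\ge\ln 2/\ln(1+\alpha)>\ln2/\alpha$ and $|I_{k_0}|=2$, giving ratio at most $2\alpha/\ln 2\approx 2.89\alpha$; the ratio at the first interval of each larger width $s$ is $\approx s\alpha/\sum_{j<s}j\ln\frac{j+1}{j}$, which decreases in $s$, and the cases $\alpha\ge 1/2$ are checked directly. So $C<3$ and $\totalvardist{\D}{\birge{\alpha}}\le C\alpha/4<0.75\,\alpha$, which actually matches the true extremal example ($\D$ uniform on $\{1,\dots,k_0\}$, distance $\approx\alpha/(2\ln 2)$). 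With these two points made explicit, the proof is complete and in fact yields a slightly better constant than the statement requires.
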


\noindent As a corollary, one can extend the theorem to distributions only promised to be \emph{close} to monotone:
\begin{restatable}{corollary}{birgerobustcorollary}\label{coro:Birge:decomposition:robust}
    Suppose $\D$ is \eps-close to monotone, and let $\alpha > 0$. Then $\totalvardist{\D}{ \birge[\D]{\alpha} } \leq 2\eps + \alpha$. Furthermore,  $\birge[\D]{\alpha}$ is also \eps-close to monotone.
\end{restatable}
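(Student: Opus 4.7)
The plan is to prove both claims by a simple triangle-inequality argument that passes through a reference monotone distribution. Let $\D^\ast$ denote a monotone distribution achieving $\totalvardist{\D}{\D^\ast} \leq \eps$; such a distribution exists by the hypothesis that $\D$ is $\eps$-close to monotone.

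For the first assertion, I would apply the triangle inequality as
\[
\totalvardist{\D}{\birge[\D]{\alpha}} \leq \totalvardist{\D}{\D^\ast} + \totalvardist{\D^\ast}{\birge[\D^\ast]{\alpha}} + \totalvardist{\birge[\D^\ast]{\alpha}}{\birge[\D]{\alpha}}.
\]
The first term is at most $\eps$ by the choice of $\D^\ast$. The middle term is at most $\alpha$ by \autoref{theorem:Birge:obl:decomp} applied to the monotone distribution $\D^\ast$. The last term is at most $\totalvardist{\D^\ast}{\D} \leq \eps$ by the contraction property of the Birg\'e flattening recorded in \eqref{eq:Birge:tv} (which itself is an instance of the data processing inequality, since flattening on a fixed partition is a deterministic post-processing map). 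Summing yields the claimed bound of $2\eps + \alpha$.

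For the second assertion, observe that $\birge[\D^\ast]{\alpha}$ is itself monotone: it is piecewise constant on the intervals of $\mathcal{I}_\alpha$, and on each interval $I_k$ its value equals the average of $\D^\ast$ over $I_k$, so monotonicity of $\D^\ast$ transfers to its flattening (the average over $I_k$ is at least the average over $I_{k+1}$). Thus $\birge[\D^\ast]{\alpha}$ is a valid monotone witness, and a second application of \eqref{eq:Birge:tv} gives
\[
\totalvardist{\birge[\D]{\alpha}}{\birge[\D^\ast]{\alpha}} \leq \totalvardist{\D}{\D^\ast} \leq \eps,
\]
so $\birge[\D]{\alpha}$ is $\eps$-close to monotone, as desired.

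There is no real obstacle here: the only point that requires a brief justification (rather than being a direct citation) is that flattening preserves monotonicity, and this follows immediately from the definition of $\birge[\cdot]{\alpha}$ as averaging over a fixed ordered partition. Everything else is the triangle inequality combined with the two facts already stated in the preliminaries.
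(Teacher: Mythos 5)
Your proof is correct and follows essentially the same route as the paper's: a triangle inequality through a monotone witness $\D^\ast$, combined with \autoref{theorem:Birge:obl:decomp} and the contraction property \eqref{eq:Birge:tv}. The only difference is that you spell out why the flattening of a monotone distribution is monotone, which the paper leaves as an "easily seen" remark (and your write-up even fixes a small typo in the paper's displayed triangle inequality, whose left-hand side should read $\totalvardist{\D}{\birge[\D]{\alpha}}$).
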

  
\paragraph{Access to the distributions.} While we will mostly be concerned in this work with the standard model of access to the probability distributions, where the algorithm is provided with independent samples from an unknown distribution $\D$, the concepts we introduce and some of our results apply to some other types of access as well. One in particular, the \Cdfsamp access model, grants the algorithms the ability to query the value of the cumulative distribution function (\cdf) of $\D$, in addition to regular sampling.\footnote{See also \cite{clement:survey:distributions} for a summary and comparison of the different existing access models.} (We observe, as in \cite{CR:14}, that this type of query access is for instance justified when the distribution originates from a sorted dataset, in which case such queries can be implemented with only a logarithmic overhead.)

Unless explicitly specified otherwise, our algorithms only assume standard sampling access; the formal definitions of the two models mentioned above can be found in~\autoref{appendix:definitions}.
 
\section{Our model: definitions}\label{sec:definitions}
In this section, we state the precise definitions of sampling correctors, improvers and batch sampling improvers. 
To get an intuition, 
the reader may think for instance of the parameter $\eps_1$ below as being $2\eps$, 
and the error probability $\delta$ as $1/3$.
Although all definitions are presented in terms of the total variation
distance, analogous definitions in terms of other distances can also be made.

\begin{definition}[Sampling Corrector]\label{def:sampling:corrector}
  Fix a given property $\property$ of distributions on $\domain$. 
An \emph{$(\eps,\eps_1)$-sampling corrector for $\property$} is a randomized algorithm which is given parameters $\eps, \eps_1\in(0,1]$ such that $\eps_1 \geq \eps$ and $\delta\in[0,1]$, as well as sampling access to a distribution $\D$. Under the promise that $\totalvardist{\D}{\property}\leq \eps$, the algorithm must provide, with probability at least $1-\delta$ over the samples it draws and its internal randomness, sampling access to a distribution $\tilde{\D}$ such that
  \begin{enumerate}[(i)]
    \item $\tilde{\D}$ is close to $\D$: $\totalvardist{\tilde{\D}}{\D} \leq \eps_1$;
    \item $\tilde{\D}$ has the property: $\tilde{\D}\in \property$.
  \end{enumerate}
  In other terms, with high probability the \new{algorithm}   will simulate exactly a sampling oracle for $\tilde{\D}$. The query complexity $q=q(\eps,\eps_1,\delta,\domain)$ of the algorithm is the number of samples from $\D$ it takes per query in the worst case.
\end{definition}

One can define a more general notion, which allows the algorithm to only get ``closer'' to the desired property, and convert some type of access $\ORACLE_1$ into some other type of access $\ORACLE_2$ (e.g., from sampling to evaluation access):
\begin{definition}[Sampling Improver (general definition)]\label{def:sampling:corrector:general}
  Fix a given property $\property$ over distributions on $\domain$. A \emph{sampling improver for $\property$} (from $\ORACLE_1$ to $\ORACLE_2$) is a randomized algorithm which, given parameter $\eps\in(0,1]$ and $\ORACLE_1$ access to a distribution $\D$ with the promise that $\totalvardist{\D}{\property}\leq \eps$ as well as parameters $\eps_1,\eps_2\in[0,1]$ satisfying $\eps_1+\eps_2\geq \eps$, provides, with probability at least $1-\delta$ over the answers from $\ORACLE_1$ and its internal randomness, $\ORACLE_2$ access to a distribution $\tilde{\D}$ such that
  \begin{align*}
    \totalvardist{\tilde{\D}}{\D} \leq \eps_1 \tag{Close to $\D$}\\
    \totalvardist{\tilde{\D}}{\property} \leq \eps_2 \tag{Close to \property}
  \end{align*}
  In other terms, with high probability the algorithm
   will simulate exactly $\ORACLE_2$ access to $\tilde{\D}$. The query complexity $q=q(\eps,\eps_1,\eps_2,\delta,\domain)$ of the algorithm is the number of queries it makes to $\ORACLE_1$ in the worst case.
\end{definition}

Finally, one may ask for such an improver to provide \emph{many} samples from the (same) improved distribution,\footnote{Indeed, observe that as sampling correctors and improvers are randomized algorithms with access to their ``own'' coins, there is no guarantee that fixing the input distribution $\D$ would lead to the same output distribution $\tilde{\D}$. This is particularly important when providing other types of access (e.g., evaluation queries) to $\tilde{\D}$ than only sampling.} where ``many'' is a number committed in advance. We refer to such an algorithm as a \emph{batch sampling improver} \new{(or, similarly, batch sampling corrector):}
\begin{definition}[Batch Sampling Improver]\label{def:sampling:corrector:batch}
  For $\property$, $\D$, $\eps$, $\eps_1,\eps_2\in[0,1]$ as above, and parameter $m\in\N$, a \emph{batch sampling improver for $\property$} (from $\ORACLE_1$ to $\ORACLE_2$) is a sampling improver which provides, with probability at least $1-\delta$, $\ORACLE_2$ access to $\tilde{\D}$ for as many as $m$ queries, in between which it is allowed to maintain some internal state ensuring consistency. The query complexity $q$ of the algorithm is now allowed to depend on $m$ as well, \new{i.e. $q=q(\eps,\eps_1,\eps_2,m,\delta,\domain)$.}
  
  \noindent Note that, in particular, when providing sampling access to $\tilde{\D}$ the batch improver must guarantee independence of the $m$ samples. When $\eps_2$ is set to 0 in the above definition, we will refer to the algorithm as a \emph{batch sampling corrector}, \new{with query complexity $q(\eps,\eps_1,m,\delta,\domain)$.}
\end{definition}

\begin{remark}[On parameters of interest.]
We observe that the regime of interest of our correctors and improvers is when the number of corrected samples to output is at least of the order $\bigOmega{1/\eps}$. Indeed, if fewer samples are required, then the assumption that the distribution $\D$ be \eps-close to having the property implies that -- with high probability -- a small number of samples from $\D$ will be indistinguishable from the closest distribution having the property. (So that, intuitively, they are already ``as good as it gets,'' and need not be corrected.)
\end{remark}
\begin{remark}[On testing lower bounds]\label{rk:samples:lb:pt}
A similar observation holds for properties \property that are known to be \emph{hard to test}, that is for which some lower bound of $q(n,\eps)$ samples holds to decide whether a given distribution satisfies \property, or is \eps-far from it. In light of such a lower bound, one may wonder whether there is something to be gained in correcting $m< q(n,\eps)$ samples, instead of simply using $m$ samples from the original distribution altogether. However, such a result only states that there exists \emph{some} worst-case instance $\D^\ast$ that is at distance $\eps$ from the property \property, yet requires this many samples to be distinguished from it: so that any algorithm relying on samples from distributions satisfying \property could be fed $q(n,\eps)-1$ samples from this particular $\D^\ast$ without complaining. Yet, for ``typical'' distributions that are $\eps$-close to \property, far fewer samples are required to reveal their deviation from it: for many, as few as $O(1/\eps)$ suffice. Thus, an algorithm that expects to get say $q(n,\eps)^{.99}$ samples from a honest-to-goodness distribution from \property, but instead is provided with samples from one that is merely $\eps$-close to it, may break down very quickly. Our corrector, in this very regime of $\littleO{q(n,\eps)}$ samples, guarantees this will not happen.
\end{remark}

We conclude this section by introducing a relaxation of the notion of sampling corrector, where instead of asking the unknown distribution be close to the class it is corrected for we instead decouple the two. For instance, one may require the unknown distribution to be close to a Binomial distribution, but only correct it to be unimodal. This leads to the following definition of a \emph{non-proper corrector}:
\begin{definition}[Non-Proper Sampling Corrector]\label{def:sampling:corrector:np}
  Fix two given properties $\property$, $\property^\prime$ of distributions on $\domain$. 
An \emph{$(\eps,\eps_1)$-non-proper sampling corrector for $\property^\prime$ assuming $\property$} is a randomized algorithm which is given parameters $\eps, \eps_1\in(0,1]$ such that $\eps_1 \geq \eps$ and $\delta\in[0,1]$, as well as sampling access to a distribution $\D$. Under the promise that $\totalvardist{\D}{\property}\leq \eps$, the algorithm must provide, with probability at least $1-\delta$ over the samples it draws and its internal randomness, sampling access to a distribution $\tilde{\D}$ such that
  \begin{enumerate}[(i)]
    \item $\tilde{\D}$ is close to $\D$: $\totalvardist{\tilde{\D}}{\D} \leq \eps_1$;
    \item $\tilde{\D}$ has the (target) property: $\tilde{\D}\in \property^\prime$.
  \end{enumerate}
  In other terms, with high probability the \new{algorithm}  will simulate exactly a sampling oracle for $\tilde{\D}$. The query complexity $q=q(\eps,\eps_1,\delta,\domain)$ of the algorithm is the number of samples from $\D$ it takes per query in the worst case.
\end{definition}
\newer{Note that if there exists $\D$ close to $\property$ such that every $\D^\prime\in\property^\prime$ is far from $\D$, this may not be achievable. Hence, the above definition requires that some relation between $\property$ and $\property^\prime$ hold: for instance, that any neighborhood of a distribution from $\property$ intersects $\property^\prime$.} Similarly, we extend this definition to non-proper improvers and batch improvers.
 
\section{A warmup: non-proper correcting of histograms}\label{sec:focus:samples:hist}
	To illustrate these ideas, we start with a toy example: non-proper correcting of \emph{regular histograms}. Recall that a distribution $\D$ over $[n]$ is said to be a \emph{$k$-histogram} if its probability mass function is piecewise-constant with at most $k$ ``pieces:'' that is, if there exists a partition $\mathcal{I}=(I_1,\dots,I_k)$ of $[n]$ into $k$ intervals such that $\D$ is constant on each $I_j$.

Letting $\mathcal{H}_{k}$ denote the class of all $k$-histograms over $[n]$, we start with the following question: given samples from a distribution close to $\mathcal{H}_{k}$, can we efficiently provide sample access to a corrected distribution $\tilde{\D}\in \mathcal{H}_{\ell}$, for some $\ell=\ell(k,\eps)$? I.e., is there a non-proper corrector for $\mathcal{H}_{\ell}$ assuming $\mathcal{H}_{k}$? \cmargin{Talk about the $k/\eps^2$ baseline to beat -- i.e., the cost of agnostic learning?}

In this short section, we show how to design such a corrector, under some additional assumption on \newer{the min-entropy of} the unknown distribution to correct. Namely, we will require the following definition: given some constant $c\geq 1$, we say that a distribution $\D$ is \emph{$c$-regular} if $\D(i) \leq \frac{c}{n}$ for all $i\in[n]$, i.e. $\norminf{\D} \leq \frac{c}{n}$.
\begin{restatable}[Correcting regular histograms]{proposition}{histogramnonpropercorrecting} \label{lemma:histogram:corrector:nonproper}
  Fix any constant $c>0$. For any $\eps$, $\eps_1 \geq 4\eps$ and $\eps_2=0$ as in the definition, there exists $\ell=O(k/\eps)$ and a non-proper sampling corrector for $\mathcal{H}_{\ell}$ assuming $\mathcal{H}_{k}$ with sample complexity $\bigO{1}$, under the assumption that the unknown distribution is $c$-regular.
\end{restatable}
\begin{proof}
  The algorithm works as follows: setting $K\eqdef \frac{ck}{\eps}$, it first divides the domain into $K\leq L\leq K+1$ intervals $I_1,\dots,I_{L}$ of size less than or equal to $\flr{\frac{n}{K}}$. Then, the corrected distribution is the ``flattening'' $\bar{\D}$ of $\D$ on these intervals: to output a sample from the $L$-histogram $\bar{\D}$, the algorithm draws a sample $s\sim\D$, checks which of the $I_i$'s this sample $s$ belongs to, and then outputs $s^\prime$ drawn uniformly from this interval. The sample complexity is clearly constant, as outputting one sample of $\bar{\D}$ only requires one from $\D$; and being an $L$-histogram, $\bar{\D}\in\mathcal{H}_{\ell}$ for $\ell \leq \frac{ck}{\eps}+1$.
  
We now turn to proving that $\totalvardist{\D}{\bar{\D}} \leq 4\eps$. Denote by $H$ the closest $k$-histogram to $\D$, i.e. $H\in\mathcal{H}_k$ such that $\alpha\eqdef\totalvardist{\D}{H}=\totalvardist{\D}{\mathcal{H}_k}$; and let $B$ be the union of the (at most $k$) intervals among $I_1,\dots,I_{L}$ where $H$ is not constant. Since $\D$ is $c$-regular, we do have $\D(B) \leq k\cdot\frac{c}{n}\cdot\frac{n}{K} = \eps$. Then, since $H$ is $\alpha$-close to $\D$ we get $H(B) \leq \eps+\alpha$.

Now, let $\bar{\D}$ (resp. $\bar{H}$) be the $L$-histogram obtained by ``flattening'' $\D$ (resp. $H$) on $I_1,\dots,I_{L}$. By the data processing inequality (\autoref{lemma:data:processing:inequality:total:variation}), we obtain
\[
    \totalvardist{\bar{\D}}{\bar{H}} \leq \totalvardist{\D}{H}.
\]
Therefore, by the triangle inequality,
\begin{align*}
    \totalvardist{\D}{\bar{\D}} &\leq \totalvardist{\D}{H} + \totalvardist{H}{\bar{H}} + \totalvardist{\bar{H}}{\bar{\D}} 
    \leq 2\totalvardist{\D}{H} + \totalvardist{H}{\bar{H}}.
\end{align*}
Furthermore, as $H$ and $\bar{H}$ can only differ on $B$, and since the flattening operation preserve the probability weight on each interval of $\mathcal{I}$, we obtain
\begin{align*}
  \totalvardist{H}{\bar{H}} &= \frac{1}{2}\normone{ H-\bar{H} } = \frac{1}{2}\sum_{i\in B} \abs{ H(i) - \bar{H}(i) } 
  \leq \frac{1}{2}\left(H(B) + \bar{H}(B)\right) =H(B) \leq \eps+\alpha
\end{align*}
which, once plugged back in the previous expression, yields
\begin{align*}
    \totalvardist{\D}{\bar{\D}} &\leq 2\totalvardist{\D}{H} + \eps+\alpha = 3\alpha + \eps 
    \leq 4\eps
\end{align*}
since $\alpha \leq \eps$ by assumption.
\end{proof}
 
\section{Connections to learning and testing}\label{sec:connections}
	In this section, we draw connections between sampling improvers and other areas, namely testing and learning. These connections shed light on the relation between our model and these other lines of work, and provide a way to derive new algorithms and impossibility results for both testing or learning problems. (For the formal definition of the testing and learning notions used in this section, the reader is referred to~\autoref{appendix:definitions} and the relevant subsections.)

\subsection{From learning to correcting}\label{ssec:connection:learning:approach}As a first observation, it is not difficult to see that, under the assumption that the unknown distribution $\D$ belongs to some specific class \class, correcting (or improving) a property $\property$ requires at most as many samples as learning the class \class; that is, \emph{learning (a class of distributions) is at least as hard as correcting (distributions of this class).} Here, \property and \class need not be related.

Indeed, assuming there exists a learning algorithm \Learner for \class, it then suffices to run $\Learner$ on the unknown distribution $\D\in\class$ to learn (with high probability) a hypothesis $\hat{\D}$ such that $\D$ and $\hat{\D}$ are at most at distance $\frac{\eps_1-\eps}{2}$. In particular, $\hat{\D}$ is at most $\frac{\eps_1+\eps}{2}$-far from \property. One can then (e.g., by exhaustive search) find a distribution $\tilde{\D}$ in $\property$ which is closest to $\hat{\D}$ (and therefore at most $\eps_1$-far from \D), and use it to produce as many ``corrected samples'' as wanted:

\begin{theorem}\label{theo:connection:regular:learning}
Let $\class$ a class of probability distributions over $\domain$. Suppose there exists a learning algorithm $\Learner$ for $\class$ with sample complexity $q_\Learner$.
Then, for any property $\property$ of distributions, there exists a (not-necessarily computationally efficient) sampling corrector for \property with sample complexity $q(\eps,\eps_1,\delta)=q_\Learner\left(\frac{\eps_1-\eps}{2},\delta\right)$, under the promise that $\D\in\class$.
\end{theorem}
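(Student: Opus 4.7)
The plan is to instantiate the learning-based approach sketched informally in the paragraph just before the theorem, and verify that the three triangle inequalities involved work out to give the desired bound $\eps_1$ on $\totalvardist{\tilde{\D}}{\D}$.

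First I would describe the corrector. On its very first invocation, run $\Learner$ with accuracy parameter $\eta \eqdef \frac{\eps_1 - \eps}{2}$ and failure probability $\delta$ on the unknown distribution $\D \in \class$; let $\hat{\D}$ denote the hypothesis returned. This step consumes exactly $q_\Learner(\eta, \delta)$ samples from $\D$. Then, by (not-necessarily-efficient) exhaustive search over $\property$, compute any distribution
\[
\tilde{\D} \in \arg\min_{\D' \in \property} \totalvardist{\D'}{\hat{\D}},
\]
and cache it. Every output sample (including the one for the current query, as well as all future queries) is then drawn from $\tilde{\D}$ using only internal randomness, \emph{without any further access to $\D$}; hence the per-query sample complexity from $\D$ is at most $q_\Learner(\eta, \delta)$, matching the claimed bound.

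Next I would verify correctness. By the guarantee on $\Learner$, with probability at least $1 - \delta$ we have $\totalvardist{\hat{\D}}{\D} \leq \eta$; condition on this event. Since $\totalvardist{\D}{\property} \leq \eps$, there exists some $\D^\ast \in \property$ with $\totalvardist{\D}{\D^\ast} \leq \eps$, so by the triangle inequality
\[
\totalvardist{\hat{\D}}{\D^\ast} \leq \totalvardist{\hat{\D}}{\D} + \totalvardist{\D}{\D^\ast} \leq \eta + \eps = \tfrac{\eps_1 + \eps}{2}.
\]
Because $\tilde{\D}$ was chosen to be a minimizer over $\property$, we also get $\totalvardist{\tilde{\D}}{\hat{\D}} \leq \totalvardist{\D^\ast}{\hat{\D}} \leq \tfrac{\eps_1 + \eps}{2}$. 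A final triangle inequality yields
\[
\totalvardist{\tilde{\D}}{\D} \leq \totalvardist{\tilde{\D}}{\hat{\D}} + \totalvardist{\hat{\D}}{\D} \leq \tfrac{\eps_1 + \eps}{2} + \tfrac{\eps_1 - \eps}{2} = \eps_1,
\]
while $\tilde{\D} \in \property$ by construction, so both conditions of \autoref{def:sampling:corrector} are met.

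There is no real obstacle here: the argument is essentially a bookkeeping exercise with three triangle inequalities, and the only subtle point is to pick the learning accuracy as $\frac{\eps_1 - \eps}{2}$ (rather than e.g. $\eps_1 - \eps$) so that the projection step onto $\property$ can ``absorb'' the remaining slack symmetrically. I would also briefly remark that computational efficiency is not claimed: the exhaustive minimization over $\property$ may be arbitrarily expensive, but $\property$ is an abstract subset of distributions over a finite domain and the statement only concerns sample complexity. Finally, the fact that the same cached $\tilde{\D}$ is reused for every subsequent sample means the corrector trivially satisfies the consistency aspect implicit in the definition, and the independence of successive output samples follows because they are drawn using fresh internal random bits from the fixed distribution $\tilde{\D}$.
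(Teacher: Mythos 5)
Your proposal is correct and follows exactly the paper's argument: run $\Learner$ with accuracy $\frac{\eps_1-\eps}{2}$, project the hypothesis onto $\property$ by exhaustive search, and chain the triangle inequalities to conclude that $\tilde{\D}$ is $\eps_1$-close to $\D$ while all subsequent samples cost nothing further from $\D$. The only detail worth noting (which the paper also glosses over) is that $\totalvardist{\D}{\property}\leq\eps$ is an infimum, so one formally takes $\D^\ast\in\property$ within $\eps+\epsilon'$ of $\D$ for arbitrarily small $\epsilon'$; this does not affect the bound.
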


Furthermore, if the (efficient) learning algorithm \Learner has the additional guarantee that its hypothesis class is a subset of $\property$ (i.e., the hypotheses it produces always belong to \property) and that the hypotheses it contains allow efficient generation of samples, then we immediately obtain a computationally efficient sampling corrector: indeed, in this case $\hat{\D}\in\property$ already.
Furthermore, as mentioned in the introduction, when efficient agnostic proper learning algorithms for distribution classes exist, then there are efficient sampling correctors for the same classes.  It is however worth pointing out that this correcting-by-learning approach is quite inefficient with regard to the amount of extra randomness needed: indeed, every sample generated from $\tilde{\D}$ requires fresh new random bits.

To illustrate this theorem, we give two easy corollaries. The first follows from Chan et al., who showed in~\cite{CDSS:13} that monotone hazard risk distributions can be learned to accuracy $\eps$ using $\tildeO{{\log n}/{\eps^4}}$ samples; moreover, the hypothesis obtained is a $\bigO{\log (n/\eps)/\eps^2}$-histogram.
\begin{corollary}
Let $\class$ be the class of monotone hazard risk distributions over $[n]$, and $\property$ be the property of being
a histogram with (at most) $\sqrt{n}$ pieces. Then, under the promise that 
$\D\in\class$ and as long as $\eps=\tilde{\Omega}(1/\sqrt{n})$, there is a sampling corrector for $\property$ with sample complexity $\tildeO{\frac{\log n}{(\eps_1-\eps)^4}}$.
\end{corollary}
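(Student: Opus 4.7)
The plan is to apply \autoref{theo:connection:regular:learning} with the agnostic learning algorithm of Chan, Diakonikolas, Servedio, and Sun from \cite{CDSS:13}. That result provides, for any distribution $\D$ in the class \class of monotone hazard risk distributions over $[n]$, a learner $\Learner$ which, on input accuracy $\eta$ and confidence $\delta$, draws $q_\Learner(\eta,\delta)=\tildeO{\log n / \eta^4}$ samples and outputs (with probability at least $1-\delta$) a hypothesis $\hat{\D}$ satisfying $\totalvardist{\hat{\D}}{\D}\leq \eta$, where $\hat{\D}$ is explicitly represented as a histogram on $O(\log(n/\eta)/\eta^2)$ pieces.

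Plugging $\Learner$ into \autoref{theo:connection:regular:learning} with learning accuracy $\eta=(\eps_1-\eps)/2$ yields immediately the desired sample complexity bound $\tildeO{\log n / (\eps_1-\eps)^4}$. It remains to argue that the resulting corrector does produce a hypothesis in \property. First I would verify that $\hat{\D}$ lies at distance at most $(\eps_1+\eps)/2$ from \property by the triangle inequality (since $\D$ is $\eps$-close to \property\ by the corrector's promise and $\hat{\D}$ is $(\eps_1-\eps)/2$-close to $\D$); hence selecting $\tilde{\D}\in\property$ minimizing $\totalvardist{\tilde{\D}}{\hat{\D}}$ (possibly by brute force, since efficiency is not required here) and outputting samples from $\tilde{\D}$ gives $\totalvardist{\tilde{\D}}{\D}\leq \eps_1$ and $\tilde{\D}\in\property$, fulfilling the two requirements of \autoref{def:sampling:corrector}.

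The only place where the hypothesis on $\eps$ enters is in ensuring the last step is meaningful: the number of pieces $O(\log(n/\eta)/\eta^2)$ of the hypothesis $\hat{\D}$ returned by the learner must be compatible with the target class of $\sqrt{n}$-piece histograms, or equivalently, the promise $\totalvardist{\D}{\property}\leq \eps$ must be non-vacuous for some MHR distributions. Under $\eps=\tilde{\Omega}(1/\sqrt{n})$, one can set $\eta=\Theta(\eps_1-\eps)$ so that $\log(n/\eta)/\eta^2 = \tildeO{\sqrt{n}}$ (up to the logarithmic factors absorbed in the $\tildeO{\cdot}$ notation), which means $\hat{\D}$ is already a $\sqrt{n}$-piece histogram and the projection step is unnecessary: one can simply output samples from $\hat{\D}$ itself. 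This is the only subtle point in the argument, and the main bookkeeping obstacle is tracking the log-factor dependencies between the accuracy $\eta$, the number of histogram pieces, and the promise parameter $\eps$ so that the hypothesis class of \Learner coincides (up to absorbed $\tildeO{\cdot}$ factors) with \property. The rest follows by direct invocation of \autoref{theo:connection:regular:learning}.
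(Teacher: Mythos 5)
Your proposal is correct and follows essentially the same route as the paper, which proves this corollary in a single sentence: invoke \autoref{theo:connection:regular:learning} with the \cite{CDSS:13} learner for monotone hazard risk distributions (sample complexity $\tildeO{\log n/\eta^4}$, hypothesis a $\bigO{\log(n/\eta)/\eta^2}$-histogram), so that with $\eta=(\eps_1-\eps)/2$ the learned hypothesis is already essentially in \property and the stated sample complexity follows. One small arithmetic caveat in your last paragraph: $\log(n/\eta)/\eta^2\leq\sqrt{n}$ actually requires $\eta=\tilde{\Omega}(n^{-1/4})$ rather than $\tilde{\Omega}(n^{-1/2})$, and the corollary's condition constrains $\eps$ rather than $\eps_1-\eps$ — but this imprecision is inherited from the corollary statement itself and does not reflect a divergence from the paper's argument.
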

\noindent Our next example however demonstrates that this learning approach is not always optimal:
\begin{corollary}
Let $\class$ be the class of monotone distributions over $[n]$, and $\property$ be the property of being
a histogram with (at most) $\sqrt{n}$ pieces. Then, under the promise that 
$\D\in\class$ and as long as $\eps=\tilde{\Omega}(1/\sqrt{n})$, there is a sampling corrector for $\property$ with sample complexity $\bigO{\frac{\log n}{(\eps_1-\eps)^3}}$.
\end{corollary}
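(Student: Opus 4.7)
The plan is to apply \autoref{theo:connection:regular:learning} with a proper learning algorithm for the class \class of monotone distributions over $[n]$. The target sample complexity $\bigO{(\log n)/(\eps_1-\eps)^3}$ is exactly what Birgé's classical learner yields, so the task reduces to instantiating such a learner and checking that its output hypothesis, being a monotone histogram on few pieces, already lies in \property in the stated parameter regime.

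I would first spell out a proper learner $\Learner$ for \class with target accuracy $\alpha$, using the ingredients already set up in \autoref{sec:preliminaries}. By \autoref{theorem:Birge:obl:decomp}, for any monotone \D the Birgé flattening $\birge[\D]{\alpha}$ on the oblivious decomposition $\mathcal{I}_\alpha$ is monotone, $\alpha$-close to \D, and supported on $\ell = \bigTheta{(\log n)/\alpha}$ pieces. Drawing $m = \bigO{\ell/\alpha^2} = \bigO{(\log n)/\alpha^3}$ samples from \D and computing the empirical mass in each interval of $\mathcal{I}_\alpha$ yields, by standard multinomial concentration in total variation, a histogram $\hat{\D}$ which is $\alpha$-close to $\birge[\D]{\alpha}$. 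One then projects $\hat{\D}$ onto the convex set of monotone histograms supported on $\mathcal{I}_\alpha$ via a standard isotonic-regression-style procedure; since this set contains $\birge[\D]{\alpha}$ itself, the projection is non-expansive in total variation and produces a monotone $\ell$-piece histogram $\hat{\D}'$ which is $\bigO{\alpha}$-close to \D. After rescaling $\alpha$ by a constant, this gives a proper learner $\Learner$ for \class with sample complexity $q_\Learner(\alpha,\delta) = \bigO{(\log n)/\alpha^3}$ whose hypotheses are always monotone histograms on $\bigO{(\log n)/\alpha}$ pieces.

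Plugging $\Learner$ into \autoref{theo:connection:regular:learning} with accuracy $(\eps_1-\eps)/2$ immediately gives a sampling corrector with sample complexity $\bigO{(\log n)/(\eps_1-\eps)^3}$, whose candidate hypothesis is a monotone histogram with $\bigO{(\log n)/(\eps_1-\eps)}$ pieces. The hypothesis therefore lies in \property as soon as this quantity is at most $\sqrt{n}$, which is precisely the content of the assumption $\eps = \tildeOmega{1/\sqrt{n}}$ (under the same convention as in the previous corollary, that $\eps_1$ and $\eps$ are of the same order). In that regime, the exhaustive-search step in the proof of \autoref{theo:connection:regular:learning} can be skipped and one outputs samples directly from $\hat{\D}'$. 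The only step requiring care is the monotonicity projection inside $\Learner$: one must verify that the closest monotone histogram on $\mathcal{I}_\alpha$ can be computed efficiently and does not blow up the $\alpha$-accuracy by more than a constant factor, which follows from the non-expansiveness noted above together with the triangle inequality against $\birge[\D]{\alpha}$.
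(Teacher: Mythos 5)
Your proposal is correct and takes essentially the same route as the paper: instantiate \autoref{theo:connection:regular:learning} with Birg\'e's $\bigO{\log n/\alpha^3}$-sample learner, whose hypothesis is by construction a histogram on the $\bigO{\log n/\alpha}$ intervals of the oblivious Birg\'e decomposition and hence already lies in $\property$ when $\eps=\tildeOmega{1/\sqrt{n}}$, so the exhaustive-search step is vacuous. Your explicit unpacking of the learner (empirical interval masses plus projection onto monotone histograms, with the error controlled via the triangle inequality against $\birge[\D]{\alpha}$) is just the standard proof of the Birg\'e upper bound that the paper cites, and the caveat you flag about $\eps_1-\eps$ versus $\eps$ in the piece-count condition is already present in the paper's own statement.
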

\noindent Indeed, for learning monotone distributions $\bigTheta{{\log n}/{\eps^3}}$ samples are known to be necessary and sufficient~\cite{Birge:87}. 
Yet, one can also correct the distribution by simulating samples directly from its flattening on the corresponding Birg\'e decomposition (as per~\autoref{def:birge:decomposition}); and every sample from this correction-by-simulation costs exactly \emph{one} sample from the original distribution.

\subsection{From correcting to agnostic learning}\label{ssec:connections:agnostic}

Let $\class$  and $\mathcal{H}$ be two classes of probability distributions over $\domain$. Recall that a \emph{(semi-)agnostic learner for \class} (using hypothesis class $\mathcal{H}$) is a learning algorithm \Algo which, given sample access to an arbitrary distribution $\D$ 
and parameter $\eps$, outputs a hypothesis $\hat{\D}\in\mathcal{H}$ such that, with high probability, $\hat{\D}$ does ``as much as well as the best approximation from $ \class$:''
\[
	\totalvardist{\D}{\hat{\D}} \leq c\cdot\opt_{\class,\D} + \bigO{\eps}
\]
where $\opt_{\class,\D}\eqdef \inf_{\D_{\class}\in\class} \totalvardist{\D_{\class}}{\D}$ and $c\geq 1$ is some absolute constant (if $c=1$, the learner is said to be agnostic).

\paragraph*{} We first describe how to combine a (non-agnostic) learning algorithm with a sampling corrector in order to obtain an agnostic learner, under the strong assumption that a (rough) estimate of $\opt$ is known. Then, we explain how to get rid of this extra requirement, using machinery from the distribution learning literature (namely, an efficient \emph{hypothesis selection} procedure).

\begin{restatable}{theorem}{theoconnectionagnostic}\label{theo:connection:agnostic:learning}
Let $\class$ be as above. Suppose there exists a learning algorithm $\Learner$ for $\class$ with sample complexity $q_\Learner$, and a batch sampling corrector $\Algo$ for $\class$ with sample complexity $q_\Algo$. Suppose further that a constant-factor estimate $\widehat{\opt}$ of $\opt_{\class, \D}$ is known (up to a multiplicative $c$).

Then, there exists a \new{semi-}agnostic learner for \class with sample complexity $q(\eps,\delta)=q_\Algo(\widehat{\opt},\widehat{\opt}+\eps, q_\Learner(\eps,\frac{\delta}{2}),\frac{\delta}{2})$ (where the constant in front of $\opt_{\class, \D}$ is $c$).
\end{restatable}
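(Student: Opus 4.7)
The idea is to use the batch sampling corrector as a ``front end'' that transforms the arbitrary input distribution $\D$ into a sequence of independent samples drawn from some fixed $\tilde{\D} \in \class$, and then feed these samples directly to the non-agnostic learner $\Learner$, which by assumption works whenever the underlying distribution belongs to $\class$. The estimate $\widehat{\opt}$ is exactly what is needed to set the ``closeness to $\class$'' parameter of the corrector legally.

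Concretely, I would proceed as follows. By the assumption on $\widehat{\opt}$ (up to renaming $\widehat{\opt}$ as $c\cdot\widehat{\opt}$ if the estimate is two-sided, which only affects the final multiplicative constant), we may assume $\opt_{\class,\D} \leq \widehat{\opt} \leq c\cdot \opt_{\class,\D}$, so that the promise $\totalvardist{\D}{\class} \leq \widehat{\opt}$ holds. Set $m \eqdef q_\Learner(\eps, \delta/2)$, and invoke the batch corrector $\Algo$ on $\D$ with parameters $\eps' = \widehat{\opt}$, $\eps_1 = \widehat{\opt} + \eps$, confidence parameter $\delta/2$, and batch size $m$ (note that $\eps_1 \geq \eps'$ as required by \autoref{def:sampling:corrector:batch}). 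With probability at least $1-\delta/2$, the corrector supplies $m$ independent samples from some single distribution $\tilde{\D}\in\class$ satisfying $\totalvardist{\tilde{\D}}{\D} \leq \widehat{\opt}+\eps$. I then hand these $m$ samples to $\Learner$ run with accuracy $\eps$ and confidence $\delta/2$: since $\tilde{\D}\in\class$, with probability at least $1-\delta/2$ the learner returns a hypothesis $\hat{\D}$ satisfying $\totalvardist{\hat{\D}}{\tilde{\D}} \leq \eps$. Combining via a union bound and the triangle inequality, with probability at least $1-\delta$,
\[
  \totalvardist{\hat{\D}}{\D} \;\leq\; \totalvardist{\hat{\D}}{\tilde{\D}} + \totalvardist{\tilde{\D}}{\D} \;\leq\; \eps + \widehat{\opt} + \eps \;\leq\; c\cdot \opt_{\class,\D} + 2\eps,
\]
which is precisely the semi-agnostic guarantee with multiplicative constant $c$. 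The overall sample complexity is $q_\Algo\bigl(\widehat{\opt}, \widehat{\opt}+\eps, m, \delta/2\bigr)$ as claimed.

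The one subtlety worth highlighting is \emph{why the batch version is necessary}: a generic sampling corrector (\autoref{def:sampling:corrector}) only promises, on any individual query, to simulate access to some $\tilde{\D}\in\class$, but nothing prevents the underlying $\tilde{\D}$ from differing between distinct invocations of the corrector. The learner $\Learner$, however, crucially requires $m$ independent samples drawn from a \emph{common} distribution in $\class$ in order for its output guarantee to apply. The commitment built into the batch definition~--~that the algorithm may maintain internal state across all $m$ queries so as to remain consistent with a single $\tilde{\D}$~--~is exactly what lets the learner's guarantee be invoked on the combined output, and this is the main (and essentially the only) delicate point of the reduction.
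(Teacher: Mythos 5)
Your proposal is correct and follows essentially the same route as the paper's proof: run the batch corrector with parameters $(\widehat{\opt},\widehat{\opt}+\eps,\delta/2)$ to produce $q_\Learner(\eps,\delta/2)$ samples from a single $\tilde{\D}\in\class$, feed them to $\Learner$, and conclude by a union bound and the triangle inequality. Your added remark on why the \emph{batch} guarantee is needed (so that all samples come from one common $\tilde{\D}$) is exactly the right subtlety to flag.
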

\begin{proof}
Let $c$ be the constant such $\opt_{\class,\D} \leq \widehat{\opt}\leq c\cdot\opt_{\class,\D}$. The agnostic learner $\Learner^\prime$ for \property, on input $\eps\in(0,1]$, works as follows:
\begin{itemize}[-]
  \item\label{theo:connection:agnostic:learning:step:1} Run \Algo on $\D$ with parameters $(\widehat{\opt}, \widehat{\opt}+\eps,\frac{\delta}{2})$ to get $q_\Learner(\eps,\frac{\delta}{2})$ samples distributed according to some distribution $\tilde{\D}$.
  \item Run \Learner on these samples, with parameters $\eps,\frac{\delta}{2}$, and output its hypothesis $\hat{D}$.
\end{itemize}
We hereafter condition on both algorithms succeeding (which, by a union bound, happens with probability at least $1-\delta$). Since $\D$  is $\widehat{\opt}$-close to $\class$, and therefore by correctness of the sampling corrector we have both $\tilde{\D}\in\class$ and $\totalvardist{\D}{\tilde{\D}} \leq \widehat{\opt}+\eps$. Hence, the output $\hat{\D}$ of the learning algorithm satisfies $\totalvardist{\tilde{\D}}{\hat{\D}} \leq \eps$, which implies
\begin{equation}
	\totalvardist{\D}{\hat{\D}} \leq \widehat{\opt} + 2\eps \leq c\cdot\opt_{\class,\D} + 2\eps \label{eq:connection:agnostic:learning}
\end{equation}
for some absolute constant $c$, as claimed (using the assumption on $\widehat{\opt}$).
\end{proof}

It is worth noting that in the case the learning algorithm is \emph{proper} (meaning the hypotheses it outputs belong to the target class \class: that is, $\mathcal{H}\subseteq\class$), then so is the agnostic learner obtained with \autoref{theo:connection:agnostic:learning}. This turns out to be a very strong guarantee: specifically, getting (computationally efficient) proper agnostic learning algorithms remains a challenge for many classes of interest -- see e.g. \cite{DDS:PBD:12}, which mentions efficient proper learning of Poisson Binomial Distributions as an open problem.

We stress that the above can be viewed as a \emph{generic} framework to obtain efficient agnostic learning results from known efficient learning algorithms.
For the sake of illustration, let us consider the simple case of Binomial distributions: it is known, for instance as a consequence of the aforementioned results on PBDs, that learning such distributions can be performed with $\tildeO{1/\eps^2}$ samples (and that $\bigOmega{1/\eps^2}$ are required). Our theorem then provides a simple way to obtain agnostic learning of Binomial distributions with sample complexity $\tildeO{1/\eps^2}$: namely, by designing  an efficient sampling corrector for this class with sample complexity $\poly(\log\frac{1}{\eps},\log\frac{1}{\eps_1})$.

\begin{corollary}
Suppose there exists a batch sampling corrector $\Algo$ for the class $\mathcal{B}$ of Binomial distributions over $[n]$, with sample complexity $q_\Algo(\eps,\eps_1,m,\delta)=\poly\!\log(\frac{1}{\eps},\frac{1}{\eps_1},m,\frac{1}{\delta})$.
Then, there exists a semi-agnostic learner for $\mathcal{B}$, which, given access to an unknown distribution $\D$ promised to be $\eps$-close to some Binomial distribution, takes $\tildeO{\frac{1}{\eps^2}}$ samples from $\D$ and outputs a distribution $\hat{B}\in\mathcal{B}$ such that
\[
  \totalvardist{\D}{\hat{B}} \leq 3\eps
\]
with probability at least $2/3$.
\end{corollary}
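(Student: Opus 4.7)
The plan is to apply Theorem~\ref{theo:connection:agnostic:learning} as a near black-box reduction. First, I would invoke a standard (non-agnostic) learner for Binomial distributions with sample complexity $q_\Learner(\eps, \delta) = \tildeO{\log(1/\delta)/\eps^2}$: the family $\{\mathrm{Bin}(n, p) : p \in [0,1]\}$ is one-parameter, and combining the empirical estimator $\hat{p} = \frac{1}{nm}\sum_{i=1}^m X_i$ with the elementary inequality $\totalvardist{\mathrm{Bin}(n,p)}{\mathrm{Bin}(n,p')} = \bigO{\sqrt{n}\,|p-p'|}$ and a Hoeffding bound gives this complexity; alternatively, this bound can be read off directly from the results of~\cite{DDS:PBD:12} for the strictly larger class of Poisson Binomial Distributions.

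Next, to apply Theorem~\ref{theo:connection:agnostic:learning} I would simply set the estimate $\widehat{\opt} \eqdef \eps$: since $\D$ is promised to be $\eps$-close to $\mathcal{B}$, we have $\opt_{\mathcal{B}, \D} \leq \widehat{\opt}$, which is exactly what is needed for the batch corrector's precondition. With this choice, the pair $(\widehat{\opt}, \widehat{\opt} + \eps) = (\eps, 2\eps)$ is fed to $\Algo$, and the \emph{first} inequality of~(\ref{eq:connection:agnostic:learning}) in the proof of Theorem~\ref{theo:connection:agnostic:learning} yields $\totalvardist{\D}{\hat{B}} \leq \widehat{\opt} + 2\eps = 3\eps$, exactly matching the claimed guarantee (setting the overall failure probability to $1/3$ via a union bound). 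The total sample cost is $q_\Algo\bigl(\eps,\, 2\eps,\, q_\Learner(\eps, 1/6),\, 1/6\bigr) = \poly\log\!\bigl(1/\eps,\, 1/(2\eps),\, \tildeO{1/\eps^2},\, 6\bigr) = \poly\log(1/\eps)$, which is in particular $\tildeO{1/\eps^2}$ as claimed.

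I do not expect any genuine obstacle; the main subtlety is merely bookkeeping. The trivial upper bound $\widehat{\opt} = \eps$ is not a \emph{tight} constant-factor estimate of $\opt_{\mathcal{B}, \D}$ when $\opt_{\mathcal{B}, \D} \ll \eps$, but the corollary only asks for the fixed guarantee $\totalvardist{\D}{\hat{B}} \leq 3\eps$ rather than the stronger $\bigO{\opt_{\mathcal{B}, \D} + \eps}$ that Theorem~\ref{theo:connection:agnostic:learning} would deliver if $\widehat{\opt}$ were genuinely within a constant factor of $\opt_{\mathcal{B}, \D}$. For that stronger (fully semi-agnostic) conclusion, one could instead perform a geometric search over $\widehat{\opt} \in \{\eps,\, \eps/2,\, \eps/4,\, \dots,\, 2^{-\bigO{\log(1/\eps)}}\}$, run the above reduction once for each guess, and finally select the best candidate via a standard hypothesis-selection tournament; this would add only an $\tildeO{1/\eps^2}$ additive overhead.
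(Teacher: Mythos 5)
Your proposal follows exactly the route the paper intends for this corollary: instantiate \autoref{theo:connection:agnostic:learning} with a known $\tildeO{1/\eps^2}$-sample learner for Binomial distributions and the hypothesized polylog-complexity corrector, using the promise to set $\widehat{\opt}=\eps$. Your observation that only the upper bound $\opt_{\mathcal{B},\D}\le\widehat{\opt}$ is needed here (so that the corrector's precondition holds and the first inequality of~\eqref{eq:connection:agnostic:learning} gives $\widehat{\opt}+2\eps=3\eps$), and that the constant-factor lower bound and the tournament machinery are only required for the stronger $\bigO{\opt+\eps}$ guarantee, is exactly the right reading of the theorem.

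One correction to a side remark: the inequality $\totalvardist{\binomial{n}{p}}{\binomial{n}{p'}}=\bigO{\sqrt{n}\,|p-p'|}$ is false in general (take $p=0$, $p'=1/n$: the distance is $1-e^{-1}+o(1)$ while $\sqrt{n}|p-p'|\to 0$); the correct bound carries a normalization of roughly $1/\sqrt{\bar p(1-\bar p)}$, and with that normalization the empirical estimator must be analyzed via Chebyshev on the variance $p(1-p)/(mn)$ rather than Hoeffding, which at scale $\eps/\sqrt{n}$ would cost $\Theta(n/\eps^2)$ samples. The conclusion (a $\tildeO{1/\eps^2}$-sample \emph{proper} learner for $\mathcal{B}$ exists) is nonetheless true. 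Also note that the fallback learner of \cite{DDS:PBD:12} is not proper, so to guarantee $\hat B\in\mathcal{B}$ you should use the (corrected) proper estimator, in line with the paper's remark after \autoref{theo:connection:agnostic:learning} that properness of $\Learner$ is what makes the resulting agnostic learner proper.
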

\noindent To the best of our knowledge, an agnostic learning algorithm for the class of Binomial distributions with sample complexity $\tildeO{1/\eps^2}$ is not explicitly known, although the results of \cite{CDSS:14} do imply a $\tildeO{1/\eps^3}$ upper bound and a modification of~\cite{DDS:PBD:12} (to make their algorithm agnostic) seems to yield one. The above suggests an approach which would lead to the (essentially optimal) sample complexity. (Since publication of our work, we have learned that~\cite{ADLS:15} provides such a result unconditionally.)

\subsubsection{Removing the assumption on knowing \texorpdfstring{$\widehat{\opt}$}{an estimate of \opt}}

In the absence of such an estimate $\widehat{\opt}$ within a constant factor of $\opt_{\class, \D}$ given as input, one can apply the following strategy, inspired of \cite[Theorem 6]{CDSS:14:NIPS}. In the first stage, we try to repeatly ``guess'' a good $\widehat{\opt}$, and run the agnostic learner of \autoref{theo:connection:agnostic:learning} with this value to obtain a hypothesis. After this stage, we have generated a succinct list $\mathcal{H}$ of hypotheses, one for each $\widehat{\opt}$ that we tried: the second stage is then to run a hypothesis selection procedure to pick the best $h\in\mathcal{H}$: as long as one of the guesses was good, this $h$ will be an accurate hypothesis.

More precisely, suppose we run the agnostic learner of \autoref{theo:connection:agnostic:learning} a total of $\log(1/\eps)$ times, setting at the $k$\textsuperscript{th} iteration $\widehat{\opt}_k\eqdef 2^k\eps$ and $\delta^\prime\eqdef \delta/(2\log(1/\eps))$. For the first $k$ such that $2^{k-1}\eps \leq \opt_{\class,\D} < 2^k\eps$, $\widehat{\opt}_k$ is in $[\opt_{\class,\D}, 2\cdot \opt_{\class,\D}]$. Therefore, by a union bound on all runs of the learner at least one of the hypotheses $\hat{\D}_k$ will have the agnostic learning guarantee we want to achieve; i.e. will satisfy \eqref{eq:connection:agnostic:learning}, with $c=2$.

Conditioned on this being the case, it remains to determine \emph{which} hypothesis achieves the guarantee of being $(2\opt+\bigO{\eps})$-close to the distribution $\D$. This is where 
we apply a hypothesis selection algorithm~--~a ``tournament'' procedure, as in e.g.~\cite{DL:01,DDS:PBD:12,DK:13,AJOS:14:ISIT}~--~to our $N=\log(1/\eps)$ candidates, 
with accuracy parameter $\eps$ and failure probability $\delta/2$. This algorithm has the following guarantee:
\begin{proposition}[{\cite[Chapter 7]{DL:01}}]\label{prop:tournament}
  There exists a procedure \textsc{Tournament} that, given sample access to an unknown distribution $\D$ and both sample and evaluation access to $N$ hypotheses $H_1,\dots,H_N$, has the following behavior. \textsc{Tournament} makes a total of $\bigO{\log N \log(1/\delta)/{\eps^2}}$ queries to $\D,H_1,\dots,H_N$, runs in time $\bigO{N^2}$, and outputs a hypothesis $H_i$ such that, with probability at least $1-\delta$,
  \[
      \totalvardist{\D}{H_i} \leq 3 \min_{j\in[N] } \totalvardist{\D}{H_j} + \bigO{\eps}.
  \]
\end{proposition}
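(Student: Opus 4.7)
The plan is to implement and analyze a tournament-style hypothesis selection procedure built from Scheffé-type pairwise comparisons. For any pair $(H_i,H_j)$, define the \emph{Scheffé set} $S_{ij} \eqdef \setOfSuchThat{x\in\domain}{H_i(x) > H_j(x)}$, which we can describe implicitly via the evaluation oracles. By construction, $H_i(S_{ij}) - H_j(S_{ij}) = \totalvardist{H_i}{H_j}$, and the two quantities $H_i(S_{ij}), H_j(S_{ij})$ can be evaluated to arbitrary precision from the evaluation access (or, if $\domain$ is too large to enumerate, approximated by Monte Carlo using $\tildeO{1/\eps^2}$ samples from each $H_i$). Therefore the only quantity we must \emph{estimate from $\D$} in order to compare $H_i$ and $H_j$ is the single number $\D(S_{ij})$.

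Next, I would draw one common batch of $m = \bigTheta{\log(N/\delta)/\eps^2}$ samples from $\D$ and reuse it across all matches. Using these samples, form $\hat{p}_{ij}$, the empirical probability assigned to $S_{ij}$ by the sample. By Hoeffding and a union bound over at most $N^2$ Scheffé sets, with probability $\geq 1-\delta/2$ every estimate satisfies $|\hat{p}_{ij} - \D(S_{ij})| \leq \eps/2$. The comparison rule is then a standard Scheffé test: declare $H_i$ the winner of its match with $H_j$ unless $\hat{p}_{ij}$ is closer to $H_j(S_{ij})$ than to $H_i(S_{ij})$ by a margin larger than $\eps$. To achieve the linear-in-$N$ runtime stated in the proposition, I would arrange the matches as a single-elimination bracket of depth $\lceil \log_2 N\rceil$, using $N-1$ total comparisons. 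Each comparison requires $O(m)$ time (scanning the batch of samples and counting how many fall into $S_{ij}$) plus a constant number of evaluation queries, giving the claimed $\bigO{N\log(N/\delta)/\eps^2}$ overall time.

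The correctness argument is the standard Scheffé lemma. Let $H_{i^\ast}$ be a hypothesis minimizing $\totalvardist{\D}{H_{i^\ast}}=:\opt^\ast$. Using the triangle inequality and the definition of $S_{ij}$, if $H_j$ satisfies $\totalvardist{\D}{H_j} > 9.1\,\opt^\ast + C\eps$ for a suitable absolute constant $C$, then $|\D(S_{i^\ast j}) - H_{i^\ast}(S_{i^\ast j})|$ is strictly smaller than $|\D(S_{i^\ast j}) - H_j(S_{i^\ast j})|$ by more than $2\eps$, and on the high-probability event above the empirical test correctly eliminates $H_j$ whenever it is pitted against $H_{i^\ast}$. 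Conversely, $H_{i^\ast}$ itself is never eliminated, because any match it participates in either picks it as winner or loses by a margin smaller than $\eps$, in which case our tie-breaking rule keeps $H_{i^\ast}$ alive. The surviving hypothesis $H_{\hat\imath}$ therefore has $\totalvardist{\D}{H_{\hat\imath}} \leq 9.1\,\opt^\ast + \bigO{\eps}$, as claimed.

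The main subtlety I expect is juggling the constants so that the final factor is indeed $9.1$ rather than a larger number, since both the concentration slack and the decision margin compound with the triangle inequality losses across a depth-$\log N$ bracket; one typically has to set the margin and the Hoeffding accuracy so that errors do not accumulate across rounds (a ``good'' hypothesis can still be forced down the bracket, but the guarantee must be preserved at each survival). A secondary subtlety is that, when $\domain$ is infinite or exponentially large, one cannot compute $H_i(S_{ij})$ exactly; handling this requires the additional Monte-Carlo step alluded to above and a separate union bound, which only affects the absorbed constants and the poly-logarithmic factor hidden in $\tildeO{\cdot}$. With these choices all conditions of the proposition are met.
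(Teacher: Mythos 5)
First, a point of order: the paper does not prove this proposition at all --- it is imported as a black box from \cite{CK:15}. Your proposal must therefore stand on its own, and it contains a genuine gap which is precisely the difficulty that \cite{AJOS:14:ISIT} and \cite{CK:15} were written to overcome.

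The flawed step is the claim that ``$H_{i^\ast}$ itself is never eliminated.'' Writing $\tau\eqdef\min_{j}\totalvardist{\D}{H_j}$, the Scheff\'e test between $H_{i^\ast}$ and $H_j$ does \emph{not} guarantee that the closer hypothesis wins; it only guarantees that the winner $H_w$ satisfies $\totalvardist{\D}{H_w}\leq 3\min\bigl(\totalvardist{\D}{H_{i^\ast}},\totalvardist{\D}{H_j}\bigr)+\bigO{\eps}$. Concretely, a hypothesis $H_j$ with $\totalvardist{\D}{H_j}$ as large as $3\tau+\bigO{\eps}$ can have $H_j(S_{i^\ast j})$ genuinely closer to $\D(S_{i^\ast j})$ than $H_{i^\ast}(S_{i^\ast j})$ is, by a margin far exceeding $\eps$; this is a decisive loss for $H_{i^\ast}$, not a statistical fluctuation, so no tie-breaking rule or margin tuning saves it. Once $H_{i^\ast}$ is knocked out, the surviving champion is only guaranteed to be within a factor $3$ of optimal, and applying the same reasoning recursively shows that after $r$ rounds of a single-elimination bracket the only provable bound is a factor $3^{r}$; with $r=\lceil\log_2 N\rceil$ this is $N^{\log_2 3}\approx N^{1.58}$, not $9.1$. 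In other words, the compounding you flag in your last paragraph lives in the \emph{multiplicative} factor, not in the additive $\eps$ terms (which your common-sample union bound does control), and it cannot be repaired by adjusting the Hoeffding accuracy. A full round-robin with a most-wins rule restores a constant factor but costs $\bigTheta{N^2}$ comparisons, violating the claimed $\bigO{N\log(N/\delta)/\eps^2}$ running time. Achieving a constant factor with a near-linear number of Scheff\'e comparisons requires a differently structured tournament --- one analyzed in the ``adversarial comparator'' framework, where matches between two comparably good hypotheses may be decided arbitrarily and the procedure must still surface a near-optimal survivor --- and the specific constant $9.1$ comes out of that analysis, not out of a naive bracket.
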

We note that the quadratic running time can be brought down to near-linear, as shown in~\cite{DK:13,AJOS:14:ISIT} (with the same sample complexity, although at the price of a worse semi-agnostic constant). This improved running time, however, is not crucial for our applications.

\paragraph*{Summary.} Using this result in the approach outlined above, we get with probability at least $1-\delta$, we will obtain a hypothesis $\hat{\D}_{k^\ast}$ doing ``almost as well as the best $\D_k$''; that is,
\[
	\totalvardist{ \D }{ \hat{\D}_{k^\ast} } \leq 6\cdot \opt_{\class,\D} + \bigO{\eps}
\]
The overall sample complexity is
\[
	\sum_{k=1}^{\log(1/\eps)}q_\Algo\!\left(2^k\eps,(2^{k}+1)\eps, q_\Learner\!\left(\eps,\frac{\delta}{4\log(1/\eps)}\right),\frac{\delta}{4\log(1/\eps)}\right) + \tildeO{ \frac{1}{\eps^2}\log\frac{1}{\delta} } 
\]
where the first term comes from the $\log(1/\eps)$ runs of the learner from \autoref{theo:connection:agnostic:learning}, and the second is the overhead due to the hypothesis selection tournament.

\subsection{From correcting to tolerant testing}

We observe that the existence of sampling 
correctors for a given property $\property$, 
along with an efficient distance estimation procedure, 
allows one to convert any distribution testing algorithm into a tolerant 
distribution testing algorithm.
This is similar to the connection between ``local reconstructors'' and 
tolerant testing of graphs described in \cite[Theorem 3.1]{Brakerski:08}
and \cite[Theorem 3.1]{CGR:12}. That is, 
if a property \property has both a distance estimator and a sampling corrector, 
then one can perform \emph{tolerant} testing of \property in the time 
required to generate enough corrected samples for both the estimator and a 
(non-tolerant) tester. \medskip

We first state our theorem in all generality, 
before instantiating it in several corollaries. For the sake of clarity,
the reader may wish to focus on these on a first pass.

\begin{restatable}{theorem}{theoconnectiontolerantbrakersky}\label{theo:connection:corrector:testing}
Let \class be a class of distributions, and $\property\subseteq\class$ a property. Suppose there exists an $(\eps,\eps_1)$-batch sampling corrector \Algo for \property with complexity $q_\Algo$, and a distance estimator $\mathcal{E}$ for \class with complexity $q_{\mathcal{E}}$ -- that is, given sample access to $\D_1,\D_2\in\class$ and parameters \eps, $\delta$, $\mathcal{E}$ draws $q_{\mathcal{E}}(\eps,\delta)$ samples from $\D_1,\D_2$ and outputs a value $\hat{d}$ such that $\abs{ \hat{d}-\totalvardist{\D_1}{\D_2} } \leq \eps$ with probability at least $1-\delta$.

Then, from any property tester $\Tester$ for \property with sample complexity $q_\Tester$, one can get a tolerant tester $\Tester^\prime$ with query complexity $q(\eps^\prime,\eps,\delta)=q_\Algo\!\left(\eps^\prime, \bigTheta{\eps}, q_{\mathcal{E}}(\frac{\eps-\eps^\prime}{4},\frac{\delta}{3})+q_\Tester(\frac{\eps-\eps^\prime}{4},\frac{\delta}{3}), \frac{\delta}{3}\right)$.
\end{restatable}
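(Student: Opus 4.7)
The plan is to chain the three ingredients so that the tolerant tester $\Tester^\prime$, given $\D$, produces samples from a ``corrected'' distribution $\tilde{\D}$ via $\Algo$ and then simultaneously checks (i) that $\tilde{\D}\in\property$ using $\Tester$ and (ii) that $\totalvardist{\D}{\tilde{\D}}$ is small using $\mathcal{E}$. Setting $\eta := (\eps-\eps^\prime)/4$ and fixing $\eps_1 \in [\eps^\prime,\, (\eps+3\eps^\prime)/4)$ (a nonempty range since $\eps^\prime < \eps$, with $\eps_1 = \bigTheta{\eps}$), $\Tester^\prime$ on input $(\eps^\prime,\eps,\delta)$ runs $\Algo$ in batch mode with parameters $(\eps^\prime,\eps_1)$, failure probability $\delta/3$, and $m := q_\Tester(\eta,\delta/3) + q_{\mathcal{E}}(\eta,\delta/3)$ requested samples from $\tilde{\D}$. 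It feeds $q_\Tester(\eta,\delta/3)$ of these into $\Tester$ (with proximity parameter $\eta$ and error $\delta/3$) and the remaining $q_{\mathcal{E}}(\eta,\delta/3)$ into $\mathcal{E}$, along with an equal number of fresh direct samples from $\D$, obtaining an $\eta$-additive estimate $\hat{d}$ of $\totalvardist{\D}{\tilde{\D}}$. Finally, it outputs \accept iff $\Tester$ accepted \emph{and} $\hat{d} \leq \eps_1 + \eta$.

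For correctness, a union bound ensures that all three subroutines are simultaneously correct with probability at least $1-\delta$. If $\totalvardist{\D}{\property} \leq \eps^\prime$, the promise of $\Algo$ holds, so $\tilde{\D}\in\property$ and $\totalvardist{\D}{\tilde{\D}}\leq\eps_1$; then $\Tester$ accepts and $\hat{d} \leq \eps_1 + \eta$, so $\Tester^\prime$ accepts. If instead $\totalvardist{\D}{\property} > \eps$ and $\Tester^\prime$ were to accept, then $\Tester$ accepting $\tilde{\D}$ forces $\totalvardist{\tilde{\D}}{\property}\leq \eta$, whence by the triangle inequality $\totalvardist{\D}{\tilde{\D}} > \eps - \eta$ and thus $\hat{d} > \eps - 2\eta$; but the choice $\eps_1 < (\eps+3\eps^\prime)/4 = \eps - 3\eta$ gives $\eps_1+\eta < \eps - 2\eta$, a contradiction. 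Hence $\Tester^\prime$ rejects. The overall query complexity to $\D$ is $q_\Algo(\eps^\prime,\eps_1,m,\delta/3) + q_\mathcal{E}(\eta,\delta/3)$, which (absorbing the direct samples fed to $\mathcal{E}$) matches the stated expression.

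The main subtlety I expect is the need for the \emph{batch} flavour of corrector rather than a vanilla one: both $\Tester$ and $\mathcal{E}$ must operate on samples independently drawn from the \emph{same} distribution $\tilde{\D}$, and this cross-query consistency is precisely the extra guarantee in \autoref{def:sampling:corrector:batch}. A non-batch corrector could in principle realize a fresh $\tilde{\D}$ on each query, invalidating the triangle-inequality step above. A minor additional care is that in the soundness case the promise of $\Algo$ is not met and $\tilde{\D}$ is formally unspecified by the definition; this is harmless, since we only ever use the fact that \emph{some} distribution $\tilde{\D}$ underlies the $m$ i.i.d.\ batch samples and that all the inequalities above hold for whatever $\tilde{\D}$ that happens to be.
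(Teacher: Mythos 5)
Your proposal is correct and follows essentially the same route as the paper's proof: run the batch corrector to obtain samples from $\tilde{\D}$, use the distance estimator to check $\totalvardist{\D}{\tilde{\D}}$ against a threshold near $\eps_1$, run the (non-tolerant) tester on $\tilde{\D}$ with proximity parameter $(\eps-\eps^\prime)/4$, and conclude by a union bound and the triangle inequality (the paper fixes $\eps_1=\eps^\prime+\frac{\eps-\eps^\prime}{4}$ and phrases soundness as a case split rather than a contradiction, but the arithmetic is the same). Your remarks on the necessity of the \emph{batch} guarantee and on the unspecified behavior of $\Algo$ outside its promise are both apt.
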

\ifnum\fulldetails=1 \begin{proof}
The tolerant tester $\Tester^\prime$ for \property, on input $0 \leq \eps^\prime < \eps \leq 1$, works as follows, setting $\beta\eqdef\frac{\eps-\eps^\prime}{4}$ and $\eps_1\eqdef \eps^\prime+\beta$:
\begin{itemize}[-]
  \item\label{theo:connection:corrector:testing:step:1} Run \Algo on $\D$ with parameters $(\eps^\prime,\eps_1,\delta/3)$ to get $q_{\mathcal{E}}(\beta,\delta/3)+q_\Tester(\beta,\delta/3)$ samples distributed according to some distribution $\tilde{\D}$. Using these samples:
    \begin{enumerate}
        \item\label{theo:connection:corrector:testing:step:2} Estimate $\totalvardist{\D}{\tilde{D}}$ to within an additive $\beta$, and \reject if this estimate is more than $\eps_1+\beta=\frac{\eps+\eps^\prime}{2}$;
        \item\label{theo:connection:corrector:testing:step:3} Otherwise, run \Tester on $\tilde{D}$ with parameter $\beta$ and accept if and only if \Tester outputs \accept.
    \end{enumerate}
\end{itemize}

We hereafter condition on all 3 algorithms succeeding (which, by a union bound, happens with probability at least $1-\delta$).

\noindent If $\D$ is $\eps^\prime$-close to \property, then the corrector ensures that $\tilde{\D}$ is $\eps_1$-close to $\D$, so the estimate of $\totalvardist{\D}{\tilde{\D}}$ is at most $\eps_1+\beta$: Step~\ref{theo:connection:corrector:testing:step:2} thus passes, and as $\tilde{\D}\in\property$ the tester outputs \accept in Step~\ref{theo:connection:corrector:testing:step:3}.\medskip

\noindent On the other hand, if $\D$ is $\eps$-far from \property, then either \textsf{(a)} $\totalvardist{\D}{\tilde{\D}} > \eps_1+2\beta$ (in which case we output \reject in Step~\ref{theo:connection:corrector:testing:step:2}, since the estimate exceeds $\eps_1+\beta$), or \textsf{(b)} $\totalvardist{\tilde{\D}}{\property} > \eps - (\eps_1+2\beta) = \beta$, in which case $\Tester$ outputs \reject in Step~\ref{theo:connection:corrector:testing:step:3}.
\end{proof}
\fi 
\begin{remark}\label{remark:distance:estimation}
Only asking that the distance estimation procedure $\mathcal{E}$ be specific to the class \class is not innocent; indeed, it is known (\cite{ValiantValiant:11}) that for \emph{general} distributions, distance estimation has sample complexity $n^{1-o(1)}$. However, the task becomes significantly easier for certain classes of distributions: and for instance can be performed with only $\tildeO{k\log n}$ samples, if the distributions are guaranteed to be $k$-modal \cite{DDSVV:13}. This observation can be leveraged in cases
when one knows that the distribution has a specific property, but does not quite satisfy a second property: e.g. is known to be $k$-modal but not known to be, say, log-concave.
\end{remark}

The reduction above can be useful both as a black-box 
way to derive upper bounds for tolerant testing, 
as well as to prove lower bounds for either testing or distance estimation. 
For the first use, we give two applications of 
our theorem to provide tolerant monotonicity testers for $k$-modal distributions. 
The first is a conditional result, 
showing that the existence of
good monotonicity correctors yield tolerant testers. The second, 
while unconditional, only guarantees a weaker form of tolerance
(guaranteeing acceptance only of distributions that are very close to monotone); 
and relies on a corrector we describe in~\autoref{sec:monotonicity:oblivious}. 
As we detail shortly after stating these two results, 
even this weak tolerance improves upon the one provided by currently 
known testing algorithms.

\begin{corollary}\label{coro:connection:corrector:testing}
Suppose there exists an $(\eps,\eps_1)$-batch sampling corrector for monotonicity with complexity $q$. Then, for any $k=\bigO{\log n/\log\log n}$, there exists an algorithm that distinguishes whether a $k$-modal distribution is  \textsf{(a)} $\eps$-close to monotone or  \textsf{(b)} $5\eps$-far from monotone with success probability $2/3$, and sample complexity
\[
	q\!\left( \eps, 2\eps, C\frac{k\log n}{\eps^4\log\log n}, \frac{1}{9}  \right)
\]
where $C$ is an absolute constant.
\end{corollary}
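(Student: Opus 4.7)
The plan is to apply~\autoref{theo:connection:corrector:testing} with $\property$ taken to be the class of monotone distributions on $[n]$ and $\class$ taken to be the class of $k$-modal distributions; this is legitimate since any monotone distribution is in particular $k$-modal for every $k\geq 1$, so $\property\subseteq\class$. In the theorem, I would set the ``close'' parameter to $\eps^\prime=\eps$ and the ``far'' parameter to $5\eps$, so that $\beta=(5\eps-\eps)/4=\eps$ and the corrector-internal distance parameter becomes $\eps^\prime+\beta=2\eps$, matching the statement. Taking the overall failure probability to be $\delta=1/3$ distributes a failure budget of $1/9$ to each of the three subroutines (corrector, distance estimator, tester) appearing in the theorem, and the theorem then delivers an algorithm with total sample complexity
\[
q\!\left(\eps,\;2\eps,\;q_{\mathcal{E}}(\eps,\tfrac{1}{9})+q_\Tester(\eps,\tfrac{1}{9}),\;\tfrac{1}{9}\right).
\]

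The remaining task is to exhibit a distance estimator $\mathcal{E}$ and a (non-tolerant) monotonicity tester $\Tester$ for the class of $k$-modal distributions whose combined sample complexity $q_{\mathcal{E}}(\eps,1/9)+q_\Tester(\eps,1/9)$ is at most $Ck\log n/(\eps^4\log\log n)$. For both tasks I would invoke the machinery of~\cite{DDSVV:13}: a $k$-modal distribution on $[n]$ can be obliviously flattened, via a piecewise Birg\'e decomposition applied on each of its (at most $k$) monotone segments, to a $k$-modal distribution supported on a domain of size $O(k\log n/\eps)$, with total-variation cost only $O(\eps)$. On this drastically reduced domain, both $\ell_1$-distance estimation (by empirical frequencies) and monotonicity testing can be performed with sample complexity that is polynomial in the reduced support size and in $1/\eps$, and in the regime $k=O(\log n/\log\log n)$ these bounds collapse into the announced $O(k\log n/(\eps^4\log\log n))$ budget for each of $q_{\mathcal{E}}$ and $q_\Tester$.

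The main point to verify, and the place where I expect the technical care to concentrate, is that the distance estimator in~\autoref{theo:connection:corrector:testing} is applied between $\D$ (promised to be $k$-modal) and $\tilde{\D}$ (the output of the monotonicity corrector, which is monotone and hence $1$-modal, so also $k$-modal): the estimator must remain valid under this asymmetric promise. The piecewise Birg\'e reduction respects this, because flattening two distributions can only decrease their total variation distance (cf.~\eqref{eq:Birge:tv}) and the flattened distributions remain $k$-modal and monotone respectively, so empirical estimation on the reduced domain gives an unbiased estimate of the flattened distance up to the $O(\eps)$ slack introduced by the flattening. The constraint $k=O(\log n/\log\log n)$ is precisely what is needed so that the polylogarithmic overheads incurred in the reduction are absorbed by the leading $k\log n/\log\log n$ factor, rather than dominating it. Plugging these two subroutines into~\autoref{theo:connection:corrector:testing} together with the hypothesized corrector yields an algorithm with the exact sample complexity stated in the corollary.
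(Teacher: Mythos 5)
Your proposal follows the paper's proof in all essentials: both instantiate \autoref{theo:connection:corrector:testing} with $\property$ being monotonicity inside the class of $k$-modal distributions, with the same parameter bookkeeping ($\eps^\prime=\eps$, far parameter $5\eps$, so $\beta=\eps$, internal corrector parameter $2\eps$, and a failure budget of $1/9$ per subroutine), and both draw the two subroutines from the $k$-modal toolbox of \cite{DDSVV:13} and \cite{DDS:12}.

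The one place where your sketch does not quite deliver the stated bound is the distance estimator. Flattening to a domain of size $\bigO{k\log n/\eps}$ and then taking plain empirical frequencies costs $\bigO{k\log n/\eps^3}$ samples, which for constant $\eps$ exceeds the budget $\bigO{k\log n/(\eps^4\log\log n)}$ by a $\log\log n$ factor; so taken literally, your derivation does not yield the corollary as stated for all $\eps$. The paper instead quotes the \cite{DDSVV:13} estimator as a black box, whose complexity $\bigO{k^2/\eps^4 + k\log n/(\eps^4\log(k\log n))}$ carries the crucial $\log(k\log n)=\Theta(\log\log n)$ savings in the denominator~--~a savings that comes from a sublinear-sample $\lp$-distance estimator on the flattened domain, not from naive learning. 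Relatedly, the hypothesis $k=\bigO{\log n/\log\log n}$ is used to absorb the $k^2/\eps^4$ term of that estimator, not merely the ``polylogarithmic overheads'' of the reduction. For the testing subroutine the paper simply uses the $\bigO{k/\eps^2}$-sample $k$-modal monotonicity tester of \cite{DDS:12}, with no dependence on $n$; your alternative of testing on the reduced domain also fits within the budget, so that difference is immaterial.
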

\begin{proof}
 We combine the distance estimator of \cite{DDSVV:13} with the monotonicity tester of \cite[Section 3.4]{DDS:12}, which both apply to  the class of $k$-modal distributions. As their respective sample complexity is, for distance parameter $\alpha$ and failure probability $\delta$, $\bigO{\left(\frac{k^2}{\alpha^4}+\frac{k\log n}{\alpha^4\log(k\log n)}\right)\log\frac{1}{\delta}}$ and $\bigO{\frac{k}{\alpha^2}\log\frac{1}{\delta}}$, the choice of parameters ($\delta=1/3$, $\eps$ and $5\eps$) and the assumption on $k$ yield
 \[
   \bigO{\frac{k}{\eps^2}} + \bigO{\frac{k^2}{\eps^4}+\frac{k\log n}{\eps^4\log(k\log n)} }
   = \bigO{\frac{k\log n}{\eps^4\log(k\log n)}}
 \]
  and we obtain by \autoref{theo:connection:corrector:testing} a tolerant tester with sample complexity $q\!\left( \eps, 2\eps, \bigO{\frac{k\log n}{\eps^4\log(k\log n)}}, \frac{1}{9}  \right)$, as claimed.
\end{proof}

Another application of this theorem, but this time taking advantage of a result from \autoref{sec:monotonicity}, allows us to derive an \emph{explicit} tolerant tester for monotonicity of $k$-modal distributions:
\begin{corollary}\label{coro:connection:corrector:testing:oblivious}
For any $k\geq 1$, there exists an algorithm that distinguishes whether a $k$-modal distribution is  \textsf{(a)} $\bigO{\eps^3/\log^2 n}$-close to monotone or  \textsf{(b)} $\eps$-far from monotone with success probability $2/3$, and sample complexity
\[
	\bigO{\frac{1}{\eps^4}\frac{k\log n}{\log (k\log n)} +\frac{k^2}{\eps^4} }.
\]
In particular, for $k=\bigO{\log n/\log\log n}$ this yields a (weakly) tolerant tester with sample complexity $\bigO{\frac{1}{\eps^4}\frac{k\log n}{\log\log n} }$.
\end{corollary}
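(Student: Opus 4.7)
The plan is to invoke \autoref{theo:connection:corrector:testing}, in the same fashion as for \autoref{coro:connection:corrector:testing}, but instantiating its abstract sampling corrector with the concrete \emph{oblivious} corrector for monotonicity constructed in \autoref{sec:monotonicity:oblivious}. The two other ingredients remain the same as in \autoref{coro:connection:corrector:testing}: namely, the $k$-modal distance estimator of \cite{DDSVV:13}, of sample complexity $\bigO{k^2/\beta^4 + k\log n/(\beta^4\log(k\log n))}$ for additive accuracy $\beta$, and the $k$-modal (non-tolerant) monotonicity tester of \cite{DDS:12}, of sample complexity $\bigO{k/\beta^2}$.

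The specificity of this corollary relative to \autoref{coro:connection:corrector:testing} lies in unpacking the parameters of the oblivious corrector. This corrector only accepts inputs at distance $\eps^\prime \leq \bigO{1/\log^2 n}$ from monotone; moreover, the distance $\eps_1$ at which it outputs the corrected distribution $\tilde{\D}$ depends on $\eps^\prime$ through the Birg\'e flattening width and the weight of the added slowly decreasing mixture. For \autoref{theo:connection:corrector:testing} to deliver a clean separation between ``$\eps^\prime$-close'' and ``$\eps$-far'', the proof of that theorem requires $\eps_1 + 2\beta < \eps$ with $\beta = (\eps-\eps^\prime)/4$, hence roughly $\eps_1 \leq \eps/2$. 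Resolving this constraint against the $\eps_1$-versus-$\eps^\prime$ trade-off of the oblivious corrector pins down the admissible tolerance to $\eps^\prime = \bigO{\eps^3/\log^2 n}$, which is exactly the close-case hypothesis in the statement.

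Once $\eps^\prime$ and $\beta = \Theta(\eps)$ have been fixed this way, the sample complexity bookkeeping is immediate. The corrector must supply enough output samples for both the estimator and the tester, for a total of
\[
  \bigO{\frac{k^2}{\eps^4} + \frac{k\log n}{\eps^4\log(k\log n)} + \frac{k}{\eps^2}} \;=\; \bigO{\frac{k^2}{\eps^4} + \frac{k\log n}{\eps^4\log(k\log n)}}.
\]
Since the oblivious corrector uses $\bigO{1}$ samples from $\D$ per output sample, the total sample complexity drawn from $\D$ matches this bound, yielding the first announced estimate. The specialization to $k = \bigO{\log n/\log\log n}$ follows from the fact that the $k^2/\eps^4$ term is then absorbed by the second, and that $\log(k\log n) = \Theta(\log\log n)$ in that regime.

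The only genuinely non-routine step is the parameter tracking carried out in the second paragraph: identifying the correct tolerance $\bigO{\eps^3/\log^2 n}$, rather than the more optimistic $\bigO{1/\log^2 n}$ one might expect from the input-side constraint of the corrector alone, requires peeking inside the construction of \autoref{sec:monotonicity:oblivious} to relate $\eps_1$ back to $\eps^\prime$ instead of treating the corrector as a pure black box. Every other step is a direct plug-in into \autoref{theo:connection:corrector:testing}, exactly parallel to the derivation of \autoref{coro:connection:corrector:testing}.
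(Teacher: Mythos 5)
Your proposal is correct and follows exactly the paper's route: instantiate \autoref{theo:connection:corrector:testing} with the oblivious corrector of \autoref{lemma:oblivious:corrector:monotonicity:weak} (whose parameters $\eps=\bigO{{\eps^\prime}^3/\log^2 n}$, $\eps_1=\eps^\prime$ are what force the $\bigO{\eps^3/\log^2 n}$ tolerance once $\eps_1=\Theta(\eps)$ is imposed), together with the $k$-modal distance estimator of \cite{DDSVV:13} and the tester of \cite{DDS:12}, exactly as in \autoref{coro:connection:corrector:testing}. The only cosmetic remark is that the $\eps_1$-versus-$\eps^\prime$ trade-off is already stated explicitly in \autoref{lemma:oblivious:corrector:monotonicity:weak}, so no ``peeking inside'' the construction is actually needed.
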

\begin{proof}
 We again use the distance estimator of~\cite{DDSVV:13} and the monotonicity tester of~\cite{DDS:12}, which both apply to  the class of $k$-modal distributions, this time with the monotonicity corrector we describe in \autoref{lemma:oblivious:corrector:monotonicity:weak}, which works for any $\eps_1$ and $\eps=\bigO{\eps_1^3/\log^2 n}$ and has constant-rate sample complexity (that is, it takes $\bigO{q}$ samples from the original distribution to output $q$ samples). Similarly to \autoref{coro:connection:corrector:testing}, the sample complexity is a straightforward application of \autoref{theo:connection:corrector:testing}.
\end{proof}

Note that, to the best of our knowledge, no tolerant tester for 
monotonicity of $k$-modal distributions was previously known, 
though using the (regular) $\bigO{k/\eps^2}$-sample tester 
of~\cite{DDS:12} and standard arguments, one can achieve
a weak tolerance on the order of $\bigO{\eps^2/k}$. 
While the sample complexity obtained
in~\autoref{coro:connection:corrector:testing:oblivious} is 
worse by a $\poly\!\log(n)$ factor, 
it has better tolerance for $k=\Omega(\log^2 n/\eps)$.
 
\section{Sample complexity of correcting monotonicity}\label{sec:focus:samples}
	In this section, we focus on the sample complexity aspect of correcting, considering the specific example of monotonicity correction. As a first result, we show in~\autoref{sec:monotonicity} how to design a simple batch corrector for monotonicity which, after a preprocessing step costing logarithmically many samples, is able to answer an arbitrary number of queries. This corrector follows the ``learning approach'' described in~\autoref{ssec:connection:learning:approach}, and in particular provides a very efficient way to amortize the cost of making many queries to a corrected distribution.

A natural question is then whether one can ``beat'' this approach, and correct the distribution without approximating it as a whole beforehand. \autoref{sec:monotonicity:oblivious} answers it by the affirmative: namely, we show that one can correct distributions that are guaranteed to be $(1/\log^2 n)$-close to monotone in a completely \emph{oblivious} fashion, with a non-adaptive approach that does not require to learn anything about the distribution.

Finally, we give in \autoref{sec:monotonicity:cdf} a corrector for monotonicity with no restriction on the range of parameters, but assuming a stronger type of query access to the original distribution. \newerest{Specifically}, our algorithm leverages the ability to make \emph{\cdf queries} to the distribution $\D$, in order to generate independent samples from a corrected $\tilde{\D}$. This sampling corrector also outperforms the one from \autoref{sec:monotonicity}, making only $\bigO{\sqrt{\log n}}$ queries per sample on expectation.

\new{\paragraph{A parenthesis: non-proper correcting.} We note that it is easy to obtain a non-proper corrector for $k(n,\eps)$-histograms assuming monotonicity with \emph{constant} sample complexity, for $k(n,\eps) = \bigTheta{\frac{\log n}{\eps}}$. Indeed, this follows from the oblivious Birg\'e decomposition \newer{(see~\autoref{def:birge:decomposition})} we shall be using many times through this section, which ensures that ``flattening'' a monotone distribution yields a $k(n,\eps)$-histogram that remains close to the original distribution.}

\subsection{A natural approach: correcting by learning}\label{sec:monotonicity}
Our first corrector works in a straightforward fashion: 
it \emph{learns} a good approximation of the distribution to correct,
which is also concisely represented.   
It then uses this approximation to 
build a sufficiently good monotone distribution $M^\prime$ ``offline,''
by searching for the closest monotone distribution, which in
this case can be achieved via linear programming.
Any query made to the corrector is then answered according to the latter distribution, at no additional cost.
\begin{restatable}[Correcting by learning]{lemma}{monotonicitylearningcorrector} \label{lemma:learning:corrector:birge}
  Fix any constant $c>0$. For any $\eps$, $\eps_1 \geq (3+c)\eps$ and $\eps_2=0$ as in the definition, any type of oracle \ORACLE and any number of queries $m$, there exists a sampling corrector for monotonicity from \new{sampling} to \ORACLE with sample complexity $\bigO{\log n/\eps^3}$.
\end{restatable}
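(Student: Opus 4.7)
The plan is to follow the learning-based approach previewed before the statement: spend the sample budget once, in a preprocessing phase, to learn a concise approximation of $\D$; then project this approximation onto the set of monotone distributions to obtain an explicit distribution $M'$; then answer every subsequent query (of whichever oracle type $\ORACLE$) from the stored description of $M'$ at no additional cost in samples. Since $M'$ is stored in full, supporting any oracle type and any number $m$ of queries is automatic once $M'$ has been produced.

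For the preprocessing phase, I would learn the Birgé flattening $\birge{\alpha}$ of $\D$ (for a parameter $\alpha$ to be set later). The decomposition $\mathcal{I}_\alpha$ contains only $\ell=\bigO{\log n/\alpha}$ intervals and is \emph{oblivious}, so drawing $m=\bigO{\ell/\beta^2}$ samples from $\D$ and bucketing them produces, with constant probability, an estimator $\hat{P}$ (piecewise constant on $\mathcal{I}_\alpha$) such that $\totalvardist{\hat{P}}{\birge{\alpha}}\leq \beta$, by the standard sample complexity of learning a distribution over $\ell$ items in total variation distance. Combining this with \autoref{coro:Birge:decomposition:robust}, which says $\totalvardist{\D}{\birge{\alpha}}\leq 2\eps+\alpha$, the triangle inequality gives $\totalvardist{\D}{\hat{P}}\leq 2\eps+\alpha+\beta$.

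The projection step is then purely offline. I would restrict attention to distributions $M'$ that are monotone \emph{and} piecewise constant on $\mathcal{I}_\alpha$: this set is parametrized by the interval weights $(q_k)_{k=1}^{\ell}$ subject to the linear constraints $q_k\geq 0$, $\sum_k q_k=1$, and $q_k/\abs{I_k}\geq q_{k+1}/\abs{I_{k+1}}$, and the TV distance to $\hat{P}$ simplifies to $\tfrac{1}{2}\sum_k \abs{q_k-\hat{P}(I_k)}$, so the closest such $M'$ can be found by a linear program of size $\poly(\ell)$. The key observation justifying this restriction is that the flattening of a monotone distribution is itself monotone (and piecewise constant on $\mathcal{I}_\alpha$): hence, if $M^\ast$ denotes a monotone distribution $\eps$-close to $\D$, then $\birge[M^\ast]{\alpha}$ is a feasible candidate, which by \eqref{eq:Birge:tv} satisfies $\totalvardist{\birge[M^\ast]{\alpha}}{\birge{\alpha}}\leq\eps$, and therefore $\totalvardist{\birge[M^\ast]{\alpha}}{\hat{P}}\leq\eps+\beta$. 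This ensures $\totalvardist{\hat{P}}{M'}\leq\eps+\beta$, and a final triangle inequality yields $\totalvardist{\D}{M'}\leq 3\eps+\alpha+2\beta$.

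It then suffices to choose $\alpha=\beta=c\eps/3$, which makes $\totalvardist{\D}{M'}\leq(3+c)\eps\leq\eps_1$ and gives a sample complexity of $\bigO{\ell/\beta^2}=\bigO{\log n/\eps^3}$ (absorbing the constant $c$). Once $M'$ is stored, a sample is produced by first drawing an interval $I_k$ with probability $q_k$ and then a uniform element inside it, and any other oracle query (evaluation, \cdf, etc.) is answered directly from the $q_k$'s, using zero extra samples from $\D$. The only mildly delicate step, and the one I would be most careful about in the full write-up, is the projection: one has to make sure the LP is feasible and that restricting to Birgé-flattened candidates does not lose more than the single extra $\alpha$ factor already absorbed through \autoref{coro:Birge:decomposition:robust}; the fact that flattening commutes with monotonicity and can only decrease distances is exactly what makes this work.
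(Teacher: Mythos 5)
Your proposal is correct and follows essentially the same route as the paper: learn the Birgé flattening with $\bigO{\log n/\eps^3}$ samples, project onto monotone histograms on $\mathcal{I}_\alpha$ via a linear program (whose feasibility is witnessed, exactly as you note, by the flattening of the closest monotone distribution), and chain triangle inequalities to get $(3+c)\eps$. Your explicit tracking of the two error parameters $\alpha$ and $\beta$ is if anything slightly cleaner than the paper's bookkeeping.
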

\begin{proof}  Consider the Birg\'e decomposition $\mathcal{I}_\alpha=(I_1,\dots,I_\ell)$ with parameter $\alpha\eqdef\frac{c \eps}{3}$ which partitions the domain $[n]$ into $\bigO{\frac{\log n}{\eps}}$ intervals.
  By \autoref{coro:Birge:decomposition:robust} and the learning result of \cite{Birge:87}, we can learn with $\bigO{\frac{\log n}{\eps^3}}$ samples a $\bigO{\frac{\log n}{\eps}}$-histogram $\bar{\D}$ such that:
   \begin{equation}
   \totalvardist{\D}{\bar{\D}}\leq 2\eps+\alpha.
\end{equation}    
  Also, let $M$ be the closest monotone distribution to $\D$. From Eq.~\eqref{eq:Birge:tv}, we get the following: letting $\mathcal{M}$ denote the set of monotone distributions,   
 \begin{equation}
 \totalvardist{\bar{\D}}{\mathcal{M}}=\totalvardist{\Phi_{\alpha}(\D)}{\mathcal{M}}\leq \totalvardist{\Phi_{\alpha}(\D)}{\Phi_{\alpha}(M)}\leq \totalvardist{\D}{M}\leq \eps 
 \end{equation}
  where the first inequality follows from the fact that $\Phi_{\eps}(M)$ is monotone. Thus, $\bar{\D}$ is $\eps$-close to monotone, which implies that $\bar{\D}^\prime$ is $(\eps+\alpha)$-close to monotone. Furthermore, it is easy to see that, without loss of generality, one can assume the closest monotone distribution $\bar{\D}^\prime$ to be piecewise constant with relation to the same partition (e.g., using again Eq.~\eqref{eq:Birge:tv}). It is therefore sufficient to find such a piecewise constant distribution: to do so, consider the following linear program which finds exactly this: a monotone $M^\prime$, closest to $\bar{\D}^\prime$ and piecewise constant on $\mathcal{I}_\alpha$:
\begin{alignat*}{2}
    \text{minimize }   & \sum_{j=1}^\ell \abs{ x_j - \frac{\bar{\D}^\prime(I_j)}{\abs{ I_j }} } \cdot \abs{ I_j }    \\
    \text{subject to } & 1\geq x_1\geq x_2 \geq \dots \geq x_l \geq 0 & \\
                       &  \sum_{j=1}^\ell x_j\abs{ I_j } =1
  \end{alignat*}  
\noindent This linear program has $\bigO{\frac{\log n}{\eps}}$ variables and so it can be solved in time $\poly(\log n,\frac{1}{\eps})$ .  

After finding a solution $(x_j)_{j\in[\ell]}$ to this linear program,\footnote{To see why a good solution always exists, consider the closest monotone distribution to $\bar{\D}$, and apply $\Phi_{\alpha}$ to it. This distribution satisfies all the constraints.}  we define the distribution $M^\prime\colon[n]\to [0,1]$ as follows: $M^\prime(i)=x_{\operatorname{ind}(i)}$, where $\operatorname{ind}(i)$ is the index of the interval of $\mathcal{I}_\alpha$ which $i$ belongs to. 
This implies that
\begin{equation*}
\totalvardist{\bar{\D}^\prime}{M^\prime}\leq \eps+\alpha
\end{equation*}
and by the triangle inequality we finally get:
\[ \totalvardist{\D}{M^\ast}\leq \totalvardist{\D}{\bar{\D}}+\totalvardist{\bar{\D}}{\bar{\D}^\prime}+\totalvardist{\bar{\D}^\prime}{M^\prime}\leq 3\eps+3\alpha = (3+c)\eps. \]
\end{proof}

\subsection{Oblivious correcting of distributions which are very close to monotone}\label{sec:monotonicity:oblivious}

We now turn to our second monotonicity corrector, which achieves constant sample complexity for distributions already $(1/\log^2 n)$-close to monotone. \newer{Note that this is a very strong assumption, as if one draws less than $\log^2 n$ samples one does not expect to see any difference between such a distribution $\D$ and its closest monotone distribution. Still, our construction actually yields a stronger guarantee: namely, given \emph{evaluation (query)} access to $\D$, it can answer evaluation queries to the corrected distribution as well.} See~\autoref{rk:oblivious:stronger:queries} for a more detailed statement.

The high-level idea is to treat the distribution as a $k$-histogram on the Birg\'e decomposition (for $k=\bigO{\log n}$), thus ``implicitly approximating'' it; and to correct this histogram by adding a certain amount of probability weight to every interval, so that each gets slightly more than the next one. By choosing these quantities carefully, this ensures that \emph{any} violation of monotonicity gets corrected in the process, without ever having to find out \emph{where} they actually occur.\medskip

We start by stating the general correcting approach for general $k$-histograms satisfying a certain property (namely, the ratio between two consecutive intervals is constant).
\begin{restatable}{lemma}{monotonicityobliviouscorrector}\label{theorem:oblivious:corrector}
Let $\mathcal{I}=(I_1,\dots,I_{k})$ be a decomposition of $[n]$ in consecutive intervals such that $\abs{I_{j+1}}/\abs{I_j} = 1+c$ for all $j$, and $\D$ be a $k$-histogram distribution on $\mathcal{I}$ that is $\eps$-close to monotone. Then, there is a monotone distribution $\tilde{\D}$ which can be sampled from in constant time given oracle access to $\D$, such that $\totalvardist{\D}{\tilde{\D}} = \bigO{\eps k^2}$.
Further, $\tilde{\D}$ is also a $k$-histogram distribution on $\mathcal{I}$.
\end{restatable}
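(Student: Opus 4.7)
My plan is to use an oblivious ``additive stairstep'' correction: add a pre-determined amount of probability mass to each interval $I_j$, with the amounts chosen to decrease in $j$ fast enough to dominate any monotonicity violation, then renormalize. Crucially, the added amounts will depend only on $\eps$, $k$, and the interval sizes, not on $\D$ itself.

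\textbf{Step 1: Pointwise bound on violations.} Let $p_j \eqdef \D(I_j)/\abs{I_j}$ denote the height of $\D$ on $I_j$. Let $M$ be a monotone distribution with $\totalvardist{\D}{M}\leq \eps$; by applying the flattening operator $\Phi$ with respect to $\mathcal{I}$ to $M$ (which preserves monotonicity and, by \eqref{eq:Birge:tv}, does not increase TV distance from the already-flat $\D$), we may assume $M$ is itself a $k$-histogram on $\mathcal{I}$ with heights $m_1\geq\dots\geq m_k$. Then $\sum_j \abs{p_j-m_j}\abs{I_j} \leq 2\eps$, so in particular $\abs{p_j-m_j}\leq 2\eps/\abs{I_j}$ for every $j$. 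Combined with $m_{j+1}\leq m_j$, this gives the key bound
\[
  p_{j+1}-p_j \;\leq\; \abs{p_{j+1}-m_{j+1}} + \abs{m_j-p_j} \;\leq\; \frac{2\eps}{\abs{I_{j+1}}} + \frac{2\eps}{\abs{I_j}} \;\leq\; \frac{4\eps}{\abs{I_j}}.
\]

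\textbf{Step 2: The oblivious correction.} Define weights $w_j \eqdef 4\eps(k-j+1)$ for $j\in[k]$, and let $W$ be the function that equals $w_j/\abs{I_j}$ on $I_j$. Define $\tilde{\D} \eqdef (\D+W)/Z$, where $Z \eqdef 1+W_{\text{tot}}$ and $W_{\text{tot}} \eqdef \sum_j w_j = 2\eps k(k+1)$. By construction $\tilde{\D}$ is a $k$-histogram on $\mathcal{I}$ with heights $\tilde{p}_j = (p_j + w_j/\abs{I_j})/Z$.

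\textbf{Step 3: Monotonicity check.} Using $\abs{I_{j+1}} = (1+c)\abs{I_j}$, a direct calculation gives
\[
  \frac{w_j}{\abs{I_j}} - \frac{w_{j+1}}{\abs{I_{j+1}}} = \frac{4\eps}{(1+c)\abs{I_j}}\bigl((k-j)c + (1+c)\bigr) \;\geq\; \frac{4\eps}{\abs{I_j}} \;\geq\; p_{j+1}-p_j,
\]
which rearranges to $\tilde{p}_j\geq \tilde{p}_{j+1}$, so $\tilde{\D}$ is monotone.

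\textbf{Step 4: Closeness and sampling.} Since $\tilde{\D} - \D = (W - W_{\text{tot}}\D)/Z$ and both $W$ and $W_{\text{tot}}\D$ have total mass $W_{\text{tot}}$, a triangle inequality yields $\totalvardist{\D}{\tilde{\D}} \leq W_{\text{tot}}/Z \leq W_{\text{tot}} = \bigO{\eps k^2}$. For sampling, I realize $\tilde{\D}$ as the mixture $\frac{1}{Z}\D + \frac{W_{\text{tot}}}{Z} W'$, where $W' \eqdef W/W_{\text{tot}}$ is a fixed monotone $k$-histogram that can be preprocessed into an alias table once and sampled from in $\bigO{1}$ time. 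Each query to $\tilde{\D}$ then flips a biased coin, and on one outcome draws a single sample from $\D$; otherwise it draws from $W'$ without touching $\D$.

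The main obstacle (and why the bound is $\bigO{\eps k^2}$ rather than $\bigO{\eps k}$) is the lack of any information about \emph{where} the violations occur. Because we must pre-pay for the worst-case violation $4\eps/\abs{I_j}$ at \emph{every} interval simultaneously while maintaining a monotonically decreasing added profile $w_j/\abs{I_j}$, the weights must form a descending staircase $w_j = \Theta(\eps(k-j+1))$, whose total mass is $\Theta(\eps k^2)$. Any attempt at a lighter correction would require knowing the location of the violations, contradicting the obliviousness requirement.
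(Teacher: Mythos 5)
Your proof is correct and follows essentially the same strategy as the paper's: bound the possible monotonicity violation between consecutive intervals using the $\eps$-closeness to a monotone $k$-histogram, then obliviously add an arithmetically decreasing staircase of total mass $\Theta(\eps k^2)$ and renormalize, realizing the result as a mixture of $\D$ with a fixed explicit distribution. The only (minor) difference is that you derive the per-pair violation bound via the pointwise estimate $\abs{p_j-m_j}\leq 2\eps/\abs{I_j}$, whereas the paper argues via the cost of uniformizing $\D$ on $I_j\cup I_{j+1}$; both yield the same $\Theta(\eps)$-per-pair budget.
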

\ifnum\fulldetails=1     \begin{proof} We will argue that no interval can have significantly more total weight than the previous one, as it would otherwise contradict the bound on the closeness to monotonicity. This bound on the ``jump'' between two consecutive intervals enables us to define a new distribution $\hat{\D}$ which is a mixture of $\D$ with an arithmetically decreasing $k$-histogram (which only depends on $\eps$ and $k$); it can be shown that for the proper choice of parameters, $\hat{\D}$ is now monotone.\smallskip

    We start with the following claim, which leverages the distance to monotonicity in order to give a bound on the total violation between two consecutive intervals of the partition:
    \begin{claim}\label{claim:oblivious:monotone:guarantee:buckets}
      Let $\D$ be a $k$-histogram distribution on $\mathcal{I}$ that is $\eps$-close to monotone. Then, for any $j\in\{1,\dots,k-1\}$,
      \begin{equation}\label{eq:oblivious:monotone:guarantee:buckets}
	      D(I_{j+1}) \leq (1+c)D(I_{j})+\eps(2+c).
      \end{equation}
    \end{claim}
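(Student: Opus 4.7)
The plan is to exploit the closeness to monotonicity by comparing the masses of $\D$ and a nearby monotone distribution on the two consecutive intervals $I_j$ and $I_{j+1}$. Specifically, I would fix a monotone (non-increasing) distribution $M$ achieving $\totalvardist{\D}{M} \leq \eps$, which exists by the hypothesis.

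The first step is to translate monotonicity of $M$ into a ratio inequality on interval masses. Since $M$ is non-increasing and the intervals $I_j,I_{j+1}$ are consecutive, every value of $M$ on $I_{j+1}$ is at most every value of $M$ on $I_j$. Hence the average density satisfies $M(I_j)/|I_j| \geq \min_{I_j} M \geq \max_{I_{j+1}} M \geq M(I_{j+1})/|I_{j+1}|$, and using $|I_{j+1}|/|I_j|=1+c$ this rearranges to
\[
M(I_{j+1}) \;\leq\; (1+c)\,M(I_j).
\]

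Second, I would use the total variation bound to replace $M$-masses by $\D$-masses. By definition of $\dtv$, for any set $S$ we have $|\D(S)-M(S)|\leq \eps$; applying this with $S=I_j$ and $S=I_{j+1}$ yields $M(I_j) \leq \D(I_j)+\eps$ and $M(I_{j+1}) \geq \D(I_{j+1})-\eps$.

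Chaining the three inequalities gives
\[
\D(I_{j+1})-\eps \;\leq\; M(I_{j+1}) \;\leq\; (1+c)\,M(I_j) \;\leq\; (1+c)\bigl(\D(I_j)+\eps\bigr),
\]
and rearranging produces exactly $\D(I_{j+1}) \leq (1+c)\D(I_j) + (2+c)\eps$, as required. I do not anticipate a significant obstacle: the only subtlety is the direction of the inequalities (which $\eps$ to add versus subtract), and making sure to use $|\D(S)-M(S)|\leq \eps$ rather than a weaker $\ell_1$-style bound with an extra factor of 2. The claim does not use the $k$-histogram structure of $\D$ beyond the fact that $\D$ assigns a well-defined mass to each $I_j$, so no additional machinery is required.
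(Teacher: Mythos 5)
Your proof is correct. The key chain
\[
\D(I_{j+1})-\eps \;\leq\; M(I_{j+1}) \;\leq\; (1+c)M(I_j) \;\leq\; (1+c)\bigl(\D(I_j)+\eps\bigr)
\]
is valid: the middle inequality follows from $M(I_{j+1})/\abs{I_{j+1}} \leq \max_{I_{j+1}} M \leq \min_{I_j} M \leq M(I_j)/\abs{I_j}$ together with $\abs{I_{j+1}}=(1+c)\abs{I_j}$, and the outer ones from $\abs{\D(S)-M(S)}\leq\totalvardist{\D}{M}\leq\eps$ applied to $S=I_j$ and $S=I_{j+1}$. However, your route is genuinely different from the paper's. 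The paper argues by contradiction: assuming the inequality fails, it exhibits a density violation between the averages on $I_j$ and $I_{j+1}$, and then lower-bounds the total variation cost of repairing it by the cost of uniformizing $\D$ on $I_j\cup I_{j+1}$, computing that this cost alone exceeds $\eps$. That argument leans on the $k$-histogram structure of $\D$ (and on reducing the closest monotone distribution to a histogram via the data-processing inequality), and it requires justifying that uniformization is the cheapest repair -- a point the paper only sketches in a footnote. Your direct argument sidesteps that optimality claim entirely, is shorter, and in fact proves the stronger statement that the bound holds for \emph{any} distribution $\eps$-close to monotone, histogram or not; both arguments yield the same constant $(2+c)\eps$. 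The one pedantic point to note is that the closest monotone distribution exists (the set of monotone distributions is compact), or else one works with an $(\eps+\delta)$-close $M$ and lets $\delta\to 0$.
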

    \begin{proof}
    First, observe that without loss of generality, one can assume the monotone distribution closest to $\D$ to be a $k$-histogram on $\mathcal{I}$ as well (e.g., by a direct application of \autoref{lemma:data:processing:inequality:total:variation} to the flattening on $\mathcal{I}$ of the monotone distribution closest to $\D$). Assume there exists an index $j\in\{1,\dots,k-1\}$ contradicting \eqref{eq:oblivious:monotone:guarantee:buckets}; then,
    \[
	    \frac{\D(I_{j+1})}{\abs{I_{j+1}}} > (1+c)\frac{\D(I_{j})}{\abs{I_{j}}}\cdot\frac{\abs{I_{j}}}{\abs{I_{j+1}}} + \eps\frac{ 2+c }{ \abs{I_{j+1}}  } = \frac{\D(I_{j})}{\abs{I_{j}}} +   \eps\frac{ 2+c }{ \abs{I_{j+1}}  }.
    \]
    But any  monotone distribution $M$ which is a $k$-histogram on $\mathcal{I}$ must satisfy $\frac{M(I_{j+1})}{\abs{I_{j+1}}} \leq \frac{M(I_{j})}{\abs{I_{j}}}$; so that at least  $\eps(2+c)$ total \new{weight} has to be ``redistributed'' to fix this violation.     Indeed, it is not hard to see\footnote{E.g., by writing the $\lp[1]$ cost as the sum of the weight added/removed from ``outside'' the two buckets and the \new{weight} moved between the two buckets in order to satisfy the monotonicity condition, and minimizing this function.}
     that the minimum amount of probability weight to ``move'' in order to do so is at least what is needed to uniformize $\D$ on $I_j$ and $I_{j+1}$. 
    This latter process yields a distribution $\D^\prime$  
    which puts weight $(\D(I_j)+\D(I_{j+1}))/((2+c)\abs{I_j})$
    on each element of $I_j\cup I_{j+1}$, and the total variation distance between $\D$ and $\D^\prime$ (a lower bound on its distance to monotonicity) is then
    \[
	    \totalvardist{\D}{\D^\prime}=\frac{\D(I_{j+1})+\D(I_j)}{2+c}-\D(I_j) = \frac{\D(I_{j+1})-(1+c)\D(I_j)}{2+c} > \frac{\eps(2+c)}{2+c}=\eps
    \]
    which is a contradiction.
    \end{proof}
    This suggests immediately the following correcting scheme: to output samples according to $\tilde{\D}$, $k$-histogram on $\mathcal{I}$ defined by
    \begin{align*}
      \tilde{\D}(I_k) &= \lambda\left( \D(I_k)  \right) \\ 
      \tilde{\D}(I_{k-1}) &= \lambda\left( \D(I_{k-1}) + (2+c)\eps  \right) \\ 
      &\vdots\\
      \tilde{\D}(I_{k-j}) &= \lambda\left( \D(I_{k-j}) + j(2+c)\eps  \right) \\ 
    \end{align*}
    that is
    \begin{align*}
      \tilde{\D}(I_j) &= \lambda\left( \D(I_j) + \eps\sum_{i=j}^{k-1}\left(1+\frac{\abs{I_{j+1}}}{\abs{I_j}}\right)  \right) \qquad 1\leq j\leq k
    \end{align*}
    where the normalizing factor is $\lambda\eqdef \left( 1+\eps(2+c)\frac{k(k-1)}{2} \right)^{-1}$. As, 
    by \autoref{claim:oblivious:monotone:guarantee:buckets}, 
    adding weight decreasing by $(2+c)\eps$ at each step fixes any pair of adjacent intervals
    whose average weights are not monotone,
    $\tilde{\D}/\lambda$ is a non-increasing non-negative function. The normalization by $\lambda$ preserving the monotonicity, $\tilde{\D}$ is indeed a monotone distribution, as claimed.

    \noindent It only remains to bound $\totalvardist{\D}{\tilde{\D}}$:
    \begin{align*}
      2\totalvardist{\D}{\tilde{\D}} &= \sum_{j=1}^{k} \abs{ \D(I_j) - \tilde{\D}(I_j) } = \sum_{j=1}^{k} \abs{ (1-\lambda)D(I_j) - \lambda\eps \sum_{i=j}^{k-1} (2+c) } \\
      &\leq (1-\lambda)\sum_{j=1}^{k} \D(I_j) + \lambda\eps \sum_{j=1}^{k}\sum_{i=j}^{k-1}(2+c) = 1-\frac{1-\eps(2+c)\frac{k(k-1)}{2}}{1+\eps(2+c)\frac{k(k-1)}{2}  }.
    \end{align*}
    Finally, note that $\tilde{\D}$ is a mixture of $\D$ (with weight $\lambda$) and an explicit arithmetically non-increasing distribution; sampling from $\tilde{\D}$ is thus straightforward, and needs at most one sample from $\D$ for each draw.
    \end{proof}
    \begin{remark}
    The above scheme can be easily adapted to the case where the ratio between consecutive intervals is not always the same, but is instead $\abs{I_{j+1}}/\abs{I_j} = 1+c_j$ for some known $c_j\in[C_1,C_2]$; the result 
    then depends on the ratio  $C_2/C_1=\bigTheta{1}$ as well.
    \end{remark}

\else     \begin{proof}[Proof idea]
      We first argue that no interval can have significantly more total weight than the previous one, as it would otherwise contradict the bound on the closeness to monotonicity. This bound on the ``jump'' between two consecutive intervals enables us to define a new distribution $\hat{\D}$ which is a mixture of $\D$ with an arithmetically decreasing $k$-histogram (which only depends on $\eps$ and $k$); it can be shown that for the proper choice of parameters, $\hat{\D}$ is now monotone. Full details can be found in \autoref{appendix:misc:proofs:monotonicity}.
    \end{proof}
\fi 
As a direct corollary, this describes how to correct distributions which are promised to be (very) close to monotone, in a completely \emph{oblivious} fashion:  that is, the behavior of the corrector does not depend on what the input distribution is; furthermore, the probability of failure is null (i.e.,  $\delta=0$). 
\begin{corollary}[Oblivious correcting of monotonicity]\label{lemma:oblivious:corrector:monotonicity:weak}
For every $\eps^\prime\in(0,1)$, there exists an (oblivious) sampling corrector for monotonicity, with parameters $\eps = \bigO{{\eps^\prime}^3/\log^2 n}$, $\eps_1=\eps^\prime$ and sample complexity $\bigO{1}$.
\end{corollary}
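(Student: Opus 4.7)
The plan is to combine \autoref{theorem:oblivious:corrector} with the Birg\'e decomposition to turn the assumption ``close to monotone'' into the structured ``close-to-monotone $k$-histogram'' setting required by the lemma. Concretely, fix the Birg\'e decomposition $\mathcal{I}_\alpha=(I_1,\dots,I_\ell)$ for a parameter $\alpha$ to be set momentarily. By \autoref{coro:Birge:decomposition:robust}, the flattening $\Phi_\alpha(\D)$ is still $\eps$-close to monotone, is a $k$-histogram on $\mathcal{I}_\alpha$ with $k=\bigO{\log n/\alpha}$, and satisfies $\totalvardist{\D}{\Phi_\alpha(\D)} \leq 2\eps + \alpha$. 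Since the intervals of $\mathcal{I}_\alpha$ have lengths $\lfloor (1+\alpha)^k\rfloor$, successive ratios $\abs{I_{j+1}}/\abs{I_j}$ lie in a constant-factor range around $1+\alpha$, so \autoref{theorem:oblivious:corrector} applies (using the remark following its proof that allows the ratio $c$ to vary within constants).

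Apply the oblivious corrector of \autoref{theorem:oblivious:corrector} to $\Phi_\alpha(\D)$ with error parameter $\eps$ (the distance from $\Phi_\alpha(\D)$ to monotone) to obtain a monotone $k$-histogram $\tilde{\D}$ on $\mathcal{I}_\alpha$ with
\[
    \totalvardist{\Phi_\alpha(\D)}{\tilde{\D}} = \bigO{\eps k^2} = \bigO{\eps\,\frac{\log^2 n}{\alpha^2}}.
\]
By the triangle inequality,
\[
    \totalvardist{\D}{\tilde{\D}} \leq \totalvardist{\D}{\Phi_\alpha(\D)} + \totalvardist{\Phi_\alpha(\D)}{\tilde{\D}} \leq 2\eps + \alpha + \bigO{\eps\,\frac{\log^2 n}{\alpha^2}}.
\]
Setting $\alpha \eqdef \eps^\prime/4$ makes the first two terms sum to at most $\eps^\prime/2$ provided $\eps\leq \eps^\prime/16$, and requiring $\eps = \bigO{{\eps^\prime}^3/\log^2 n}$ with a sufficiently small hidden constant forces the last term to be at most $\eps^\prime/2$ as well. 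Hence $\totalvardist{\D}{\tilde{\D}} \leq \eps^\prime$ and $\tilde{\D}$ is monotone, as required.

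It remains to argue the sample complexity and obliviousness. A single sample from $\Phi_\alpha(\D)$ can be simulated using \emph{one} sample $x\sim\D$: identify the Birg\'e interval $I_j\ni x$ (a deterministic, data-independent lookup) and output a uniformly random element of $I_j$ using private randomness. By \autoref{theorem:oblivious:corrector}, sampling $\tilde{\D}$ reduces to either drawing one sample from $\Phi_\alpha(\D)$ or from an explicit arithmetically-decreasing $k$-histogram over $\mathcal{I}_\alpha$ (which requires no samples from $\D$); thus each output sample costs at most one sample from $\D$, giving sample complexity $\bigO{1}$. Finally, the whole construction is oblivious: the partition $\mathcal{I}_\alpha$, the correcting mixture weights, and the sampling procedure depend only on $n$, $\eps^\prime$ (through $\alpha$ and $\eps$), and never inspect properties of $\D$; the algorithm never fails, so $\delta=0$. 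The only tuning step is the choice of $\alpha$, which was the point where the cubic loss $\eps = \bigO{{\eps^\prime}^3/\log^2 n}$ arises and which I would expect to be the one place meriting care.
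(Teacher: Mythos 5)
Your proof is correct and follows essentially the same route as the paper's: flatten via the Birg\'e decomposition, invoke \autoref{theorem:oblivious:corrector} on the resulting $k$-histogram, combine with the triangle inequality, and observe that each output sample costs at most one draw from $\D$. The only cosmetic differences are your choice $\alpha=\eps'/4$ versus the paper's $\eps'/2$ and your (welcome) explicit note that the floor in $\abs{I_k}=\flr{(1+\alpha)^k}$ requires the remark allowing the ratio to vary within constants.
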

\begin{proof}
We will apply \autoref{theorem:oblivious:corrector} for $k=\bigO{\log n/\eps^\prime}$ and $\mathcal{I}$ being the corresponding Birg\'e decomposition (with parameter $\eps^\prime/2$). The idea is then to work with the ``flattening'' $\bar{\D}$ of $\D$: since $\D$ is \eps-close to monotone, it is also $(\eps^\prime/2)$-close, and $\bar{\D}$ is both $(\eps^\prime/2)$-close to $\D$ and \eps-close to monotone. Applying the correcting scheme with our value of $k$ and $c$ set to $\eps^\prime$, the corrected distribution $\tilde{\D}$ is monotone, and 
\[
	\totalvardist{\bar{\D}}{\tilde{\D}} \leq 1-\frac{1-\eps(2+\eps^\prime)\frac{k(k-1)}{2}}{1+\eps(2+\eps^\prime)\frac{k(k-1)}{2} } \leq \frac{\eps^\prime}{2}
\]
where the last inequality derives from the fact that $k^2\eps = \bigO{\eps^\prime}$. This in turn implies by a triangle inequality that $\tilde{\D}$ is $\eps^\prime$-close to $\D$. Finally, observe that, as stated in the lemma, $\tilde{\D}$ can be easily simulated given access to $\D$, using either $0$ or $1$ draw: indeed, $\tilde{\D}$ is a mixture with known weights of an explicit distribution and $\bar{\D}$, and access to the latter can be obtained from $\D$. \end{proof}

\begin{remark}\label{rk:oblivious:stronger:queries}
An interesting feature of the above construction is that does not
\emph{only} yields a $\bigO{1}$-query corrector from \new{sampling to sampling}: it
similarly implies a corrector from \ORACLE to \ORACLE with query complexity
$\bigO{1}$, for \ORACLE being (for instance) an evaluation or \Cdfsamp oracle (cf. \autoref{appendix:definitions}). 
This follows from the fact that the corrected distribution $\tilde{\D}$ is of the form $\tilde{\D} = \lambda\D+(1-\lambda)P$, where
both $\lambda$ and $P$ are fully known.
\end{remark}

\subsection{Correcting with \Cdfsamp{access}}\label{sec:monotonicity:cdf}
\newcommand{\supb}{S}

In this section we prove the following result, which shows that correcting monotonicity with $\littleO{\log n}$ queries (on expectation) is possible when one allows a stronger type of access to the original distribution. In particular, recall that in the \Cdfsamp model (as defined in \autoref{appendix:definitions}) the algorithm is allowed to make, in addition to the usual draws from the distribution, \emph{evaluation queries} to its cumulative distribution function.\footnote{We remark that our algorithm will in fact only use this latter type of access, and will not rely on its ability to draw samples from $\D$.}

\begin{theorem}\label{theo:samp:corrector:monotonicity:cdf}
For any $\eps \in (0,1]$, any number of queries $m$ and $\eps_1 = \bigO{\eps}$ as in the definition, there exists a sampling corrector for monotonicity from \Cdfsamp to \SAMP with \emph{expected} sample complexity $\bigO{\sqrt{m\log n/\eps}}$.
\end{theorem}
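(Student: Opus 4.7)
The plan is to adapt the ``learn-then-correct'' strategy of \autoref{lemma:learning:corrector:birge} to the \CDFEVAL setting by amortizing correction cost across the $m$ queries through a two-level bucketing scheme. First, apply the Birg\'e decomposition with parameter $\eps$ to get a partition $\mathcal{I}=(I_1,\dots,I_\ell)$ of $[n]$ with $\ell=\bigTheta{\log n /\eps}$ intervals, so that by \autoref{coro:Birge:decomposition:robust} the flattening $\birge[\D]{\eps}$ is $\bigO{\eps}$-close both to $\D$ and to the class of monotone distributions. It therefore suffices to simulate a monotone $\ell$-histogram $\tilde{\D}$ on $\mathcal{I}$ that is $\bigO{\eps}$-close to $\birge[\D]{\eps}$. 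To this end, group the $\ell$ intervals into $s\eqdef \Theta(\sqrt{m\ell})$ consecutive \emph{superbuckets} $\supb_1,\dots,\supb_s$ of roughly $\ell/s$ intervals each.

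The preprocessing phase uses $\bigO{s}$ \CDFEVAL queries to compute the exact weight $w_j\eqdef \D(\supb_j)$ of each superbucket, and then solves a small linear program (analogous to the one inside \autoref{lemma:learning:corrector:birge}, but with only $s$ variables) to obtain superbucket-level weights $(w_j^\ast)_{j=1}^s$ that are non-increasing and within $\bigO{\eps}$ of $(w_j)$ in $\lp[1]$ distance. In the same step, fix for every $j$ target boundary densities $\rho_j^{\mathrm{L}}\geq \rho_j^{\mathrm{R}}$ at the left and right ends of $\supb_j$, chosen so that $\rho_j^{\mathrm{R}}\geq \rho_{j+1}^{\mathrm{L}}$ and $\int_{\supb_j}$ matches $w_j^\ast$, together with a small reserve ``budget'' $b_j$ of probability mass earmarked for boundary reconciliation with $\supb_{j\pm 1}$.

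To answer each of the $m$ sample queries, draw a superbucket index $J$ according to $(w_j^\ast)$. The \emph{first} time each $\supb_J$ is visited, spend $\bigO{\ell/s}$ additional \CDFEVAL queries to retrieve the weights of the $\ell/s$ constituent intervals of $\supb_J$, and use them to compute (via the oblivious scheme of \autoref{theorem:oblivious:corrector} applied locally, or a small LP) a monotone histogram $\mu_J$ on $\supb_J$ whose density at the endpoints matches $\rho_J^{\mathrm{L}}$ and $\rho_J^{\mathrm{R}}$, absorbing any slack into the budget $b_J$. Cache $\mu_J$ and answer the current and any future queries falling in $\supb_J$ by drawing from the cached $\mu_J$; when a local correction would have to exceed $b_J$ (a low-probability event under the preprocessing guarantees), resample $J$ via rejection so that the induced global marginal remains exactly the monotone distribution $\tilde{\D}$ we are simulating.

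Correctness then follows by chaining triangle inequalities: $\totalvardist{\D}{\birge[\D]{\eps}}\leq \bigO{\eps}$ by \autoref{coro:Birge:decomposition:robust}, the coarse LP incurs an extra $\bigO{\eps}$, and the local schemes add another $\bigO{\eps}$, while monotonicity is guaranteed by construction (the $w_j^\ast$'s are monotone coarsely and the $\mu_J$'s agree at the boundaries). The expected query cost is $\bigO{s}$ for preprocessing, $\bigO{\ell/s}$ for each newly-visited superbucket (in expectation, at most $\min(m,s)$ such visits) and $\bigO{1}$ per query for rejection, totaling $\bigO{s + m\ell/s}=\bigO{\sqrt{m\ell}}=\bigO{\sqrt{m\log n/\eps}}$ upon setting $s=\Theta(\sqrt{m\ell})$. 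The hard part will be engineering the boundary densities and budgets $(\rho_j^{\mathrm{L}},\rho_j^{\mathrm{R}},b_j)$ during preprocessing so that the lazily-built $\mu_J$'s are simultaneously monotone within each $\supb_J$ and consistent across the superbucket boundaries with only $\bigO{1}$ expected rejections per query; this is precisely the role of the ``boundary correction'' subroutine alluded to in the introduction, and its analysis is the main technical content of the proof.
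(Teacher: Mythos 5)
Your high-level architecture matches the paper's: reduce to the Birg\'e flattening, group the $\ell=\Theta(\log n/\eps)$ buckets into $s=\Theta(\sqrt{m\ell})$ superbuckets, learn and LP-correct the coarse distribution in preprocessing, and defer inner corrections to query time at a cost of $\bigO{\ell/s}$ \cdf queries per visited superbucket; your complexity accounting $\bigO{s+m\ell/s}=\bigO{\sqrt{m\log n/\eps}}$ is the right one. But the gap is exactly where you flag it, and it is not a routine detail. Your plan pre-commits, at preprocessing time, to boundary densities $\rho_j^{\mathrm{L}}\geq\rho_j^{\mathrm{R}}$ with $\rho_j^{\mathrm{R}}\geq\rho_{j+1}^{\mathrm{L}}$ knowing only the superbucket totals $w_j^\ast$. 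Any commitment guaranteed feasible from the coarse data alone (e.g., forcing each $\mu_J$ to be flat with density $w_j^\ast/\abs{S_j}$) can force local distortions that are not controlled by the distance of $\D$ to monotonicity, so the chained $\bigO{\eps}$ bound breaks; conversely, choosing boundary densities close to the true endpoint densities requires inner-bucket information you have not yet paid for.

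The paper resolves this by not fixing boundary values in advance. When superbucket $S_{i+1}$ is drawn it also queries the $L$ buckets of the \emph{previous} superbucket $S_i$ (so $2L$ queries per draw), corrects each locally, and then runs an adaptive water-filling step that pours just enough surplus from the top of $S_{i+1}$ into the valleys at the right end of $S_i$ to remove the boundary violation. Each superbucket is given a budget $b_{j}=\D^{(2)}(S_j)/(1+\eps)$ added to its first bucket---large enough that the water-filling never runs dry, by monotonicity of the coarse averages and the $(1+\eps)$ growth of superbucket lengths---and the \emph{unused} part of the budget is returned by restarting the whole sampling process with the corresponding probability. This rejection step is what makes the output distribution $\tilde{\D}$ well-defined independently of the query sequence, and the distance analysis bounds the total budget actually consumed by $\bigO{\eps}$ by charging each boundary fix against a disjoint violation of monotonicity in the locally-corrected distribution. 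Your ``resample $J$ via rejection so the marginal remains exactly $\tilde{\D}$'' needs to be instantiated by something of this kind; as written, the proposal does not yet define the distribution being sampled, let alone bound its distance to $\D$.
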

\noindent In particular, since learning distributions in the \Cdfsamp model is easily seen to have query complexity $\bigTheta{\log n/\eps}$ (e.g., by considering the lower bound instance of~\cite{Birge:87}), the above corrector beats the ``learning approach'' as long as $m=\littleO{\log n/\eps}$.

\newer{
\begin{remark}
One may look at this ability to correct up to $\littleO{\log n/\eps}$ samples cautiously, as it is well-known that the lower bound for testing monotonicity of distributions is $\bigOmega{\sqrt{n}/\eps^2}$ already~\cite{BKR:04}. However, this lower bound only establishes a \emph{worst-case} indistinguishability: as pointed out in~\autoref{rk:samples:lb:pt}, for many ``typical'' distributions that are \eps-close to monotone, as few as $O(1/\eps)$ samples would be sufficient to detect the discrepancy from monotone (and compromise the correctness of any algorithm relying on these uncorrected samples). 
\end{remark}
}

\subsubsection{Overview and discussion}

A natural idea would be to first group the elements into consecutive intervals (the ``buckets''), and correct this distribution (now a histogram over these buckets) at two levels. That is, start by correcting it optimally at a coarse level (the ``superbuckets,'' each of them being a group of consecutive buckets); then, every time a sample has to be generated, draw a superbucket from this coarse distribution and correct at a finer level \emph{inside this superbucket}, before outputting a sample from the corrected local distribution \new{(i.e. conditional on the superbucket that was drawn and corrected)}.
While this approach seems tantalizing, the main \newer{difficulty} with it lies in the possible boundary violations between superbuckets: that is, even if the average weights of the superbuckets are non-increasing, and the distribution over buckets is non-decreasing inside each superbucket, it might still be the case that there are local violations between adjacent superbuckets. (I.e., the boundaries are bad.) A simple illustration is the sequence $\langle .5,.1,.3,.1 \rangle$, where the first ``superbucket'' is $(.5,.1)$ and the second $(.3,.1)$. The average weight is decreasing, and the sequence is locally decreasing inside each superbucket; yet overall the sequence is not monotone.\smallskip

\noindent Thus, we have to consider 3 kinds of violations:
\begin{enumerate}[(i)]
  \item\label{item:violation:1} global superbucket violations: the average weight of the superbuckets is not monotone.
  \item\label{item:violation:2} local bucket violations: the distribution of the buckets inside some superbucket is not monotone.
  \item\label{item:violation:3} \newer{superbucket boundary violations}: the probability of the last bucket of a superbucket is lower than the probability of the first bucket of the next superbucket.
\end{enumerate}

The ideas underlying our sampling corrector (which is granted both sampling and cumulative query access to the distribution, as defined in the \Cdfsamp access model) are quite simple: after reducing \textit{via} standard techniques the problem to that of correcting a \emph{histogram} supported of logarithmically many intervals (the ``Birg\'e decomposition''), we group these $\ell$ intervals in $K$ ``superbuckets,'' each containing $L$ consecutive intervals from that histogram (``buckets''). (As a guiding remark, our overall goal is to output samples from a corrected distribution using $o(\ell)$ queries, as otherwise we would already use enough queries to actually learn the distribution.) This two-level approach will allow us to keep most of the corrected distribution implicit, only figuring out (and paying queries for that) the portions from which we will ending up outputting samples.

By performing $K$ queries, we can exactly learn the coarse distribution on superbuckets, and correct it for monotonicity \newer{(optimally, e.g. by a linear program ensuring the average weights of the superbuckets are monotone)}, solving the issues of type~\ref{item:violation:1}. In order to fix the boundary violations~\ref{item:violation:3} on-the-go, the idea is to allocate to each superbucket an extra \emph{budget} of probability weight that \emph{can} be used for these boundary corrections. \newer{Importantly}, if this budget is not entirely used the sampling process restarts from the beginning with a probability corresponding with the remaining budget. This effectively ends up simulating a distribution where each superbucket was assigned an extra weight matching exactly what was needed for the correction, \emph{without having to figure out all these quantities beforehand} (as this would cost too many queries). 

Essentially, each superbucket is selected according to its ``potential weight,'' that includes both the actual probability weight it has and the extra budget it is allowed to use for corrections. Whenever a superbucket $S_i$ is selected this way, we first perform optimal local corrections of type \ref{item:violation:2} both on it and the previous superbucket $S_{i-1}$ \new{making a \cdf query at every boundary point between buckets in order to get the} weights of all $2L$ buckets they contain, and \new{then} computing the optimal fix: at this point, the distribution is monotone inside $S_i$ (and inside $S_{i-1}$). After this, we turn to the possible boundary violations of type \ref{item:violation:3} between $S_{i-1}$ and $S_{i}$, by ``pouring'' some of the weight from $S_{i}$'s budget to fill ``valleys'' in the last part of $S_{i-1}$. Once this water-filling has ended,\footnote{\newest{We borrow this graphic analogy with the process of pouring water from~\cite{ACCL:08}, which employs it in a different context (in order to bound the running time of an algorithm by a potential-based argument.).}} we may not have used all of $S_i$'s budget (but as we shall see we make sure we never run out of it): the remaining portion is thus redistributed to the whole distribution by restarting the sampling process from the beginning with the corresponding probability. Note that as soon as we know the weights of all $2L$ Birg\'e buckets, no more \cdf queries are needed to proceed.

\subsubsection{Preliminary steps (preprocessing)}\label{sec:mon:corr:prelim}

\begin{description}
  \item[First step: reducing $\D$ to a histogram.] 
Given \cdfsamp access (i.e., granting both \SAMP and cumulative distribution function (\cdf) query access) to an unknown distribution $\D$ over $[n]$ which is \eps-close to monotone, we can simulate \cdfsamp access to its Birg\'e flattening $\D^{(1)} \eqdef \birge[\D]{\eps}$, also \eps-close to monotone and $3\eps$-close to $\D$ (by~\autoref{coro:Birge:decomposition:robust}). For this reason, we hereafter work with $\D^{(1)}$ instead of $\D$, as it has the advantage of being an $\ell$-histogram for $\ell=\bigO{{\log n}/{\eps}}$. Because of this first reduction, it becomes sufficient to perform \cdf queries on the buckets (and not the individual elements of $[n]$), which
altogether entirely define $\D^{(1)}$.

  \item[Second step: global correcting of the superbuckets.] By making $K$ \cdf queries, we can figure out exactly the quantities $\D^{(1)}(\supb_1), \dots, \D^{(1)}(\supb_K)$. By running a linear program, we can re-weight them to obtain a distribution $\D^{(2)}$ such that \textsf{(a)} the averages $\frac{\D^{(2)}(\supb_j)}{\abs{\supb_j}}$ are non-increasing; \textsf{(b)} the conditional distributions of $\D^{(1)}$ and $\D^{(2)}$ on each superbucket are identical (${\D^{(2)}}_{\supb_j} = {\D^{(1)}}_{\supb_j}$ for all $j\in[K]$); and \textsf{(c)} $\sum_j \abs{ \D^{(2)}(\supb_j) - \D^{(1)}(\supb_j)}$ is minimized.

  \item[Third step: allocating budgets to superbuckets.] For reasons that will become clear in the subsequent, ``water-filling'' step, we want to give each superbucket $\supb_j$ a budget $b_j$ of ``extra weight'' \emph{added to its first bucket $\supb_{j,1}$} that can be used for local corrections when needed -- if it uses only part of this budget during the local correction, it will need to ``give back'' the surplus. To do so, define $\D^{(3)}$ as the distribution such that
\begin{itemize}
  \item $\D^{(3)}(\supb_j) = \lambda^{(3)} (\D^{(2)}(\supb_j) + b_j)$, $j\in [K]$ (where $b_j\eqdef \D^{(2)}(\supb_{j})/(1+\eps)$ for $j \in [K]$; and  $\lambda^{(3)} \eqdef (1 + \sum_j b_j)^{-1}$ is a normalization factor). Note that $\sum_j b_j = 1/(1+\eps) \in [1/2,1]$, so that $\lambda^{(3)}\in[1,2]$.
  \item \new{The conditional distribution on $S_j\setminus S_{j,1}$ satisfy} ${\D^{(3)}}_{\supb_j\setminus \supb_{j,1}} = {\D^{(2)}}_{\supb_j\setminus \supb_{j,1}}$ for all $j\in[K]$.
\end{itemize} 
That is, $\D^{(3)}$ is a version of $\D^{(2)}$ where each superbucket is re-weighted, but ``locally'' looks the same inside each superbucket \emph{except for the first bucket of each superbucket, that received the additional ``budget weight.''} Observe that since the size $\abs{\supb_j}$ of the superbuckets is multiplicatively increasing by an $(1+\eps)$ factor (as a consequence of Birg\'e bucketing), the averages $\D^{(3)}(S_j)/\abs{S_j}$ will remain non-increasing.     That is, the average changes by less for ``big'' values of $j$'s than for small values, as the budget is spread over more elements.

\begin{remark}
$\D^{(3)}$ is uniquely determined by $\eps,n$ and $\D$, and can be explicitly computed using $K$ \cdf queries.
\end{remark}

\end{description}

\subsubsection{Sampling steps (correcting while sampling)}\label{sec:mon:corr:sampling}
Before going further, we describe a procedure that will be necessary for our fourth step, as it will be the core subroutine allowing us to perform local corrections \emph{between superbuckets}.

\paragraph{Water-filling.}
Partition each superbucket $\supb_i$ into range $H_i$, $M_i$ and $L_i$ where (assuming the buckets in  $\supb_i$ are monotone):
\begin{itemize}[-]
  \item $m_i={\D^{(3)}(\supb_i)}/{\abs{\supb_i}}$ is the initial value of the average value of superbucket $\supb_i$ [this does not change throughout the procedure]
  \item $H_i$ are the (leftmost) elements whose value is greater than $m_i$ \hfill [these elements may move to $M_i$ or stay in $H_i$]
  \item $M_i$ are the (middle) elements whose value is equal to $m_i$    \hfill[these elements stay in $M_i$]
  \item $L_i$ are the (rightmost) elements whose value is less than $m_i$  \hfill[these elements may move to $M_i$ or stay in $L_i$]
  \item $\min_i$ is the minimum probability value in superbucket $\supb_i$   \hfill[this updates throughout the procedure]
  \item $\max_i$ is the maximum probability value in superbucket $\supb_i$   \hfill[this updates throughout the procedure]
\end{itemize}
Let $e_i \eqdef \sum_{x \in H_i} (p(x) - m_i)$ to be the \emph{surplus}  (so that if $e_i=0$ then $H_i=\emptyset$ and the superbucket is said to be \emph{dry}) and $d_i \eqdef \sum_{x \in L_i} (m_i- p(x))$ to be the \emph{deficit} (if $d_i = 0$ then $L_i=\emptyset$ and the superbucket is said to be \emph{full}).

\begin{algorithm}[H]
  \caption{Procedure \textsf{water-fill}}
  \begin{algorithmic}[1]
  \State take an infinitesimal amount $\partial p$ from the top of the max, leftmost buckets
             of $H_{i+1}$, in superbucket $\supb_{i+1}$ (this would
             be from the first bucket and any other buckets that have the same probability)
  \State pour $\partial p$ into superbucket $\supb_i$    (this would land in the min, rightmost buckets of 
           $L_{i}$, in superbucket $\supb_{i}$ and spread to the left, to buckets that have the same probability, 
           just like water)
  \end{algorithmic}
\end{algorithm}
\begin{algorithm}[H]
  \caption{Procedure \textsf{front-fill}}
  \begin{algorithmic}[1]
  \While{ the surplus $e_{i+1}$ is greater than the extra budget $b_{i+1}$ allocated in \autoref{sec:mon:corr:prelim} }
  \State take an infinitesimal amount $\partial p$ from the top of the max, leftmost elements
             of $H_{i+1}$, in superbucket $\supb_{i+1}$ (this would
             be from the first bucket and any other buckets that have the same probability)
  \State pour $\partial p$ into the very first bucket of the domain, $S_{1,1}$.
  \EndWhile
  \State \newest{\Return the total amount $f_i$ of weight poured into $S_{1,1}$.}
  \end{algorithmic}
\end{algorithm}
\begin{algorithm}[H]
  \caption{Procedure \textsf{water-boundary-correction}}
  \begin{algorithmic}[1]
    \Require Superbucket index $j=i+1$, \newest{with initial weight $\D^{(3)}(S_{i+1})$.}
   \State move weight from the surplus of $H_{i+1}$ into $L_{i}$ using \textsf{water-fill} until:
    \begin{enumerate}[(a)]
      \item $\max_{i+1} \leq \min_i$; or
      \item $L_{i}=\emptyset$ ($\supb_i$ is full) -- i.e. $\min_i$ = $m_i$; or
      \item $H_{i+1}=\emptyset$ ($\supb_{i+1}$ is dry) -- this can only happen if $e_{i+1} < d_i$ \Comment{\sf This should never happen because of the ``budget allocation'' step.}
    \end{enumerate}
  \State Note that the distribution might not yet be monotone on $\supb_i\cup\supb_{i+1}$, if one of the last two conditions is reached first. If this is the case, then do further correction:
    \begin{enumerate}[(a)]
      \item if $L_{i}=\emptyset$  then do \textsf{front-fill} until $\max_{i+1} \leq \min_i$  (this will happen before $H_{i+1}=\emptyset$)
      \item if $H_{i+1}=\emptyset$ then abort and \Return \fail \Comment{\sf This should never happen because of the ``budget allocation'' step.}
          \end{enumerate}
  \State \newest{\Return the list $B_1,\dots,B_s$ of buckets in $T_i \eqdef L_i\cup S_{i+1}$, along with the weights $w_1,\dots, w_s$ they have from $w$ after the redistribution and the portion $\eps_i$ of the budget that was not used and the portion $f_i$ that was moved by $\textsf{front-fill}$ (so that $\lambda^{(3)}\eps_i + f_i + \sum_{t=1}^s w_t = \D^{(3)}(S_{i+1})$).}
  \end{algorithmic}
  \end{algorithm}

\paragraph{Sampling procedure.}
Recall that we now start and work with $\D^{(3)}$, as obtained in~\autoref{sec:mon:corr:prelim}.

\begin{itemize}
  \item Draw a superbucket $\supb_{i+1}$ according to the distribution $\D^{(3)}(\supb_1)$, $\D^{(3)}(\supb_K)$ on $[K]$.
    \item If $\supb_{i+1} \neq \supb_1$ (we did not land in the first superbucket):
  \begin{itemize}
    \item Obtain (\textit{via} \cdf queries, if they were not previously known) the $2L$ values  $\D^{(3)}(\supb_{i,j})$ $\D^{(3)}(\supb_{i+1,j})$ ($j\in[L]$) of the buckets in superbuckets $\supb_{i},\supb_{i+1}$.
    \item Correct them (separately for each of the two superbuckets) optimally for monotonicity, e.g. via linear programming (if that was not done in a prior stage of sample generation), \newer{ignoring} the extra budget $b_{i}$ and $b_{i+1}$ on the first bucket of each superbucket. Compute $H_i,M_i,L_i$ and $H_{i+1},M_{i+1},L_{i+1}$.
    \item Call \textsf{water-boundary-correction} on $(i+1)$ \new{using the extra budget only if $S_{i+1}$ becomes dry and not counting it while trying to satisfy condition \textsf{(a)}.}\footnote{At this point, the ``new'' distribution $\D^{(4)}$ (which is at least partly implicit, as only known at a very coarse level over superbuckets and locally for some buckets inside $\supb_i\cup\supb_{i+1}$) obtained is monotone over the superbuckets (\textsf{water-boundary-correction} does not violate the invariant that the distribution over superbuckets is monotone), is monotone inside both $\supb_i$ and $\supb_{i+1}$, and furthermore is monotone over $\supb_i\cup\supb_{i+1}$. Even more important, the fact that $\min_i \geq \max_{i+1}$ will ensure applying the same process in the future, e.g. to $\supb_{i+2}$, will remain consistent with regard to monotonicity.}
  \end{itemize}
  \item \newest{If $\supb_{i+1} = \supb_1$ (we landed in the first superbucket), we proceed similarly as per the steps above, except for the \textsf{water-boundary-correction}. That is, we only correct locally $\supb_1$ for monotonicity.}
  \item During the execution of \textsf{water-boundary-correction}, the water-filling procedure may have used some of the initial ``allocated budget'' $b_{i+1}$ to pour into $L_i$. Let $\eps_i \in [0,b_{i+1}]$ be the amount of the budget remaining (not used).
  \new{ 
  \begin{itemize}
  \item with probability $p_i\eqdef\lambda^{(3)}\eps_i/\D^{(3)}(\supb_{i+1})$, restart the sampling process from the beginning (this is the ``budget redistribution step,'' which ensures the correction only uses \emph{what it needs} for each superbucket).
  \item with probability  $q_i\eqdef f_i/\D^{(3)}(\supb_{i+1})$, where $f_i$ is the weight moved by the procedure \textsf{front-fill}, output \newer{from} the very first \newer{bucket} of the domain.      
    \item \newer{with the remaining probability, output a sample from the new (conditional) distribution on the buckets in $T_i\eqdef L_i\cup S_{i+1}$. This is the conditional distribution defined \newest{on $T_i$  by the weights $w_1,\dots, w_s$, as returned by} \textsf{water-boundary-correction}.}
  \end{itemize}
      } 
\end{itemize}
Note that the distribution we output from if we initially select the superbucket $S_{i+1}$, is supported on $L_i\cup S_{i+1}$. Moreover, conditioning on $M_{i+1}\cup L_{i+1}$ we get exactly the conditional distribution $D^{(3)}_{M_{i+1}\cup \newest{L_{i+1}}}$. (This ensures that from each bucket there is a unique superbucket that has to be picked initially for the bucket's weight to be modified.) \newest{Observe that as defined above, buckets from $L_i\subseteq \supb_i$ can be outputted from either because superbucket $\supb_i$ was picked, or because $\supb_{i+1}$ was drawn and some of its weight was reassigned to $L_i$ by \textsf{water-boundary-correction}. The probability of outputting any bucket in $L_i$ is then the sum of the probabilities of the two types of events.}

\subsubsection{Analysis}

The first observation is that the distribution of any sample output by the sampling process described above is not only consistent, but completely determined by $n$, $\eps$ and $\D$:
\begin{claim}\label{lemma:correct:waterfill:defined}
The process described in \autoref{sec:mon:corr:prelim} and \ref{sec:mon:corr:sampling} uniquely defines a distribution $\tilde{\D}$, which is a function of $\D$, $n$ and $\eps\in(0,1)$ only.
\end{claim}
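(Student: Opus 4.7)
The plan is to show (a) the preprocessing outputs $\D^{(1)}, \D^{(2)}, \D^{(3)}$ are each a deterministic function of $(\D,n,\eps)$, and (b) the sampling procedure, despite its randomness, induces a probability law on $[n]$ whose value at each point is uniquely determined by $(\D,n,\eps)$.

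For (a), the Birg\'e flattening $\D^{(1)}=\birge[\D]{\eps}$ is explicitly determined by \autoref{def:birge:decomposition}. The global linear program that produces $\D^{(2)}$ has a unique optimizer once we fix a canonical tie-breaking rule (e.g., lexicographic over solutions), since its value is the $\lp[1]$-projection of the vector of superbucket averages onto the monotone cone; we should add a one-line remark fixing such a convention. The budget allocation step giving $\D^{(3)}$ is defined by the explicit formulas $b_j=\D^{(2)}(\supb_j)/(1+\eps)$ and $\lambda^{(3)}=(1+\sum_j b_j)^{-1}$, so it is manifestly a deterministic function of $\D^{(2)}$.

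For (b), I would compute, for each bucket $B\in[n]$, the probability $\tilde{\D}(B)$ that a single invocation of the sampling procedure outputs $B$, and argue it depends only on $(\D,n,\eps)$. The key observation is that, conditionally on the initial superbucket $\supb_{i+1}$ being drawn, the subroutines \textsf{water-boundary-correction} and \textsf{front-fill} are \emph{deterministic} functions of the weights of the $2L$ buckets in $\supb_i\cup \supb_{i+1}$ (themselves deterministic functions of $\D^{(3)}$); consequently the quantities $\eps_i$, $f_i$, and the corrected weights $w_1,\dots,w_s$ on $T_i=L_i\cup\supb_{i+1}$ are deterministic functions of $(\D,n,\eps)$ and of $i$. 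Thus, conditional on drawing $\supb_{i+1}$ and on not restarting, the conditional output law on $T_i$ is fixed, and the probabilities $p_i,q_i$ of restart or of emitting $\supb_{1,1}$ are also fixed.

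Write $r\eqdef \sum_i \D^{(3)}(\supb_i) p_i$ for the total restart probability of one round. One round outputs a sample with probability $1-r$ according to an explicit law $\mu$ supported on $[n]$ (aggregating all superbucket choices and the $q_i$ contributions), and otherwise restarts. Since every contribution $p_i$ is bounded by $\lambda^{(3)} b_{i+1}/\D^{(3)}(\supb_{i+1})\leq 1$, and in fact $r<1$ strictly (with room to spare, as the $q_i$ and surviving $\mu$-mass account for positive probability), the process terminates almost surely, and the resulting law is
\[
\tilde{\D}=\sum_{k\geq 0} r^k(1-r)\,\mu=\mu,
\]
i.e., $\tilde{\D}$ coincides with $\mu$ after renormalization by $(1-r)$. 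Since $\mu$ is a deterministic function of $(\D,n,\eps)$, so is $\tilde{\D}$.

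The main subtlety to handle carefully will be the double-counting issue flagged at the end of \autoref{sec:mon:corr:sampling}: a bucket in $L_i$ can receive mass from two distinct ``parent'' superbucket draws (namely $\supb_i$ through its own local correction and $\supb_{i+1}$ through \textsf{water-boundary-correction}). I would verify that these two contributions are well-defined scalars (not random), and that summing them yields the unique value $\tilde{\D}(B)$ for $B\subseteq L_i$; the assertion in the text that ``conditioning on $M_{i+1}\cup L_{i+1}$ we get exactly $\D^{(3)}_{M_{i+1}\cup L_{i+1}}$'' is precisely what ensures no triple-counting occurs across non-adjacent superbuckets. Once this bookkeeping is checked, the claim follows.
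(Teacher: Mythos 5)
Your proposal is correct, and in fact more detailed than the paper, which states this claim without any proof: the intended justification is exactly the one you give, namely that every step of the preprocessing and of the per-sample correction is a deterministic function of $(\D,n,\eps)$, and the only randomness (the superbucket draw and the restart/front-fill coin) defines a terminating restart process whose output law is fixed. Your normalization $\tilde{\D}=\sum_{k\ge 0}r^k(1-r)\mu=\mu$ is sound since $r\le \lambda^{(3)}\sum_j b_j\le 1/2<1$ (this is the same bound the paper uses in the query-complexity claim). One small refinement: the canonical tie-breaking convention you introduce for the global linear program producing $\D^{(2)}$ must also be imposed on the \emph{local} per-superbucket monotonicity corrections (which are likewise solved by linear programs with possibly non-unique optima), both so that each is deterministic and so that the correction of $\supb_i$ yields the same result whether $\supb_i$ or $\supb_{i+1}$ was the superbucket initially drawn; with that convention in place, your bookkeeping for the two contributions to buckets in $L_i$ goes through and the claim follows.
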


\begin{claim}\label{lemma:correct:waterfill:queries}
The expected number of queries necessary to output $m$ samples from $\tilde{\D}$ is upper bounded by $K + 4mL\eps$.
\end{claim}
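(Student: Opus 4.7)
My plan is to decompose the expected query cost into the contribution of the preprocessing stage (Steps 1--3 of~\autoref{sec:mon:corr:prelim}) and that of the $m$ subsequent sample-generation rounds (\autoref{sec:mon:corr:sampling}), and argue that these contribute $K$ and $4mL\eps$ queries in expectation respectively.

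For the preprocessing, I would observe that computing $\D^{(1)}(S_1),\dots,\D^{(1)}(S_K)$ can be done with $K$ \cdf queries, one at each superbucket boundary (a single \cdf query at the right endpoint of $S_j$ yields $F^{(1)}(\max S_j)$, and differences of consecutive such values give the superbucket weights). The remainder of the preprocessing is ``free'' in terms of queries: solving the linear program to obtain $\D^{(2)}$ from the known values $\D^{(1)}(S_j)$ requires no additional queries, and the definition of $\D^{(3)}$ in terms of $\D^{(2)}$ involves only known quantities. Thus the preprocessing contributes exactly $K$ queries.

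For the sampling stage, the crucial observation is that the $2L$ bucket masses inside any touched superbucket pair $(S_i,S_{i+1})$ can be cached after being queried the first time (again via $2L$ \cdf queries at the bucket boundaries), and need never be re-queried. In particular, the sampling procedure only issues fresh \cdf queries when a new superbucket is encountered. I would then set up a charging scheme in which each sample-generation round is charged at most $4L\eps$ expected queries, taking into account both the rejection/restart mechanism and the caching. Concretely, each attempt costs at most $2L$ new queries (and zero if both superbuckets involved are already cached), and the restart probability at each attempt is controlled by the unused portion of the allocated budgets $b_j$, which sum to $1/(1+\eps)$ — so the expected number of attempts per output sample stays a small constant, and the fraction of attempts that actually hit a previously-unseen superbucket pair is of order $\eps$.

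The main obstacle I foresee is getting the $\eps$ factor in the per-sample bound $4L\eps$ to come out cleanly: this requires a careful amortized analysis that combines \textsf{(i)} the caching of bucket weights across samples, \textsf{(ii)} the geometric-series bound on the number of attempts per sample coming from the restart probabilities $p_i = \lambda^{(3)}\eps_i/\D^{(3)}(S_{i+1})$, and \textsf{(iii)} the fact that the superbucket averages $\D^{(3)}(S_j)/|S_j|$ are non-increasing, which keeps the ``effective'' number of newly-touched superbucket pairs small. Summing the preprocessing cost $K$ with the per-sample bound $4L\eps$ times $m$ samples would then give the claimed upper bound of $K+4mL\eps$ queries in expectation.
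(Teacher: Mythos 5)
Your decomposition into a preprocessing cost of $K$ plus a per-sample cost is the same as the paper's, and your accounting of the preprocessing stage ($K$ \cdf queries at the superbucket boundaries, everything else computed offline) is correct. Your observation that the expected number of attempts per output sample is a small constant because the restart probabilities are tied to the budgets $b_j$, which sum to $1/(1+\eps)$, is also exactly the mechanism the paper uses: writing $X$ for the number of queries needed to produce one sample, the paper sets up the recursion $X \leq 2L + RX'$, where $X'$ is an independent copy of $X$ and $R$ is a Bernoulli restart indicator whose (random) parameter $\Delta$ equals $p_i$ when superbucket $S_{i+1}$ is drawn; then
$\expect{\Delta} = \sum_{i} \D^{(3)}(S_{i+1})\, p_i = \lambda^{(3)}\sum_{i} \eps_i \leq \lambda^{(3)}\sum_{i} b_i \leq 1/2$,
whence $\expect{X} \leq 2L + \tfrac{1}{2}\expect{X}$, i.e.\ $\expect{X} \leq 4L$.

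The genuine gap is the extra factor of $\eps$. The paper's own proof establishes only $K + 4mL$, not $K + 4mL\eps$ (the $\eps$ in the claim's statement appears to be extraneous; note that $K+4mL$ is all that is used in \autoref{theo:samp:corrector:monotonicity:cdf}, where $K$ and $mL$ are both set to $\sqrt{m\ell}$, so nothing downstream needs the $\eps$). Your proposed route to the $\eps$ factor -- caching bucket weights and asserting that only an $\eps$ fraction of attempts touch a previously-unseen superbucket pair -- is never actually argued in your plan, and it cannot yield a per-sample bound of $4L\eps$: already for $m=1$, the first attempt necessarily lands on a fresh superbucket pair and incurs $\Theta(L)$ new \cdf queries in expectation (the bucket boundaries interior to a superbucket are not among the $K$ preprocessing queries), which exceeds $4L\eps$ for small $\eps$. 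Caching can at best cap the \emph{total} sampling-stage cost at $\bigO{LK}=\bigO{\ell}$, a bound incomparable to $4mL\eps$. The fix is to drop the caching/amortization idea, prove $\expect{X}\leq 4L$ per sample via the recursion above, and conclude with the bound $K+4mL$, which is what the paper does and what the application requires.
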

\begin{proof}
The number of queries for the preliminary stage is $K$; after this, generating a sample requires $X$ queries, where $X$ is a random variable satisfying
\[
X \leq 2L + RX^\prime
\]
where $X,X^\prime$ are independent and identically distributed, and $R$ is a Bernoulli random variable independent of $X^\prime$ and with parameter $\Delta$ (itself a random variable depending on $X$: $\Delta$ takes value $p_i$ when the first draw selects superbucket $i+1$), corresponding to the probability of restarting the sampling process from the beginning. It follows that
\[
\expect{X} \leq 2L + \expect{R}\expect{X} = 2L + \expect{\Delta}\expect{X}.
\]
Using the fact that $\expect{\Delta}=\sum_{i\in[K]} \D^{(3)}(\supb_{i+1}) p_i = \sum_{i\in[K]} \D^{(3)}(\supb_{i+1})  \frac{\lambda^{(3)}\eps_i}{\D^{(3)}(\supb_{i+1})} \leq \lambda^{(3)}\sum_{i\in[K]} b_i \in[1/3,1/2]$ 
and rearranging, we get $\expect{X}\leq 4L$.
\end{proof}

\begin{lemma}\label{lemma:correct:waterfill:monotone}
If $\D$ is a distribution on $[n]$ satisfying $\totalvardist{\D}{\mathcal{M}} \leq \eps$, then the distribution $\tilde{\D}$ defined above is monotone.
\end{lemma}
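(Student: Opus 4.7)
My plan is to verify directly that $\tilde{\D}$ is non-increasing on $[n]$. Since every step of the procedure acts on the Birg\'e flattening $\D^{(1)}$ and redistributes probability mass \emph{uniformly inside each Birg\'e bucket $I_j$}, the output $\tilde{\D}$ is itself a histogram on the decomposition $\mathcal{I}_\eps$. Thus monotonicity on $[n]$ is equivalent to the per-element density sequence $\bigl(\tilde{\D}(I_j)/|I_j|\bigr)_{j=1,\dots,\ell}$ being non-increasing. I would then split the argument into two regimes: consecutive Birg\'e buckets lying in the same superbucket, and consecutive buckets straddling a superbucket boundary (the consistency of ``what $\tilde{\D}$ is'' being already handled by \autoref{lemma:correct:waterfill:defined}).

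\textbf{Intra-superbucket monotonicity.} Whenever a sample is produced by first drawing superbucket $S_k$, the algorithm solves a linear program that returns the monotone distribution on the $L$ buckets of $S_k$ closest to (and with the same total mass as) the relevant restriction of the current working distribution. This LP output is monotone inside $S_k$ by construction, and since \autoref{lemma:correct:waterfill:defined} guarantees that $\tilde{\D}$ does not depend on which sampling path realized a given bucket, the values $\tilde{\D}(I_j)/|I_j|$ for $I_j\subseteq S_k$ are unambiguous and form a non-increasing sequence.

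\textbf{Boundary monotonicity between $S_i$ and $S_{i+1}$.} Here I would unpack the subroutine \textsf{water-boundary-correction}. Its design directly yields $\max_{i+1}\leq \min_i$ upon termination at step 1(a); otherwise step 2 dispatches either to \textsf{front-fill} (when $L_i$ is filled first) to push further excess of $H_{i+1}$ into $S_{1,1}$ until (a) holds, or it taps into the extra budget $b_{i+1}$ sitting in $S_{i+1,1}$ (when $H_{i+1}$ would otherwise dry first) to continue \textsf{water-fill} until (a) is met. In all three outcomes, once the routine returns, the last bucket of $S_i$ has per-element density at least $\min_i$ and the first of $S_{i+1}$ at most $\max_{i+1}$, so the density sequence is non-increasing at the boundary. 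Combined with the previous step, this establishes non-increase across the whole chain of consecutive buckets, which is the desired monotonicity of $\tilde{\D}$.

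\textbf{Main obstacle.} The most delicate sub-claim, which I would isolate as a lemma, is ruling out the abort branch of \textsf{water-boundary-correction}: that $H_{i+1}$ can never empty before condition~(a) is met, even after the budget has been engaged. Using $\totalvardist{\D}{\mathcal{M}}\leq \eps$ together with $\totalvardist{\D^{(1)}}{\mathcal{M}}\leq \eps$, I would run an averaging argument in the spirit of \autoref{claim:oblivious:monotone:guarantee:buckets} to bound the total ``correction mass'' needed across the $S_i$--$S_{i+1}$ boundary by $O\bigl(\eps\cdot \D^{(2)}(S_{i+1})\bigr)$. Since the allocated budget is $b_{i+1}=\D^{(2)}(S_{i+1})/(1+\eps)=\Theta\bigl(\D^{(2)}(S_{i+1})\bigr)$, it comfortably dominates the required mass, so the procedure always terminates at (a). Together with the two cases above this yields monotonicity of $\tilde{\D}$ on $[n]$, as claimed.
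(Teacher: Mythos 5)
Your overall architecture (reduce to a histogram on the Birg\'e buckets, get intra-superbucket monotonicity from the local LPs, handle boundaries via \textsf{water-boundary-correction}) matches the paper's, and you correctly isolate the crux: ruling out the abort branch. But your resolution of that crux does not work. You propose to bound the correction mass needed at the $S_i$--$S_{i+1}$ boundary by $\bigO{\eps\cdot\D^{(2)}(S_{i+1})}$ via an averaging argument ``in the spirit of'' \autoref{claim:oblivious:monotone:guarantee:buckets}, and then note that $b_{i+1}=\D^{(2)}(S_{i+1})/(1+\eps)$ comfortably dominates it. No such multiplicative-in-$\eps$ bound holds: take $S_{i+1}$ uniform with $\D^{(2)}(S_{i+1})=\Theta(\eps)$ and $S_i$ with all of its (comparable) mass on its first bucket --- a configuration consistent both with non-increasing superbucket averages and with $\D^{(2)}$ being $\bigO{\eps}$-close to monotone. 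Then $e_{i+1}=0$, the budget is engaged immediately, and filling $L_i$ up to the level $m_{i+1}$ costs $\abs{S_i}\cdot\D^{(2)}(S_{i+1})/\abs{S_{i+1}}=\D^{(2)}(S_{i+1})/(1+\eps)$, i.e., the \emph{entire} budget, not $\bigO{\eps\cdot\D^{(2)}(S_{i+1})}=\bigO{\eps^2}$. The additive $\bigO{\eps}$ bound that does follow from $\totalvardist{\D}{\mathcal{M}}\leq\eps$ is of no help either, since $b_{i+1}$ can be far smaller than $\eps$. The argument that actually closes this step is structural, not metric: the budget is only tapped once $H_{i+1}$ is dry, at which point $\max_{i+1}\leq m_{i+1}$, so the remaining fill is at most $\abs{S_i}\cdot m_{i+1}=\D^{(2)}(S_{i+1})/(1+\eps)=b_{i+1}$ exactly. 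The budget is tight, not comfortable --- this is precisely why it was set to that value.

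A second, smaller omission is also load-bearing: you never verify that the averages $\D^{(3)}(S_j)/\abs{S_j}$ remain non-increasing \emph{after} the budgets $b_j$ are added to the first buckets. This needs the short computation using $\abs{S_{j+1}}=(1+\eps)\abs{S_j}$, namely $\bigl(b_j+\D^{(2)}(S_j)\bigr)/\abs{S_j}\geq\bigl(b_{j+1}+\D^{(2)}(S_{j+1})\bigr)/\abs{S_{j+1}}$. Without it, the sequence of caps $m_1\geq m_2\geq\dots\geq m_K$ that halts \textsf{water-fill} when a superbucket becomes full is not known to be monotone, and then a correction performed at a later boundary (which may raise densities in $S_{i+1}$ up to $m_{i+1}$) could re-introduce a violation at the boundary you already fixed. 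Your argument treats each pair $(S_i,S_{i+1})$ in isolation and implicitly relies on this global ordering for consistency across the whole chain; it has to be proved.
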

\begin{proof}
Observe that as the average weights of the superbuckets in $\D^{(2)}$ are non-increasing, the definition of $\D^{(3)}$ along with the fact that the lengths of the superbuckets are (multiplicatively) increasing implies that the average weights of the superbuckets in $\D^{(3)}$ are also non-increasing. In more detail, fix $1\leq i \leq K-1$; we have
\[
  \frac{\D^{(2)}(S_{i})}{\abs{S_i}} \geq \frac{\D^{(2)}(S_{i+1})}{(1+\eps)\abs{S_i}}
\]
using the fact that $\abs{S_j} = (1+\eps) \abs{S_{j-1}}$. From there, we get that 
\[
(1+\eps)\D^{(2)}(S_i) \geq \D^{(2)}(S_{i+1})
\] or equivalently 
\[
\frac{b_{i}+\D^{(2)}(S_{i})}{\abs{S_{i}}} = \frac{\D^{(2)}(S_{i})+(1+\eps)\D^{(2)}(S_{i})}{(1+\eps)\abs{S_{i}}} \geq 
\frac{\D^{(2)}(S_{i+1})+(1+\eps)\D^{(2)}(S_{i+1})}{(1+\eps)^2\abs{S_{i}}} =\frac{b_{i+1}+\D^{(2)}(S_{i+1})}{\abs{S_{i+1}}}
\]
showing that before renormalization (and therefore after as well) the average weights of the superbuckets in $\D^{(3)}$ are indeed non-increasing. Rephrased, this means that the sequence of $m_i$'s, for $i\in[K]$, is monotone. 
Moreover, notice that by construction the distribution $\tilde{\D}$ is monotone within each superbucket: indeed, it is explicitly made so one superbucket at a time, in the third step of the sampling procedure. After a superbucket has been made monotone this way, it only be changed by water-filling which by design can never introduce new violations: the weight is always moved ``to the left,'' with the values $m_i$'s acting as boundary conditions to stop the waterfilling process and prevent new violations, or moved to the first element of the domain.

It only remains to argue that monotonicity is not violated at the boundary of two consecutive superbuckets. But since the \textsf{water-boundary-correction}, if it does not abort, guarantees that the distribution is monotone between consecutive buckets as well (as $m_{i+1}\leq \max_{i+1}\leq \min_i \leq m_i$), it it sufficient to show that \textsf{water-boundary-correction} never returns \fail. This is ensured by the ``budget allocation'' step, which by providing $H_{i+1}$ with up to an additional $b_{i+1}$ to spread into $L_i$ guarantees it will become dry. Indeed, if this happened then it would mean that correcting this particular violation (before the budget allocation, which only affects the first elements of the superbuckets) in $\D^{(2)}$ required to move more than $b_{i+1}$ weight, contradicting the fact that the average weights of the superbuckets in $\D^{(2)}$ were non-increasing. In more detail, the maximum amount of weight to ``pour'' in order to fill $L_i$ is in the case where $H_{i+1}$ is empty (i.e., the distribution on $S_{i+1}$ is already uniform) but $L_i$ is (almost) all of $S_i$ (i.e., all the weight in $S_i$ is in the first bucket). To correct this with our waterfilling procedure, one would have to pour $\abs{S_i}\cdot \frac{\D^{(2)}(S_{i+1})}{\abs{S_{i+1}}}=\frac{\D^{(2)}(S_{i+1})}{1+\eps}$ weight in $L_i$, which is exactly our choice of value for $b_{i+1}$.
\end{proof}

\begin{lemma}\label{lemma:correct:waterfill:distance}
If $\D$ is a distribution on $[n]$ satisfying $\totalvardist{\D}{\mathcal{M}} \leq \eps$, then $\totalvardist{\D}{\tilde{\D}} = \bigO{\eps}$.
\end{lemma}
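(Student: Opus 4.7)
The plan is to bound $\totalvardist{\D}{\tilde{\D}}$ by the triangle inequality through the three intermediate distributions $\D^{(1)},\D^{(2)},\D^{(3)}$ defined in \autoref{sec:mon:corr:prelim}:
\[
\totalvardist{\D}{\tilde{\D}} \;\leq\; \totalvardist{\D}{\D^{(1)}} + \totalvardist{\D^{(1)}}{\D^{(2)}} + \totalvardist{\D^{(2)}}{\tilde{\D}}.
\]
The first term is at most $3\eps$ by a direct application of \autoref{coro:Birge:decomposition:robust} to $\D$ with parameter $\eps$, since $\D^{(1)}=\birge[\D]{\eps}$. For the second term, let $M$ be the monotone distribution closest to $\D^{(1)}$. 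Because $M$ is monotone, its weights on the superbuckets $S_1,\dots,S_K$ give non-increasing averages, so replacing the superbucket weights of $\D^{(1)}$ by those of $M$ while keeping the conditionals of $\D^{(1)}$ unchanged on each $S_j$ is a feasible solution to the LP defining $\D^{(2)}$. Its cost equals $\frac{1}{2}\sum_j |\D^{(1)}(S_j)-M(S_j)|$, which by the data processing inequality (\autoref{lemma:data:processing:inequality:total:variation}) applied to the coarse partition is at most $\totalvardist{\D^{(1)}}{M}\leq \eps$, so $\totalvardist{\D^{(1)}}{\D^{(2)}}\leq \eps$.

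The bulk of the work is in showing $\totalvardist{\D^{(2)}}{\tilde{\D}}=\bigO{\eps}$. The plan is to compute, for each bucket $B$, the probability $\tilde{\D}(B)=\pi(B)/(1-r)$ that $B$ is output in a single non-restart round, where $r=\sum_{i\geq 1}\lambda^{(3)}\eps_{i-1}$ is the aggregate restart probability absorbing the unused budget. Decomposing $\pi(B)$ by the superbucket that was initially drawn, one obtains: (i) for $B\in M_j\cup L_j$, $\pi(B)$ equals the weight of $B$ in the locally-monotonized version of $\D^{(3)}$; (ii) for $B\in H_j$ with $j\geq 2$, $\pi(B)$ equals that weight minus the amount of mass transferred out by \textsf{water-boundary-correction}; (iii) for $B\in L_{j-1}$, $\pi(B)$ gains an additional contribution from the water poured in by \textsf{water-boundary-correction} when $S_j$ was drawn; and (iv) $\pi(S_{1,1})$ further accumulates all the front-fill overflow $f_{j-1}$. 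The key observation is that after dividing by $1-r$, the ``inflation'' of mass added in going from $\D^{(2)}$ to $\D^{(3)}$ is exactly cancelled by the restarts, so $\tilde{\D}(B)$ equals $\D^{(2)}(B)$ plus controlled perturbations attributable only to the local LP corrections and the water-filling.

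To bound these perturbations, we use $\totalvardist{\D^{(2)}}{\mathcal{M}}\leq 2\eps$ (which follows from the first two steps and the assumption $\totalvardist{\D}{\mathcal{M}}\leq \eps$). The mass moved by the within-superbucket LP corrections, summed over $j$, is at most the ``within-superbucket'' component of $\totalvardist{\D^{(2)}}{\mathcal{M}}$; the water-filling accounts for at most the ``across-boundary'' component; and the front-fill overflow is itself bounded by the residual surplus that $L_{j-1}$ cannot absorb, again controlled by the same quantity. Summing yields $\totalvardist{\D^{(2)}}{\tilde{\D}}=\bigO{\eps}$, and hence $\totalvardist{\D}{\tilde{\D}}=\bigO{\eps}$. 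The main obstacle in this last step is the bookkeeping required to verify that the restart mechanism truly absorbs the $\Theta(1)$-sized budget injection $\D^{(2)}\to\D^{(3)}$: a naive triangle inequality through $\D^{(3)}$ would yield only a constant bound, so the analysis must track the unused-budget weights $\eps_i$, the front-filled weights $f_i$, and the weights $w_B$ produced by \textsf{water-boundary-correction} together, and check they combine into the $(1-r)$-normalization that recovers $\D^{(2)}$ up to an $O(\eps)$ perturbation.
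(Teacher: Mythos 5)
Your overall decomposition matches the paper's: the bounds $\totalvardist{\D}{\D^{(1)}}\leq 3\eps$ and $\totalvardist{\D^{(1)}}{\D^{(2)}}\leq\eps$ are obtained exactly as in the paper (the latter via the feasibility of the re-weighted monotone flattening for the superbucket LP), and your direct computation of $\tilde{\D}(B)=\pi(B)/(1-r)$ is a valid reformulation of what the paper does by introducing an explicit intermediate distribution $\D^{\prime\prime}$ (the within-superbucket-corrected version of $\D^{(2)}$) and then describing $\tilde{\D}$ as $\D^{\prime\prime}$ plus the \emph{used} budgets $b'_j$, poured and renormalized by $(1+\sum_j b'_j)^{-1}$. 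Your observation that the restarts cancel the $\Theta(1)$ budget injection, leaving only the used budgets $b'_j$, is exactly the right key point.

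The soft spot is in the final charging step. You bound the within-superbucket LP corrections by ``the within-superbucket component'' of $\totalvardist{\D^{(2)}}{\mathcal{M}}$ and the water-filling/front-fill by ``the across-boundary component,'' but no such additive decomposition of the distance to monotonicity exists, and more importantly the local corrections are \emph{constrained} to preserve each superbucket's total weight $\D^{(2)}(S_j)$, whereas the globally closest monotone distribution is not. One cannot directly conclude that the constrained optimum costs at most the unconstrained distance; the paper handles this by taking the globally optimal monotone $P$ (at distance $2\eps$), re-weighting it per superbucket to restore the constraint, and paying an extra factor of~$2$ (giving $\totalvardist{\D^{\prime\prime}}{\D^{(2)}}\leq 4\eps$). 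Similarly, the bounds $\sum_j b'_j=\bigO{\eps}$ and $\sum_j f_j=\bigO{\eps}$ require an explicit argument: the paper uses the fact that the residual violations of $\D^{\prime\prime}$ live in \emph{disjoint} intervals $V_j$ around the superbucket boundaries, so any monotonization must pay at least $b'_j/2$ for each, whence $\sum_j b'_j\leq 2\totalvardist{\D^{\prime\prime}}{\mathcal{M}}$. Your proposal identifies this bookkeeping as the main obstacle but does not supply either the re-weighting argument or the disjointness argument; both are needed for the proof to close, though neither changes your overall strategy.
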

\begin{proof}
We will bound separately the distances $\D$ to $\D^{(1)}$, $\D^{(1)}$ to $\D^{(2)}$ and $\D^{(2)}$ to $\tilde{\D}$, and conclude by the triangle inequality.
\begin{itemize}
  \item First of all, the distance $\totalvardist{\D}{\D^{(1)}}$ is at most $3\eps$, by properties of the Birg\'e decomposition (and as $\totalvardist{\D}{\mathcal{M}} \leq \eps$).
  \item We now turn to $\totalvardist{\D^{(1)}}{\D^{(2)}}$, showing that it is at most $\eps$: in order to do so, we introduce $\D^\prime$, the piecewise-constant distribution obtained by ``flattening'' $\D^{(1)}$ on each of the $K$ superbuckets (so that $\D^\prime(S_j) = \D^{(1)}(S_j)$ for all $j$). It is not hard to see, e.g. by the data processing inequality for total variation distance, that $\D^\prime$ is also \eps-close to monotone, and additionally that the closest monotone distribution $M^\prime$ can also be assumed to be constant on each superbucket.

Consider now the transformation that re-weights in $\D^\prime$ each superbucket $S_j$ by a factor $\alpha_j > 0$ to obtain $M^\prime$; it is straightforward to see from~\autoref{sec:mon:corr:prelim} that this transformation maps $\D^{(1)}$ to $\D^{(2)}$. Therefore,
\begin{align*}
2\totalvardist{\D^{(1)}}{\D^{(2)}} &= \sum_{j\in [K]} \sum_{x\in S_j} \abs{ \D^{(1)}(x) - \D^{(2)}(x) } = \sum_{j\in [K]} \sum_{x\in S_j} \abs{ \D^{(1)}(x) - \alpha_j\D^{(1)}(x) } \\
&= \sum_{j\in [K]} \sum_{x\in S_j}  \D^{(1)}(x)\cdot \abs{ 1 - \alpha_j }= \sum_{j\in [K]}  \D^{(1)}(S_j)\cdot \abs{ 1 - \alpha_j } \\
&= \sum_{j\in [K]} \sum_{x\in S_j}  \D^{(1)}(x)\cdot \abs{ 1 - \alpha_j }= \sum_{j\in [K]}  \D^{(1)}(S_j)\cdot \abs{ 1 - \frac{M^\prime(S_j)}{\D^\prime(S_j)} } \\
&= \sum_{j\in [K]} \abs{ \D^\prime(S_j) - M^\prime(S_j)} = 2\totalvardist{\D^\prime}{M^\prime} \leq 2\eps.
\end{align*}
  \item To bound $\totalvardist{\D^{(2)}}{\tilde{\D}}$, first consider the distribution $\D^{\prime\prime}$  obtained by correcting optimally $\D^{(2)}$ for monotonicity \emph{inside each superbucket separately}. That is, $\D^{\prime\prime}$ is the distribution satisfying
  monotonicity on each $S_j$ (separately) and $\D^{\prime\prime}(S_j) = \D^{(2)}(S_j)$ for each $j\in [K]$;
    and minimizing 
    \[
      \sum_{j\in[K]} \sum_{i\in [L]} \abs{ \D^{\prime\prime}(S_{j,i}) - \D^{(2)}(S_{j,i}) } 
    \]
  (or, equivalently, minimizing $\sum_{i\in [L]} \abs{ \D^{\prime\prime}(S_{j,i}) - \D^{(2)}(S_{j,i}) }$ for all $j\in[K]$). The first step is to prove that $\D^{\prime\prime}$ is close to $\D^{(2)}$: recall first that by the triangle inequality, our previous argument implies that $\D^{(2)}$ is $(2\eps)$-close to monotone. Therefore, the (related) optimization problem asking to find a non-negative function $P$ that minimizes the same objective, but under the different constraints ``$P$ is monotone on $[n]$ and $P([n]) = \D^{(2)}([n])$'' has a solution $P$ whose total variation distance from $\D^{(2)}$ is at most $2\eps$.
  
  But $P$ can be used to obtain $P^\prime$, solution to the original problem, by re-weighting each superbucket $S_j$ the following way:
  \[
    P^\prime(x) \eqdef P(x)\cdot \frac{\D^{(2)}(S_j)}{P(S_j)}, \quad x \in S_j.
  \]
  Clearly, $P^\prime$ satisfies the constraints of the first optimization problem; moreover,
  \begin{align*}
    2\totalvardist{P^\prime}{\D^{(2)} } &= \sum_{j\in[K]} \sum_{x\in S_j} \abs{ P^\prime(x) - \D^{(2)}(x)}
    = \sum_{j\in[K]} \sum_{x\in S_j} \abs{ P(x)\frac{\D^{(2)}(S_j)}{P(S_j)} - \D^{(2)}(x)} \\
    &\leq \sum_{j\in[K]} \sum_{x\in S_j} \abs{ P(x) - \D^{(2)}(x)} + \sum_{j\in[K]} \sum_{x\in S_j} P(x)\abs{ \frac{\D^{(2)}(S_j)}{P(S_j)} - 1}  \\ 
    &= 2\totalvardist{P}{\D^{(2)} } + \sum_{j\in[K]} \abs{\D^{(2)}(S_j) - P(S_j)} \leq 4\totalvardist{P}{\D^{(2)} } \\
    &\leq 8\eps,
  \end{align*}
  where we used the fact that $\sum_{j\in[K]} \abs{\D^{(2)}(S_j) - P(S_j)} = \sum_{j\in[K]} \abs{\sum_{x\in S_j} \left( \D^{(2)}(x) - P(x) \right)} \leq \sum_{j\in[K]} \sum_{x\in S_j} \abs{\D^{(2)}(x) - P(x)}$. As $\totalvardist{P^\prime}{\D^{(2)} }$ is an upperbound on the optimal value of the optimization problem, we get $\totalvardist{\D^{\prime\prime}}{\D^{(2)} } \leq 4\eps$.\smallskip
  
  The next and last step is to bound $\totalvardist{\D^{\prime\prime}}{\tilde{\D}}$, and show that it is $\bigO{\eps}$ as well. To see why this will allow us to conclude, note that $\D^{\prime\prime}$ is the intermediate distribution that the sampling process we follow would define, it there was neither extra budget allocated nor \textsf{water-boundary-correction}. Put differently, $\tilde{\D}$ is derived from $\D^{\prime\prime}$ by adding the ``right amount of extra budget $b^\prime_j \in [0,b_j]$'' to $S_j$, then pouring it to $S_{j-1}$ by waterfilling \newer{and front-filling}; and normalizing afterwards by $(1+\sum_{j\in [K]} b^\prime_j)^{-1}$.
  
  Writing $\tilde{\D}^{\prime\prime}$ for the result of the transformation above before the last renormalization step, we can bound $\totalvardist{\D^{\prime\prime}}{\tilde{\D}}$ by
  \begin{align*}
  2\totalvardist{\D^{\prime\prime}}{\tilde{\D}} &= \normone{ \D^{\prime\prime} - \tilde{\D} } \leq \normone{\D^{\prime\prime} - \tilde{\D}^{\prime\prime}} + \normone{\tilde{\D}^{\prime\prime} - \tilde{\D}} \\
  &\leq \sum_{j\in [K]} b^\prime_j + \sum_{j\in [K]} f_j + \sum_{x\in[n]} \abs{ \Big(1+\sum_{j\in [K]} b^\prime_j\Big)\tilde{\D}(x)-\tilde{\D}(x) } \\
  &\leq \sum_{j\in [K]} b^\prime_j + \sum_{j\in [K]} f_j + \dabs{\Big(1+\sum_{j\in [K]} b^\prime_j\Big)-1 } = 2\sum_{j\in [K]} b^\prime_j + \sum_{j\in [K]} f_j
  \end{align*}
  \newer{where $f_j\geq 0$ is defined as the amount of weight moved from $H_j$ to the first element of the domain during the execution of \textsf{water-boundary-correction}, if \textsf{front-fill} is called,} and the bound on $\normone{\D^{\prime\prime} - \tilde{\D}^{\prime\prime}}$ comes from the fact that $\tilde{\D}^{\prime\prime}$ pointwise dominates $\D^{\prime\prime}$, and has a total additional $\sum_{j\in [K]} b^\prime_j$ weight.
  
  It then suffices to bound the quantities  $\sum_{j\in [K]} f_j$ and $\sum_{j\in [K]} b^\prime_j$, using for this the fact that by the triangle inequality $\D^{\prime\prime}$ is itself $(6\eps)$-close to monotone. The at most $K$ intervals where $\D^{\prime\prime}$ violates monotonicity (which are fixed by using the $b^\prime_j$'s) are disjoint, and centered at the boundaries between consecutive superbuckets: i.e., each of them is in a interval $V_j\subseteq L_{j-1}\cup H_j \subsetneq S_{j-1}\cup S_j$. Because of this disjointness, each transformation of $\D^{\prime\prime}$ into a monotone distribution must add weight in $V_j\cap L_{j-1}$ or subtract some from $V_j\cap H_j$ to remove the corresponding violation.\cmargin{is it detailed enough?}\ By definition of $b^\prime_j$ (as minimum amount of additional weight to bring to $L_{j-1}$ when spreading weight from $H_j$ to $L_{j-1}$), this implies that any such transformation has to ``pay'' at least $b^\prime_j/2$ (in total variation distance) to fix violation $V_j$. From the bound on $\totalvardist{\D^{\prime\prime}}{\mathcal{M}}$, we then get $\sum_{j\in[K]} b^\prime_j \leq 12\eps$. A similar argument shows than $\sum_{j\in [K]} f_j \leq 12\eps$ as well, which in turn yields
  $
    \totalvardist{\D^{\prime\prime}}{\tilde{\D}} \leq 18\eps.
  $
  
  \item Putting these bounds together, we obtain
  \begin{align*}
    \totalvardist{\D}{\tilde{\D}} &\leq \totalvardist{\D}{\D^{(1)}} + \totalvardist{\D^{(1)}}{\D^{(2)}}
    + \totalvardist{\D^{(2)}}{\D^{\prime\prime}} + \totalvardist{\D^{\prime\prime}}{\tilde{\D}} \\
    &\leq 3\eps + \eps + 4\eps + 18\eps = 26\eps.
  \end{align*}
\end{itemize}
\end{proof}
\noindent We are finally in position of proving the main result of the section:
\begin{proofof}{\autoref{theo:samp:corrector:monotonicity:cdf}}
The theorem follows from \autoref{lemma:correct:waterfill:defined}, \autoref{lemma:correct:waterfill:queries}, \autoref{lemma:correct:waterfill:monotone} and \autoref{lemma:correct:waterfill:distance}, setting $K=mL=\sqrt{m\ell}$ (where $\ell=\bigO{\log n/\eps}$ as defined in the Birg\'e decomposition).
\end{proofof}

\section{Constrained Error Models}\label{sec:specific:errors}
	In the previous sections, no assumption was made on the form of the error, only on the amount. 
In this section, we suggest a model of errors capturing the deletion of a whole ``chunk'' of the 
distribution. We refer to this model as the {\em missing data model}, where we assume that some $\eps$ probability is removed by taking out all the weight of an arbitrary interval $[i,j]$ for $1\leq i<j\leq n$
and redistributing it on the rest of the domain as per rejection sampling.\footnote{ That is, if $\D$ was the original distribution, 
the faulty one $\D^{(i,j)}$ is formally defined as $(1+\eps)\indicSet{[n]\setminus[i,j]}\cdot\D - \eps\cdot\uniform_{[i,j]}$, where $\eps=\D([i,j])$.}
We show that one can design sampling improvers for monotone distributions
with arbitrarily large amounts of error. Hereafter, $\D$ will denote the original (monotone) distribution (before the deletion error occured), and $\D^\prime=\D^{i,j}$ the resulting (faulty) one, to which the sampling improver has access.
Our sampling improver follows what could be called the ``learning-just-enough'' approach: instead of attempting to approximate the \emph{whole} unaltered original distribution, it only tries to learn the values of $i,j$; and then generates samples ``on-the-fly.'' At a high level, the algorithm works by \textsf{(i)} detecting the location of the missing interval (drawing a large (but still independent of $n$) number of samples), then \textsf{(ii)} estimating the weight of this interval under the original, unaltered distribution; and finally \textsf{(iii)} filling this gap uniformly by moving the right amount of probability weight from the end of the domain. To perform the first stage, we shall follow a paradigm first appeared in~\cite{DDS:12}, and utilize testing as a subroutine to detect ``when enough learning has been done.'' \medskip

\begin{restatable}{theorem}{missingdatacorrector}\label{theo:monotonicity:missingdata:theo}
For the class of distributions following the ``missing data'' error model, there exists a batch sampling improver \textsc{Missing-Data-Improver} \new{for monotonicity} that, on input $\eps,q,\delta$ and $\alpha$, achieves parameters $\eps_1=\bigO{\eps}$ and any $\eps_2 < \eps$; and has sample complexity $\tildeO{\frac{1}{\eps_2^{3}}\log\frac{1}{\delta}}$ independent of $\eps$. 
\end{restatable}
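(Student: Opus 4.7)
Following the outline sketched in the introduction, the plan is to build a batch sampling improver in two phases: a preprocessing phase that uses $\tildeO{1/\eps_2^3 \log(1/\delta)}$ samples to locate the missing interval $[i,j]$ and estimate its deleted mass $w$, followed by a sampling phase in which each of the $q$ queries is answered by a constant-cost rejection-sampling step that mixes samples from the input distribution $\D^\prime$ with uniformly drawn points from $[\hat{i},\hat{j}]$. Throughout, the key structural facts I would leverage are: outside the missing interval, $\D^\prime$ coincides with $\D$ rescaled by $1/(1-w)$, so it is still monotone there; and by \autoref{def:birge:decomposition} the Birg\'e flattening of $\D^\prime$ is supported on only $\bigO{\log n / \eps_2}$ intervals, which we can afford to manipulate explicitly.

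\textbf{Preprocessing: detection and estimation.} First I would draw all $m = \tildeO{1/\eps_2^3 \log(1/\delta)}$ samples up front and, from them, estimate the masses of all Birg\'e intervals. By the DKW inequality (\autoref{theo:dkw}) together with \autoref{coro:Birge:decomposition:robust}, these empirical masses approximate the true Birg\'e flattening of $\D^\prime$ well enough that the latter is recovered in total variation up to an additive $\bigO{\eps_2}$. To recover $[\hat{i},\hat{j}]$, I would apply the learning-through-testing paradigm of~\cite{DDS:12}: enumerate candidate pairs $(\hat{i},\hat{j})$ aligned with the Birg\'e boundaries (only polynomially many such candidates) together with candidate weights $\hat{w}$, and for each test whether the reconstructed distribution obtained by inserting uniform mass $\hat{w}$ on $[\hat{i},\hat{j}]$ and rescaling the rest is close to monotone; the first accepted candidate yields $(\hat{i},\hat{j},\hat{w})$. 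If no candidate with $\hat{w} \geq \Omega(\eps_2)$ is accepted, the algorithm concludes that $\D^\prime$ is already $\eps_2$-close to monotone and performs no correction. The estimate $\hat{w}$ can be refined from the empirical densities of the Birg\'e intervals immediately adjacent to $[\hat{i},\hat{j}]$, since monotonicity of the hidden $\D$ pins down the admissible filling density on $[\hat{i},\hat{j}]$ up to an additive $\bigO{\eps_2}$.

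\textbf{Sampling phase and main obstacle.} For each of the $q$ queries, I would output a sample from the corrected distribution $\tilde{\D}$ defined as the convex combination that places probability $\hat{w}$ on a uniform element of $[\hat{i},\hat{j}]$ and probability $1-\hat{w}$ on one fresh draw from $\D^\prime$. Each query thus costs $\bigO{1}$ samples on top of the preprocessing budget, and independence across queries is immediate since $(\hat{i},\hat{j},\hat{w})$ is fixed after preprocessing. Closeness then follows from two triangle inequalities: $\tilde{\D}$ and $\D^\prime$ differ only by the redistribution of mass $\hat{w}=\bigO{\eps}$, giving $\eps_1=\bigO{\eps}$; and $\tilde{\D}$ agrees with the genuinely monotone distribution one would obtain from the exact parameters up to the estimation errors in $\hat{w}$ and in the Birg\'e flattening, giving $\tilde{\D}$ at distance $\eps_2$ from monotonicity. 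The main obstacle will be achieving a sample complexity independent of $\eps$ while still robustly identifying $[i,j]$ when the gap sits at the tail of $\D$, where the original mass was naturally small and the ``missing'' signal is easily confused with the natural decay of a monotone distribution. This is exactly where the learning-through-testing paradigm is crucial: instead of learning $\D$ to accuracy $\eps_2$ (which would couple the detection threshold to $\eps$), we certify candidate reconstructions directly against monotonicity, so that the procedure halts as soon as some $(\hat{i},\hat{j},\hat{w})$ is accepted, regardless of how concentrated or diffuse the deleted mass is within $\D$.
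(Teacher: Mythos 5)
Your two-phase plan (preprocess to locate the gap and estimate its weight, then answer each query by a constant-cost mixing/rejection step) matches the paper's strategy, and your convex-combination reconstruction is fine in principle: the true missing weight always yields an exactly monotone reconstruction, so the family of candidates does contain a good one. The genuine gap is in how you propose to \emph{find} it. You enumerate candidates $(\hat{i},\hat{j},\hat{w})$ and ``test whether the reconstructed distribution is close to monotone,'' accepting the first that passes. This is a \emph{tolerant} monotonicity-testing task: because $[\hat{i},\hat{j}]$ is snapped to decomposition boundaries and $\hat{w}$ lives on a grid, no candidate reconstruction is guaranteed to be \emph{exactly} monotone, only $\bigO{\eps_2}$-close to it. The $k$-modal tester of~\cite{DDS:12} that you would invoke is non-tolerant (it only distinguishes monotone from far-from-monotone), so it may reject every candidate and your completeness argument collapses; and a genuinely tolerant tester at tolerance $\bigO{\eps_2}$ is not available at this sample cost. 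The paper's proof avoids this entirely by never testing reconstructions: it runs the non-tolerant tester on \emph{prefixes of $\D^\prime$ itself} (which is exactly $2$-modal), and converts accept/reject outcomes into quantitative information about the weight of the post-gap interval $[j+1,2j-i+1]$ via two structural lemmas (\autoref{lemma:consec:intervals:far} and \autoref{lemma:consec:intervals:close}): a prefix is rejected precisely when it has absorbed $\Omega(\eps_2)$ of post-gap weight. That reduction from ``locate and calibrate the gap'' to ``exact monotonicity testing of prefixes'' is the missing idea.

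Two secondary problems. First, your preprocessing learns the masses of all $\bigO{\log n/\eps_2}$ Birg\'e intervals to total-variation accuracy $\bigO{\eps_2}$; via DKW this costs $\Omega(\log^2 n/\eps_2^4)$ samples, reintroducing a dependence on $n$ that the paper is careful to avoid -- its partition is into $\bigO{1/\eps_2}$ intervals of roughly \emph{equal weight} (plus a separate treatment of the heavy prefix of the support), so the complexity is $\tildeO{1/\eps_2^3}$ with no dependence on $n$ (cf.\ \autoref{lemma:monotonicity:missingdata:step1}). Second, the claim that ``monotonicity of the hidden $\D$ pins down the admissible filling density on $[\hat{i},\hat{j}]$ up to an additive $\bigO{\eps_2}$'' is false: the erased restriction of $\D$ to $[i,j]$ is only constrained to lie between $\D(i-1)$ and $\D(j+1)$, so the true missing weight is not identifiable to accuracy $\bigO{\eps_2}$ when $\D$ drops steeply across the gap. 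This does not doom the approach -- one should target the \emph{minimal} admissible filling (the weight of the adjacent equal-length interval to the right, which is what the paper estimates and transfers from the tail) rather than the true $w$ -- but as stated your estimation step has no well-defined target.
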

\noindent The detailed proof of our approach, as well as the description of \textsc{Missing-Data-Improver}, are given in the next subsection.
  
\subsection{Proof of \autoref{theo:monotonicity:missingdata:theo}}
Before describing further the way to implement our 3-stage approach, we will need the following lemmata. The first examines the influence of adding or removing probability weight $\eps$ from a distribution, as it is the case in the missing data model:
\begin{lemma}\label{lemma:error:add}
Let $\D$ be a distribution over $[n]$ and $\eps>0$. Suppose $\D^\prime\eqdef(1+\eps) \D - \eps \D_1$, for some distribution $\D_1$. Then $\totalvardist{\D}{\D^\prime} \leq \eps$. 
\end{lemma}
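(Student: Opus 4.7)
The proof is a one-line computation once one expands the definition of $\D^\prime$. The plan is to compute the pointwise difference $\D(x) - \D^\prime(x)$, factor out an $\eps$, and then observe that what remains is exactly (twice) another total variation distance, which is bounded by $1$.

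Concretely, I would substitute the definition of $\D^\prime$ to get
\[
  \D(x) - \D^\prime(x) \;=\; \D(x) - (1+\eps)\D(x) + \eps\D_1(x) \;=\; \eps\bigl(\D_1(x) - \D(x)\bigr)
\]
for every $x\in[n]$. Taking absolute values and summing over $x$ yields
\[
  \sum_{x\in[n]} \abs{\D(x) - \D^\prime(x)} \;=\; \eps \sum_{x\in[n]} \abs{\D_1(x) - \D(x)} \;=\; 2\eps \cdot \totalvardist{\D_1}{\D}.
\]
Dividing by $2$ and using the trivial bound $\totalvardist{\D_1}{\D} \leq 1$ (since both $\D$ and $\D_1$ are probability distributions) then gives $\totalvardist{\D}{\D^\prime} \leq \eps$, as claimed.

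There is no real obstacle here: the only subtlety worth flagging is that $\D^\prime$ is not in general a probability distribution (it may take negative values on $\supp(\D_1)$ when $\eps$ is close to $1$), but the formula $\totalvardist{\D}{\D^\prime} = \frac{1}{2}\sum_x \abs{\D(x) - \D^\prime(x)}$ extends to signed measures with total mass $1$, which is the relevant setting in the missing-data model where $\D^\prime$ encodes the effect of subtracting an $\eps$-mass and redistributing. So the proof is essentially a two-line calculation.
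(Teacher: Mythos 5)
Your proof is correct and takes essentially the same route as the paper, which simply notes that the bound follows from applying the triangle inequality to the $\lp[1]$ distance between $\D$ and $\D^\prime$ — after factoring out $\eps$ this is exactly your bound $\totalvardist{\D_1}{\D}\leq 1$. Your remark that $\D^\prime$ need not be non-negative (only of total mass $1$) is a fair point of care that the paper glosses over.
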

\noindent The proof follows from a simple application of the triangle inequality to the $\lp[1]$ distance between $\D$ and $\D^\prime$. We note that the same bound applies if $\D^\prime=(1-\eps)\D+\eps \D_1$.\medskip

The next two lemmata show that the distance to monotonicity of distributions falling into this error model can be bounded in terms of the probability weight right after the missing interval.
\begin{lemma}\label{lemma:consec:intervals:far}
Let $\D$ be a monotone distribution and $\D^\prime=\D^{(i,j)}$ be the faulty distribution. If $\D([j+1,2j-i+1])>\eps$, then $\D^\prime$ is $\eps/2$-far from monotone.
\end{lemma}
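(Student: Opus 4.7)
The plan is to exploit the fact that the deletion of interval $[i,j]$ creates a ``trough'' immediately followed by a ``peak'' of equal length, whose weight is guaranteed to be large. Any monotone distribution must respect the opposite ordering on these two equal-length consecutive intervals, so it has to pay to ``correct'' the trough.

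First, I would record the structure of $\D^\prime$: on $[i,j]$ it is identically zero, and on the complement it equals $\D/(1-\eta)$ where $\eta = \D([i,j])$. In particular, writing $A = [i,j]$ and $B = [j+1, 2j-i+1]$ (which have the same length $j-i+1$), we have
\[
    \D^\prime(A) = 0 \qquad\text{and}\qquad \D^\prime(B) \;=\; \frac{\D(B)}{1-\eta} \;\geq\; \D(B) \;>\; \eps.
\]

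Next, fix an arbitrary monotone distribution $M$ and let $a \eqdef M(A)$, $b \eqdef M(B)$. Since $M$ is non-increasing and $A$, $B$ are consecutive intervals of equal length with $A$ preceding $B$, a term-by-term comparison gives $a \geq b$. Now lower bound the $\lp[1]$ distance between $\D^\prime$ and $M$ by summing only over $A \cup B$:
\[
    \normone{\D^\prime - M} \;\geq\; \sum_{k\in A} M(k) \;+\; \Bigl|\, \D^\prime(B) - M(B)\, \Bigr| \;=\; a + |c - b|,
\]
where $c \eqdef \D^\prime(B) > \eps$. A quick case analysis (according to whether $c \geq b$ or $c < b$, using $a \geq b$ in the latter case) shows that $a + |c - b| > c > \eps$ in both cases. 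Dividing by $2$ yields $\totalvardist{\D^\prime}{M} > \eps/2$; since $M$ was arbitrary, $\D^\prime$ is $\eps/2$-far from monotone.

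I do not foresee any real obstacle here; the argument is just a careful accounting on two specific intervals, and the only slightly delicate point is remembering that $\D^\prime$'s renormalization by $1/(1-\eta)$ only helps (making $c$ at least as large as $\D(B)$), so that the strict inequality $c > \eps$ propagates cleanly through the case split.
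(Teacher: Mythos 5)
Your proof is correct and follows essentially the same route as the paper's: lower-bound the $\lp[1]$ distance to an arbitrary monotone $M$ by its contributions on the emptied interval $[i,j]$ and the equally long successor interval $[j+1,2j-i+1]$, use monotonicity of $M$ to compare these two, and do a short case analysis. Your explicit observation that the renormalization only increases $\D^\prime(B)$ above $\D(B)>\eps$ is a nice touch that the paper leaves implicit.
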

\begin{proof}
    Let $L\eqdef j-i$ be the length of the interval where the deletion occurred. 
    Since the interval $[j+1,2j-i+1]$ has the same length as $[i,j]$ and weight $p > \eps$, the average weight of an element is at least $\frac{\eps}{L}$. 
    Every monotone distribution $M$ should also be monotone on the interval $[i,2j-i+1]$: therefore, one must have $M([i,j])\geq M([j+1,2j-i+1])$. Let $q\eqdef M([i,j])$. As $\D^\prime([i,j])=0$, we get that $2\totalvardist{\D^\prime}{\tilde{\D}}\geq q$. On one hand, if $q < p$ then at least $q-p$ weight must have been ``removed'' from $[i,2j-i+1]$ to achieve monotonicity, and altogether $2\totalvardist{\D^\prime}{M}\geq q+(p-q)=p$. On the other hand, if $q \geq p$ we directly get $2\totalvardist{\D^\prime}{M}\geq q\geq p$. In both cases, \[
    \totalvardist{\D^\prime}{M}\geq p/2\geq \eps/2
    \]
    and $\D^\prime$ is $\eps/2$-far from monotone.
\end{proof}

\begin{lemma}\label{lemma:consec:intervals:close}
  Let $\D$ be a monotone distribution and $\D^\prime=\D^{(i,j)}$ as above. If $\D^\prime([j+1,2j-i+1])<\eps/2$, then $\D^\prime$ is $\eps$-close to monotone.\end{lemma}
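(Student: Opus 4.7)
The plan is to exhibit an explicit monotone distribution $M$ within total variation distance $\eps/2$ of $\D^\prime$. Write $L \eqdef j-i+1$ for the length of the missing interval, and $w \eqdef \D^\prime([j+1,2j-i+1]) < \eps/2$. The key observation is that $\D^\prime$ is already non-increasing on $[1,i-1]$ and on $[j+1,n]$ (both being rescalings of the monotone distribution $\D$) and vanishes on $[i,j]$; the only obstruction to monotonicity is the possible upward jump at $j+1$. To repair this, I would simply \emph{shift} the post-gap portion of $\D^\prime$ leftward by $L$ positions to fill the gap, truncating the resulting tail:
\[
  M(x) \;\eqdef\; \begin{cases}
    \D^\prime(x) & \text{if } 1 \leq x \leq i-1, \\
    \D^\prime(x + L) & \text{if } i \leq x \leq n - L, \\
    0 & \text{if } n - L + 1 \leq x \leq n.
  \end{cases}
\]

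The first step would be to verify that $M$ is a valid monotone probability distribution. Mass preservation is immediate: $\sum_x M(x) = \D^\prime([1,i-1]) + \D^\prime([j+1,n]) = 1 - \D^\prime([i,j]) = 1$. Monotonicity is inherited from $\D^\prime$ on $[1,i-1]$ directly, and on $[i, n-L]$ through the shift (since $\D^\prime$ is non-increasing on $[j+1,n]$); the boundary $M(i-1) = \D^\prime(i-1) \geq \D^\prime(j+1) = M(i)$ uses monotonicity of $\D$ applied to $i-1 < j+1$; and the transition $M(n-L) = \D^\prime(n) \geq 0 = M(n-L+1)$ to the zero tail is trivial.

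The core step is the total-variation computation. Splitting $\sum_x |\D^\prime(x)-M(x)|$ across the four regions and performing the substitution $y=x+L$ in the second, the contributions are: $0$ on $[1,i-1]$; $\sum_{x=i}^{j} \D^\prime(x+L) = \D^\prime([j+1,2j-i+1]) = w$ on $[i,j]$; $\sum_{x=j+1}^{n-L} \bigl(\D^\prime(x)-\D^\prime(x+L)\bigr) = w - \D^\prime([n-L+1,n])$ on $[j+1,n-L]$ (using monotonicity of $\D^\prime$ on $[j+1,n]$ to drop the absolute values, then telescoping); and $\D^\prime([n-L+1,n])$ on $[n-L+1,n]$. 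Summing these four terms, the $\D^\prime([n-L+1,n])$ contributions cancel, leaving exactly $2w$. Hence $\totalvardist{\D^\prime}{M} = w < \eps/2 < \eps$, so $\D^\prime$ is $\eps$-close to monotone, as desired.

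The subtle point I expect to be the main obstacle is ensuring monotonicity at the \emph{right} end of the filled region. Simpler constructions (e.g.\ filling $[i,j]$ with the constant $\D^\prime(j+1)$, or with the average $w/L$, then renormalizing) preserve monotonicity on the left boundary but can fail at $j/j+1$ or introduce an uncontrolled renormalization error of order $L\cdot \D^\prime(j+1)$, which is not bounded by $w$. The shift-and-truncate construction circumvents this by importing monotonicity from the decreasing tail of $\D^\prime$ itself. A related delicate point is the non-negativity of the middle contribution $w - \D^\prime([n-L+1,n])$, which follows because monotonicity of $\D^\prime$ on $[j+1,n]$ forces the last $L$-block to have mass at most that of the first $L$-block, i.e., $\D^\prime([n-L+1,n]) \leq w$.
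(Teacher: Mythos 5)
Your proof is correct, and it takes a genuinely different route from the paper's. The paper also exhibits an explicit monotone $M$, but does so by \emph{uniformizing}: it replaces $\D^\prime$ on $[j+1,2j-i+1]$ by the constant level $p/(j-i+1)$ (keeping that block's total weight $p$), fills the gap $[i,j]$ uniformly at the same level with $p$ mass borrowed from the very end of the domain, and bounds the $\lp[1]$ cost of each of these two modifications by $2p$, for a total of $4p<2\eps$, i.e.\ $\totalvardist{\D^\prime}{M}<\eps$. Your shift-and-truncate construction avoids both the uniformization and the borrowing step: monotonicity of $M$ is inherited wholesale from the two monotone pieces of $\D^\prime$, and the $\lp[1]$ cost telescopes exactly to $2w$, giving the stronger conclusion $\totalvardist{\D^\prime}{M}=w<\eps/2$. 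The factor-of-two improvement is real (the paper pays separately for flattening $[j+1,2j-i+1]$ and for the mass relocated from the tail), though it is not needed for how the lemma is used downstream. Two minor caveats, affecting neither proof's validity in spirit: if $2j-i+1>n$ the interval $[j+1,2j-i+1]$ and your index arithmetic must be truncated at $n$ (your construction degrades gracefully, since the overflowing part of $[i,j]$ simply lands in the zero tail and contributes nothing); and your closing observation that $\D^\prime([n-L+1,n])\leq w$, while true, is not actually needed --- the middle contribution is a sum of pointwise non-negative differences, so its non-negativity is automatic once the absolute values are dropped.
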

\begin{proof}
  We will constructively define a monotone distribution $M$ which will be $\eps$-close to $\D^\prime$. Let $p\eqdef\D^\prime([j+1,2j-i+1])<\eps/2$. According to the missing data model, $\D^\prime$ should be monotone on the intervals $[1,i-1]$ and $[j+1,n]$. In particular, the probability weight of the last element of $[j+1,2j-i+1]$ should be below the average weight of the interval, i.e. for all $k \geq 2j-i+1$ one has $\D^\prime(k) \leq \D^\prime(2j-i+1)<\frac{p}{j-i+1}$.

  So, if we let the distribution $M$ (that we are constructing) be uniform on the interval $[j+1,2j-i+1]$ and have also total weight $p$ there, monotonicity will not be violated at the right endpoint of the interval; and the $\lp[1]$ distance between $\D^\prime$ and $M$ in that interval will be at most $2p$. ''Taking'' another $p$ probability weight from the very end of the domain and moving it to the interval $[i,j]$ (where it is then uniformly spread) to finish the construction of $M$ adds at most another $2p$ to the $\lp[1]$ distance. Therefore,
  $2\totalvardist{\D^\prime}{M}\leq 2p +2p < 2\eps$; and $M$ is monotone as claimed.  
\end{proof}

\noindent The sampling improver is described in \autoref{algo:missing:data:improver}.
\begin{algorithm}\caption{\textsc{Missing-Data-Improver}}\label{algo:missing:data:improver}
\algblock[block]{Start}{End}
\begin{algorithmic}[1]
  \Require $\eps$, $\eps_2 < \eps$, $\delta \in (0,1)$ and $q\geq 1$, sample access to $\D^\prime$.
  \Start\Comment{\textsc{Preprocessing}}
    \State Draw $m\eqdef\tildeTheta{\frac{1}{\eps_2^3}\log\frac{1}{\delta}}$ samples from $\D^\prime=\D^{i,j}$.
    \State Run the algorithm of \autoref{lemma:monotonicity:missingdata:step1} on them to get an estimate $(a,b)$ of the unknown $(i,j)$ or the value \textsf{close}.
    \State Run the algorithm of \autoref{lemma:monotonicity:missingdata:step2} on them to get an estimate $\gamma$ of $\D^\prime([b+1,2b-a+1])$, and     values $c,\gamma^\prime$ such that $\abs{\D^\prime([c,n])-\gamma^\prime}\leq\eps_2^{3/2}$.
  \End
  \Start\Comment{\textsc{Generating}}
    \For{$i$ from $1$ to $q$}
      \State Draw $s_i$ from $\D^\prime$.
      \If{the second step of \textsc{Preprocessing} returned \textsf{close}, or $\gamma < 5\eps_2^{3/2}$}\label{algo:step:already:close}
        \State \Return $s_i$ \Comment{The distribution is already $\eps_2$-close to monotone; do not change it.}
      \EndIf
      \If{$s_i\in [c,n]$}\Comment{Move $\gamma$ weight from the end to $[a,b]$}
        \State With probability $\gamma/\gamma^\prime$, \Return a uniform sample from $[a,b]$
        \State Otherwise, \Return $s_i$
      \ElsIf{$s_i\in [b+1,2b-a+1]$}
        \State \Return a uniform sample from $[b+1,2b-a+1]$ 
      \Else
        \State \Return $s_i$ \Comment{Do not change the part of $\D^\prime$ that need not be changed.}
      \EndIf
    \EndFor
  \End
\end{algorithmic}
\end{algorithm}

\paragraph{Implementing \textsf{(i)}: detecting the gap}

\begin{lemma}[Lemma \textsf{(i)}]\label{lemma:monotonicity:missingdata:step1}
There exists an algorithm that, on input $\alpha \in(0,1/3)$ and $\delta\in(0,1)$, takes $\tildeO{\frac{1}{\alpha^6}\log\frac{1}{\delta}}$ samples from $\D^\prime=\D^{i,j}$ and outputs either two elements $a,b\in[n]$ or \textsf{close} such that the following holds. With probability at least $1-\delta$, 
\begin{itemize}
  \item if it outputs elements $a,b$, then \textsf{(a)} $[i,j]\subseteq[a,b]$ and \textsf{(b)} $\D^\prime([a,b]) \leq 3\alpha^2$;
  \item if it outputs \textsf{close}, then $\D^\prime$ is $\alpha^2$-close to monotone.
\end{itemize}
\end{lemma}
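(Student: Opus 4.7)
The plan is to combine the Dvoretzky--Kiefer--Wolfowitz inequality with a combinatorial search over ``empirical gaps'' (maximal sample-free intervals between consecutive sorted samples), exploiting the dichotomy given by~\autoref{lemma:consec:intervals:far}--\autoref{lemma:consec:intervals:close}: the deleted interval $[i,j]$ has \emph{exactly} zero $\D'$-weight, and if $\D'$ is $\alpha^2$-far from monotone then the equal-length interval $[j+1,2j-i+1]$ immediately after it must have $\D'$-weight at least $\alpha^2/2$.

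Concretely, I would draw $m=\Theta(\log(1/\delta)/\alpha^6)$ i.i.d.\ samples from $\D'$, sort them as $X_{(1)}\le\cdots\le X_{(m)}$, and form the empirical CDF $\hat F'$. By~\autoref{theo:dkw}, $\|\hat F'-F'\|_\infty\le c\alpha^3$ with probability $\ge 1-\delta$ for a small absolute constant $c$, whence $|\D'([a,b])-\hat\D'([a,b])|\le 2c\alpha^3$ for every $[a,b]$; I condition on this event. For each empirical gap $G=[X_{(k)}+1,X_{(k+1)}-1]$, let $W_G\eqdef[X_{(k+1)},\min(X_{(k+1)}+|G|-1,n)]$ be the adjacent ``next window'' of the same length. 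The algorithm returns any $G$ satisfying $\hat F'(W_G)\ge \alpha^2/5$; otherwise it outputs~\textsf{close}.

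For correctness, if a gap $G=(a,b)$ is returned then $\hat\D'(G)=0$ directly gives $\D'([a,b])\le 2c\alpha^3\le 3\alpha^2$. Since $\D'([i,j])=0$ deterministically forces no sample into $[i,j]$, the interval $[i,j]$ sits inside a unique empirical gap $G^\ast$; any other empirical gap $G'$ is disjoint from $[i,j]$ and hence lies, together with its next window $W_{G'}$, in one of the monotone non-increasing pieces $[1,i-1]$ or $[j+1,n]$ of $\D'$. Monotonicity there yields $\D'(W_{G'})\le\D'(G')\le 2c\alpha^3$ and thus $\hat F'(W_{G'})\le 4c\alpha^3<\alpha^2/5$ for $\alpha<1/3$, ruling out $G'$; the returned gap must therefore be $G^\ast\supseteq[i,j]$. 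If instead \textsf{close} is returned, then $G^\ast$ itself fails the test: writing $G^\ast=[s+1,s'-1]$ with $s<i$ and $s'>j$, one checks $|G^\ast|\ge j-i+1$ and $[j+1,s'-1]\subseteq G^\ast$, so $W_{G^\ast}\supseteq[s',2j-i+1]$ and the decomposition $\D'([j+1,2j-i+1])=\D'([j+1,s'-1])+\D'([s',2j-i+1])\le 2c\alpha^3+(\alpha^2/5+2c\alpha^3)<\alpha^2/2$ invokes~\autoref{lemma:consec:intervals:close} with $\eps=\alpha^2$ and gives that $\D'$ is $\alpha^2$-close to monotone. The main subtlety is ruling out false-positive empirical gaps inside the monotone pieces of $\D'$ and correctly aligning $W_{G^\ast}$ with the canonical interval $[j+1,2j-i+1]$ when $s'>j+1$; both are handled cleanly by the uniform DKW estimate together with the piecewise monotonicity of $\D'$.
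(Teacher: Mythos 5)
Your argument is essentially correct in its core logic, and it takes a genuinely different route from the paper's. The paper first isolates the ``big elements'' at the head of the support (where it can locate $[i,j]$ exactly from the empirical pmf), then partitions the remaining light part into $\Theta(1/\alpha^2)$ intervals of weight roughly $\alpha^2$ and runs the $2$-modal monotonicity tester of~\cite{DDS:12} on successive prefixes, using \autoref{lemma:consec:intervals:far} and \autoref{lemma:consec:intervals:close} to argue that the first rejecting prefix pins $[i,j]$ down to two consecutive intervals. Your scheme replaces the tester entirely by a direct comparison, for each empirical gap $G$, of $\hat{\D}'(G)=0$ against the empirical weight of its right-translate $W_G$; this is more elementary and self-contained (only DKW is used), achieves the same $\tildeO{\alpha^{-6}\log(1/\delta)}$ sample complexity, and gives a stronger conclusion in the output case ($\D'([a,b])=\bigO{\alpha^3}$ rather than $3\alpha^2$). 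One repair to your justification is needed: a window $W_{G'}$ for a gap $G'$ disjoint from $[i,j]$ need \emph{not} lie inside a single monotone piece of $\D'$ --- it can straddle $[i,j]$ and spill into $[j+1,n]$. The inequality $\D'(W_{G'})\le\D'(G')$ nevertheless holds, because $W_{G'}$ is the translate of $G'$ by $\abs{G'}$ and $\D'(x)\ge\D'(y)$ for every $x<y$ with $x\notin[i,j]$ (the only monotonicity violations of $\D'$ have their left endpoint inside the deleted interval); the argument should be phrased this way.

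There is one concrete hole: you only enumerate gaps between consecutive order statistics, omitting the boundary gaps $[1,X_{(1)}-1]$ and $[X_{(m)}+1,n]$. The leading one matters. Take $\D$ (nearly) uniform on $[n]$ and delete $[2,n/2]$: then $\D'(1)=\Theta(1/n)$, so with $m=\poly(1/\alpha)\log(1/\delta)\ll n$ samples the element $1$ is almost surely never drawn, $[i,j]=[2,n/2]$ sits in the untested region before $X_{(1)}$, every tested gap is disjoint from $[i,j]$ and fails your threshold test, and your algorithm outputs \textsf{close} --- yet $\D([j+1,2j-i+1])=\Omega(1)$, so by \autoref{lemma:consec:intervals:far} the distribution $\D'$ is $\Omega(1)$-far from monotone, violating the claimed guarantee. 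The fix is immediate: add the sentinels $X_{(0)}=0$ and $X_{(m+1)}=n+1$ and include the two boundary gaps in the search. Your analysis (both the soundness of rejecting gaps disjoint from $[i,j]$, and the application of \autoref{lemma:consec:intervals:close} when $G^\ast$ fails the test) extends verbatim; for the trailing gap $W_G$ is empty, so it is never returned and cannot cause a false positive.
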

\begin{proof}
    Inspired by techniques from~\cite{DDS:12},
  we first partition the domain into $t=\bigO{1/\alpha^2}$ intervals 
  $I_1,\dots, I_t$ of roughly equal weight as follows. By taking $\bigO{\frac{1}{\alpha^6}\log\frac{1}{\delta}}$ samples, the DKW inequality ensures that with probability at least $1-\delta/2$ we obtain an approximation $\hat{\D}$ of $\D^\prime$, close up to $\alpha^3/5$ in Kolmogorov distance. We hereafter assume this holds. For our partitioning to succeed, we first have to take care of the ``big elements,'' which by assumption on $\D^\prime$ (which originates from a monotone distribution) must all be at the beginning. In more detail, let 
  \[
    r \eqdef \max\setOfSuchThat{ x\in [n] }{ \hat{\D}(x) \geq \frac{4\alpha^3}{5} }
  \]
  and $B\eqdef\{1,\dots,r\}$ be the set of potentially big elements. Note that if $\D^\prime(x) \geq \alpha^3$, then necessarily $x\in B$. This leaves us with two cases, depending on whether the ``missing data interval'' is amidst the big elements, or in the tail part of the support.  
  \begin{itemize}
    \item If $[i,j] \subseteq B$: it is then straightforward to \emph{exactly} find $i,j$, and output them as $a,b$. Indeed all elements $x\in B$ have, by monotonicity, either $\D^\prime(x) \geq \D^\prime(r) \geq \frac{3\alpha^3}{5}$, or $\D^\prime(x)=0$ (the latter if and only if $x\in[i,j]$). Thus, one can distinguish between $x\in[i,j]$ (for which $\hat{\D}(x) \leq \alpha^3/5$) and $x\notin[i,j]$ (in which case $\hat{\D}(x) \geq 2\alpha^3/5$).
    \item If $[i,j] \not\subseteq B$: then, as $r\notin [i,j]$ (since $\D^\prime(r) > 0$), it must be the case that $[i,j] \subseteq \bar{B}=\{r+1,\dots, n\}$. Moreover, every point $x\in \bar{B}$ is ``light:'' $\D^\prime(x) < \alpha^3$ and $\hat{\D}(x) < \frac{4\alpha^3}{5}$.     We iteratively define $I_1,\dots, I_t\subseteq \bar{B}$, where $I_i=[r_i+1,r_{i+1}]$: $r_1\eqdef r+1$, $r_{t+1}\eqdef n$, and for $1 \leq i \leq t-1$
    \[
        r_{i+1} \eqdef \min\setOfSuchThat{ s > r_i }{ \hat{\D}([r_i+1,x]) \geq \alpha^2 }\;.
    \]
    This guarantees that, for all $i\in[t]$, $\D^\prime( I_i ) \in [\alpha^2-\frac{2\alpha^3}{5}, \alpha^2+\frac{4\alpha^3}{5}+\frac{2\alpha^3}{5}]\subset[\alpha^2-\frac{3\alpha^3}{2},\alpha^2+\frac{3\alpha^3}{2}]$. (And in turn that $t=\bigO{1/\alpha^2}$ as claimed.)
  Observing that the definition of the missing data error model implies 
  $\D^\prime$ is 2-modal, we can now use the monotonicity tester of \cite[Section 3.4]{DDS:12}. This algorithm takes 
  only \new{$\bigO{\frac{k}{\eps^2}\log\frac{1}{\delta}}$} samples 
  (crucially, no dependence on $n$) to distinguish with probability at least $1-\delta$ whether
  a $k$-modal distribution is monotone versus $\eps$-far from it.

  We iteratively apply this tester with parameters $k=2$, $\eps=\alpha^2/4$ and $\delta^\prime=\bigO{\delta/t}$, to each of the at most $t$ prefixes of the form $P_\ell\eqdef \cup_{i=1}^\ell I_i$; a union bound ensures that with probability at least $1-\delta/2$ all tests are correct. Conditioning on this, we are able to detect the first interval $I_{\ell^\ast}$ which either contains or falls after $j$ (if no such interval is found, then the input distribution is already $\alpha^2$-close to monotone and we output \textsf{close}).   
  In more detail, suppose first no run of the tester rejects (so that \textsf{close} is outputted). Then, by \autoref{lemma:consec:intervals:far}, we must have $\D([j+1,2j-i+1])\leq 2\cdot\alpha^2/4=\alpha^2/2$, and \autoref{lemma:consec:intervals:close} guarantees $\D^\prime$ is then $\alpha^2$-close to monotone.
  
  Suppose now that it rejects on some prefix $P_{\ell^\ast}$ (and accepted for all $\ell < \ell^\ast$). As $\D^\prime$ is non-increasing on $[1,j]$, we must have $[i,j]\subset P_{\ell^\ast}$. Moreover, the tester will by \autoref{lemma:consec:intervals:far} reject as soon as an interval $[j+1,s]\subseteq [j+1,2j-i+1]$ of weight $\alpha^2/2$ is added to the current prefix. This implies, as each $I_\ell$ has weight at least $\alpha^2/2$, that $[i,j]\subseteq I_{\ell^\ast-1}\cup {\ell^\ast}=[a,b]$.
  
  Finally, observe that the above can be performed with $\bigO{\frac{1}{\alpha^2}\cdot \frac{1}{\alpha^4} \cdot \log t}=\tildeO{\frac{1}{\alpha^6}\log\frac{1}{\delta}}$ samples, as claimed (where the first $1/\alpha^2$ factor comes from doing rejection sampling to run the tester with domain $P_\ell$ only, which by construction is guaranteed to have weight $\bigOmega{1/\alpha^2}$). The overall probability of failure is at most $\delta/2+\delta/2=\delta$, as claimed.
  \end{itemize} 
\end{proof}

\paragraph{Implementing \textsf{(ii)}: estimating the missing weight}

Conditioning on the output $a,b$ of \autoref{lemma:monotonicity:missingdata:step1} being correct, the next lemma explains how to get a good estimate of the total weight we should \emph{put back} in $[a,b]$ in order to fix the deletion error.
\begin{lemma}\label{lemma:monotonicity:missingdata:step2}
Given $\D^\prime$, $\alpha$ as above, $\delta\in(0,1)$ and $a,b$ such that $[i,j]\subseteq[a,b]$ and $\D^\prime([a,b]) \leq 3\alpha^2$, there exists an algorithm which takes $\bigO{\frac{1}{\alpha^6}\log\frac{1}{\delta}}$ samples from $\D^\prime$ and outputs values $\gamma, \gamma^\prime$ and $c$ such that the following holds with probability at least $1-\delta$: 
\begin{enumerate}[(i)]
  \item\label{lemma:monotonicity:missingdata:step2:1}
         $\abs{ \D^\prime([b+1,2b-a+1])-\gamma } \leq \alpha^3$;
  \item\label{lemma:monotonicity:missingdata:step2:2}
         $\abs{ \D^\prime([c,n])-\gamma^\prime } \leq \alpha^3$ and $\gamma^\prime\geq \gamma$;
  \item\label{lemma:monotonicity:missingdata:step2:3}
         $\D^\prime([c,n]) \geq \D^\prime([b+1,2b-a+1]) - 2\alpha^3$ and $\D^\prime([c+1,n]) < \D^\prime([b+1,2b-a+1]) + 2\alpha^3$;
  \item\label{lemma:monotonicity:missingdata:step2:4}
         $\gamma \leq 2\eps+4\alpha^3$.
\end{enumerate}
\end{lemma}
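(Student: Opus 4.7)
The plan is to use the Dvoretzky--Kiefer--Wolfowitz (DKW) inequality to obtain a sufficiently accurate empirical approximation of $\D^\prime$, from which all four parts will follow by choosing $c$ as an appropriate threshold with respect to $\gamma$. Concretely, I would draw $m = \bigTheta{\frac{1}{\alpha^6}\log\frac{1}{\delta}}$ samples and form the empirical distribution $\hat{\D}$. By \autoref{theo:dkw}, with probability at least $1-\delta$ the empirical cdf satisfies $\kolmogorov{\D^\prime}{\hat{\D}} \leq \alpha^3/2$, so that for every interval $I \subseteq [n]$ one has $\abs{\hat{\D}(I) - \D^\prime(I)} \leq \alpha^3$. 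Condition on this event throughout.

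I would then simply set $\gamma \eqdef \hat{\D}([b+1, 2b-a+1])$, so that (\ref{lemma:monotonicity:missingdata:step2:1}) follows directly. To obtain (\ref{lemma:monotonicity:missingdata:step2:2}) and (\ref{lemma:monotonicity:missingdata:step2:3}), define $c \eqdef \max\setOfSuchThat{ x\in[n] }{ \hat{\D}([x,n]) \geq \gamma }$ (well-defined since $\hat{\D}([1,n])=1 \geq \gamma$) and $\gamma^\prime \eqdef \hat{\D}([c,n]) \geq \gamma$. The bound $\abs{\D^\prime([c,n]) - \gamma^\prime} \leq \alpha^3$ is immediate from DKW, giving (\ref{lemma:monotonicity:missingdata:step2:2}). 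For (\ref{lemma:monotonicity:missingdata:step2:3}), the lower bound is $\D^\prime([c,n]) \geq \gamma^\prime - \alpha^3 \geq \gamma - \alpha^3 \geq \D^\prime([b+1,2b-a+1]) - 2\alpha^3$; for the upper bound, maximality of $c$ yields $\hat{\D}([c+1,n]) < \gamma$, whence $\D^\prime([c+1,n]) \leq \hat{\D}([c+1,n]) + \alpha^3 < \gamma + \alpha^3 \leq \D^\prime([b+1,2b-a+1]) + 2\alpha^3$.

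The main obstacle is (\ref{lemma:monotonicity:missingdata:step2:4}), which is where I would have to exploit the structure of the missing-data model together with the monotonicity of the \emph{original} distribution $\D$ (rather than of $\D^\prime$, which need not be monotone). Since $[b+1, 2b-a+1]$ has the same length as $[a,b]$ and lies immediately to its right, monotonicity of $\D$ gives $\D([b+1, 2b-a+1]) \leq \D([a,b])$. The missing-data model ensures $\D^\prime(x) = \D(x)/(1-\eps)$ for $x \notin [i,j]$ and $\D^\prime(x)=0$ on $[i,j]$, so that $\D([a,b]) = (1-\eps)\D^\prime([a,b]) + \eps \leq \eps + 3\alpha^2$ using the input guarantee $\D^\prime([a,b])\leq 3\alpha^2$ from \autoref{lemma:monotonicity:missingdata:step1}. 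Dividing by $1-\eps$ and combining with (\ref{lemma:monotonicity:missingdata:step2:1}) yields $\gamma \leq \D^\prime([b+1,2b-a+1]) + \alpha^3 \leq 2\eps + \bigO{\alpha^2}$ (for $\eps\leq 1/2$), which matches the claimed bound up to absolute constants. The delicate point here is that (iv) crucially needs the \emph{monotone} structure of $\D$ to translate the local weight bound on $[a,b]$ (a statement about $\D^\prime$) into a bound on the post-gap interval $[b+1,2b-a+1]$; without monotonicity, no such transfer is possible.
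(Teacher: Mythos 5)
Your construction of $\gamma$, $c$, and $\gamma^\prime$ and the verification of items (i)--(iii) are exactly the paper's argument: one DKW application at Kolmogorov accuracy $\alpha^3/2$ (hence interval accuracy $\alpha^3$), the same threshold definition of $c$, and the same maximality argument for the upper bound in (iii). For item (iv) you take a genuinely different, though equally valid, route. The paper invokes only the promise that $\D^\prime$ is $\eps$-close to monotone and reuses the argument of \autoref{lemma:consec:intervals:far} to get $\D^\prime([b+1,2b-a+1]) \leq \D^\prime([a,b]) + 2\eps \leq 2\eps + 3\alpha^2$; you instead unfold the missing-data model explicitly, using monotonicity of the \emph{original} $\D$ on the two equal-length adjacent intervals together with the rejection-sampling renormalization. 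Your version is more self-contained for this specific error model, while the paper's is slightly more robust in that it only needs closeness of $\D^\prime$ to monotonicity rather than the exact form of the corruption. Note that both arguments yield $\gamma \leq 2\eps + \bigO{\alpha^2}$ rather than the $2\eps + 4\alpha^3$ written in the statement -- the paper's own proof has the same $3\alpha^2$ term -- so your parenthetical ``up to absolute constants'' understates the discrepancy (it is $\alpha^2$ versus $\alpha^3$, not a constant), but this is a defect of the stated lemma, not of your proof, and the weaker $\bigO{\alpha^2}$ bound is all that is used downstream.
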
 
\begin{proof}
 Again by invoking the DKW inequality, we can obtain (with probability at least $1-\delta$) an approximation $\hat{\D}$ of $\D^\prime$, close up to $\alpha^3/2$ in Kolmogorov distance. This provides us with an estimate $\gamma$ of $\D^\prime([b+1,2b-a+1])$ satisfying the first item (as, for any interval $[r,s]$, $\hat{\D}([r,s])$ is within an additive $\alpha^3/2$ of $\D^\prime([r,s])$). Then, setting 
 \[
    c \eqdef \max \setOfSuchThat{ x\in[n] }{ \hat{\D}([x,n]) \geq \gamma }
 \]
 and $\gamma^\prime \eqdef \hat{\D}([c,n])$, items \ref{lemma:monotonicity:missingdata:step2:2} and \ref{lemma:monotonicity:missingdata:step2:3} follow. The last bound of \ref{lemma:monotonicity:missingdata:step2:4} derives from an argument identical as of \autoref{lemma:consec:intervals:far} and the promise that $\D^\prime$ is $\eps$-close to monotone: indeed, one must then have $\D^\prime([b+1,2b-a+1]) \leq \D^\prime([a,b]) + 2\eps \leq 2\eps + 3\alpha^2$, which with \ref{lemma:monotonicity:missingdata:step2:1} concludes the argument.
\end{proof} 

\noindent To finish the proof of \autoref{theo:monotonicity:missingdata:theo}, we apply the above lemmata with  $\alpha\eqdef\bigTheta{\sqrt{\eps_2}}$; and need to show that the algorithm generates samples from a distribution that is $\eps_2=\bigO{\alpha^2}$-close to monotone. This is done by bounding the error encountered (due to approximation errors) in the following parts of the algorithm:  when estimating the weight $\gamma$ of an interval of equal length adjacent to the interval $[a,b]$, uniformizing its weight on that interval, and estimating the last $\gamma$-quantile of the distribution, in order to move the weight needed to fill the gap from there. 
If we could have perfect estimates of the gap ($[a,b]=[i,j]$), the missing weight $\gamma$ and the point $c$ such that $\D^\prime([c,n])=\gamma$, the corrected distribution would be monotone, as the probability mass function in both the gap and the next interval would be at the same ``level'' (that is, $\frac{\gamma}{b-a+1}$).

By choice of $m$, with probability at least $1-\delta$ the two subroutines of the \textsc{Preprocessing} stage (from \autoref{lemma:monotonicity:missingdata:step1} and \autoref{lemma:monotonicity:missingdata:step2}) behave as expected. We hereafter condition on this being the case. For convenience, we write $I=[a,b]$, $J=[b+1,2b-a+1]$ and $K=[c,n]$, where $a,b,c$ and $\gamma, \gamma^\prime$ are the outcome of the preprocessing phase.

\paragraph{If the test in Line~\ref{algo:step:already:close} passes.}
If the preprocessing stage returned either \textsf{close}, or a value $\gamma < 5\eps_2^{3/2}=5\alpha^3$, then we claim that $\D^\prime$ is already $\bigO{\alpha^2}$-close to monotone. The first case is by correctness of \autoref{lemma:monotonicity:missingdata:step1}; as for the second, observe that it implies $\D^\prime(J) < 6\alpha^3$. Thus, ``putting back'' (from the tail of the support) weight at most $6\alpha^3$ in $[i,j]$ would be sufficient to correct the violation of monotonicity; which yields an $\bigO{\alpha^3}$ upperbound on the distance of $\D^\prime$ to monotone.

\paragraph{Otherwise.} This implies in particular that $\gamma \geq 5\alpha^3$, and thus $\D^\prime(J) \geq 4\alpha^3$. By \autoref{lemma:monotonicity:missingdata:step2} \ref{lemma:monotonicity:missingdata:step2:3}, it is then also the case that $\D^\prime(K) \geq 2\alpha^3$. Then, denoting by $\tilde{\D}$ the corrected distribution, we have
\[
\tilde{\D}(x) = \begin{cases}
 \D^\prime(x) + \frac{\gamma}{\gamma^\prime}\cdot\frac{\D^\prime(K)}{\abs{I}} & \text{ if } x\in I \\
 \frac{\D^\prime(J)}{\abs{J}} & \text{ if } x\in J \\
 \D^\prime(x)\cdot(1-\frac{\gamma}{\gamma^\prime}) & \text{ if } x\in K \\
 \D^\prime(x) & \text{ otherwise.}
\end{cases}
\]
\paragraph{Distance to $\D^\prime$.} From the expression above, we get that
\[
2\totalvardist{\tilde{\D}}{\D^\prime} \leq \frac{\gamma}{\gamma^\prime} \D^\prime(K) +  2\D^\prime(J) + \frac{\gamma}{\gamma^\prime} \D^\prime(K) = 2\left(\frac{\gamma}{\gamma^\prime} \D^\prime(K) + \D^\prime(J) \right).
\] 
From \autoref{lemma:monotonicity:missingdata:step2}, we also know that $\D^\prime(J) \leq \gamma+\alpha^3$, $\D^\prime(K) \leq \gamma^\prime+\alpha^3$ and $\gamma/\gamma^\prime \leq 1$, so that
\[
\totalvardist{\tilde{\D}}{\D^\prime} \leq \frac{\gamma}{\gamma^\prime} (\gamma^\prime+\alpha^3) + \gamma+\alpha^3 \leq 2(\gamma+\alpha^3) \leq 4\eps+10\alpha^3 = \bigO{\eps}.
\] 
(Where, for the last inequality, we used \autoref{lemma:monotonicity:missingdata:step2} \ref{lemma:monotonicity:missingdata:step2:4}; and finally the fact that $\eps_2\leq \eps$).

\paragraph{Distance to monotone.}

Consider the distributions $M$ defined as
\[
M(x) = \begin{cases}
 \D^\prime(x) + \frac{\D^\prime(J)}{\abs{I}} & \text{ if } x\in I \\
 \frac{\D^\prime(J)}{\abs{J}} & \text{ if } x\in J \\
 \D^\prime(x)\cdot\left(1-\frac{\D^\prime(J)}{\D^\prime(K)}\right) & \text{ if } x\in K \\
 \D^\prime(x) & \text{ otherwise.}
\end{cases}
\]
We first claim that $M$ is $\bigO{\alpha^2}$-close to monotone. Indeed, $M$ is monotone on $[a,n]$ by construction (and as $\D^\prime$ was monotone on $[b,n]$). The only possible violations of monotonicity are on $[1,b]$, due to the approximation of $(i,j)$ by $(a,b)$ -- that is, it is possible for the interval $[a,i]$ to now have too much weight, with $M(a-1) < M(a)$. But as we have $\D^\prime([a,b])\leq 3\alpha^2$, the total extra weight of this ``violating bump'' is $\bigO{\alpha^2}$.\medskip

Moreover, the distance between $M$ and $\tilde{\D}$ can be upperbounded by their difference on $J$ and $K$:
\begin{align*}
  2\totalvardist{\tilde{\D}}{M} \leq 2\abs{ \D^\prime(J) - \frac{\gamma}{\gamma^\prime} \D^\prime(K) } \leq 2\alpha^3\frac{1+\frac{\alpha^3}{\D^\prime(K)}}{1-\frac{\alpha^3}{\D^\prime(K)}} \leq 6\alpha^3
\end{align*}
where we used the fact that $\frac{\gamma}{\gamma^\prime} \in \left[\frac{\D^\prime(J)-\alpha^3}{\D^\prime(K) + \alpha^3}, \frac{\D^\prime(J)+\alpha^3}{\D^\prime(K) - \alpha^3}\right]$, and that $\D^\prime(K) \geq 2\alpha^3$. By the triangle inequality, $\tilde{\D}$ is then itself $\bigO{\alpha^2}$-close to monotone. This concludes the proof of \autoref{theo:monotonicity:missingdata:theo}.\qed
 
\section{Focusing on randomness scarcity}\label{sec:focus:randomness}
	\subsection{Correcting uniformity}\label{sec:uniformity}
		In order to illustrate the challenges and main aspects of this section, we shall focus on what is arguably the most natural property of interest, ``being uniform'' 
(i.e. $\property=\{\uniform_n\})$. 
As a first observation, we note that when one is interested in correcting uniformity on an arbitrary domain $\domain$, allowing arbitrary
amounts of
additional randomness makes the task almost trivial: 
by using roughly $\log\abs{\domain}$ random bits per query, 
it is possible to interpolate arbitrarily 
between $\D$ and the uniform distribution.
One can naturally ask whether the same can be achieved 
\emph{while using no -- or very little -- additional randomness besides the draws from the sampling oracle itself.}
As we show below, this is possible, at the price of a slightly worse query complexity. 
We hereafter focus once again on the case $\domain=[n]$, 
and give constructions which achieve different trade-offs between the level of correction (of $\D$ to uniform), 
the fidelity to the original data 
(closeness to $\D$) and the sample complexity. We then show how
to combine these constructions to achieve reasonable performance
in terms of all the above parameters.
In \autoref{sec:unif:subgroup}, we turn to the related problem of correcting uniformity on an (unknown) subgroup of the domain, and extend our results to this setting.
Finally, we discuss the differences and relations with extractors in \autoref{sec:unif:extractors}.

\paragraph*{High-level ideas} The first algorithm we describe (\autoref{lemma:sampling:corrector:uniformity:vneumann}) is a sampling corrector based on a ``von Neumann-type'' approach: by seeing very crudely the distribution $\D$ as a distribution over two points (the first and second half of the support $[n]$), one can leverage the closeness of $\D$ to uniform to obtain with overwhelming probability a sequence of uniform random bits; and use them to generate a uniform element of $[n]$. The drawback of this approach lies in the number of samples required from $\D$: namely, $\tildeTheta{\log n}$. 

The second approach we consider relies on viewing $[n]$ as the Abelian group $\Z_n$, 
and leverages crucial properties of the convolution of distributions. 
Using a robust version of the fact that the uniform distribution is the absorbing element for this operation, we are able to argue that taking a \emph{constant} number of samples from $\D$ and outputting their sum obeys a distribution $\tilde{\D}$ exponentially closer to uniform (\autoref{lemma:sampling:corrector:uniformity}). This result, however efficient in terms of getting closer to uniform, does not guarantee anything non-trivial about the distance $\tilde{\D}$ to the input distribution $\D$. More precisely, starting from $\D$ which is at a distance \eps from uniform, it is possible to end up with $\tilde{\D}$ at a distance $\eps^\prime$ from uniform, but $\eps+\bigOmega{\eps^\prime}$ from $\D$ (see \autoref{claim:counterexample:convolution} for more details). In other terms, this improver does get us closer to uniform, but somehow can \emph{overshoot} in the process, getting too far from the input distribution.

The third improver we describe (in \autoref{lemma:sampling:corrector:uniformity:hybrid}) yields slightly different parameters: it essentially enables one to get ``midway'' between $\D$ and the uniform distribution, and to sample from a distribution $\tilde{\D}$ (almost) $(\eps/2)$-close to \emph{both} the input and the uniform distributions. It achieves so by combining both previous ideas: using $\D$ to generate a (roughly) unbiased coin toss, and deciding based on the outcome whether to output a sample from $\D$ or from the improver of \autoref{lemma:sampling:corrector:uniformity}.

Finally, by ``bootstrapping'' the hybrid approach described above, one can provide sampling access to an improved $\hat{\D}$ both arbitrarily close to uniform \emph{and} (almost) optimally close to the original distribution $\D$ (up to an additive $\bigO{\eps^3}$), as described in \autoref{lemma:sampling:corrector:uniformity:boostrap}. Note that this is at a price of an extra $\log(1/\eps_2)$ factor in the sample complexity, compared to \autoref{lemma:sampling:corrector:uniformity}: in a sense, the price of ``staying faithful to the input data.''

\begin{restatable}[von Neumann Sampling Corrector]{theorem}{unifvncorr}\label{lemma:sampling:corrector:uniformity:vneumann}
  For any $\eps< 0.49$ (and $\eps_1=\eps$) as in the definition, there exists a sampling corrector for uniformity with query complexity $\bigO{\log n(\log\log n + \log(1/\delta))}$ (where $\delta$ is the probability of failure per sample).
\end{restatable}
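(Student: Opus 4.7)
The plan is to implement a standard \emph{von Neumann-style} bit extraction on a coarsening of $\D$ to two bins, and then use the resulting unbiased bits to sample uniformly from $[n]$. Partition $[n]$ into halves $L \eqdef \{1,\ldots,\flr{n/2}\}$ and $R \eqdef [n]\setminus L$; since $\D$ is $\eps$-close to uniform and $\eps<0.49$, both $\D(L)$ and $\D(R)$ lie in $[1/2-\eps,1/2+\eps]$, and so are bounded away from $0$ by a positive constant. Drawing pairs $(s_1,s_2)$ of independent samples from $\D$, I would emit the bit $0$ on the event $\{s_1\in L,\; s_2\in R\}$, emit the bit $1$ on $\{s_1\in R,\; s_2\in L\}$, and discard the pair otherwise. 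The key observation is that because $\D(L)\D(R) = \D(R)\D(L)$, conditioned on an emitted bit it is \emph{exactly} uniform on $\{0,1\}$ and independent of all previously emitted bits~--~the bias of $\D$ does not propagate into the output stream, no matter how large it is (as long as $p\eqdef 2\D(L)\D(R)>0$).

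Given a stream of such unbiased bits, I would then generate a uniform element of $[n]$ in the standard way: form the integer represented by $k \eqdef \clg{\log_2 n}$ fresh bits, output it if it lies in $[n]$, and restart otherwise. Since $2^k < 2n$, each attempt succeeds independently with probability $>1/2$, and with probability at least $1-\delta/2$ only $\bigO{\log(1/\delta)}$ restarts are needed. The resulting corrected distribution $\tilde{\D}$ is then \emph{exactly} $\uniform_{[n]}$, so it trivially satisfies $\tilde{\D}\in\property$ and $\totalvardist{\tilde{\D}}{\D}=\totalvardist{\uniform_{[n]}}{\D}\leq\eps=\eps_1$, discharging both correctness requirements of \autoref{def:sampling:corrector}.

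For the sample complexity, each pair is accepted by the von Neumann extractor with probability $p = 2\D(L)\D(R) \geq 2(\frac{1}{2}-\eps)(\frac{1}{2}+\eps) = \frac{1}{2}-2\eps^2$, a positive constant under $\eps<0.49$. The number of pairs needed to produce a single bit is thus geometric with mean $1/p$, and a geometric tail bound gives failure probability at most $\delta^\prime$ after $\bigO{\log(1/\delta^\prime)/p} = \bigO{\log(1/\delta^\prime)}$ pairs. Setting $\delta^\prime \eqdef \delta/\bigTheta{\log n}$ per bit and applying a union bound over the $\bigTheta{\log n}$ bits required yields $\bigO{\log n \cdot (\log\log n + \log(1/\delta))}$ samples overall, matching the claimed bound. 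The only wrinkle is pure bookkeeping: one has to union-bound simultaneously (i) the probability of running out of samples before producing the next bit and (ii) the probability of rejecting too many times when $n$ is not a power of two; but both are geometric tails with constant parameter, so they can be absorbed without changing the asymptotic bound. There is no deep obstacle here~--~the argument hinges entirely on the fact that the symmetry $\D(L)\D(R)=\D(R)\D(L)$ makes the von Neumann filter bias-oblivious, and on the simple observation that $\eps<1/2$ keeps the per-pair acceptance probability bounded away from $0$.
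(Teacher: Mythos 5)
Your proposal is correct and follows essentially the same route as the paper: extract unbiased bits by a von Neumann filter applied to the two halves of the support, assemble $\Theta(\log n)$ of them into an exactly uniform element of $[n]$ (so the output is trivially in $\property$ and $\eps$-close to $\D$), and pay a $\log(\log n/\delta)$ per-bit cost via a union bound over the bits. If anything you are slightly more careful than the paper's write-up: you use the disjoint-pairs version of the filter, which is exactly unbiased (the paper's stated rule of stopping at the first adjacent $S_0S_1$ or $S_1S_0$, with failure probability $p^m+(1-p)^m$, is the overlapping-window variant, which on its own does not remove the bias), and you explicitly handle $n$ not a power of two by rejection, a case the paper glosses over and whose extra restarts can indeed be absorbed into the stated bound as you claim.
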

\ifnum\fulldetails=1 \begin{proof}
Let $\D$ be a distribution over $[n]$ such that $\totalvardist{\D}{\uniform}\leq \eps < 1/2 - c$ for some absolute constant \newer{$c < 1/2$ (e.g., $c=0.49$)}, and let $S_0, S_1$ denote respectively the sets $\{1,\dots,n/2\}$ and $\{n/2+1,\dots,n\}$. The high-level idea is to see a draw from $\D$ as a (biased) coin toss, depending on whether the sample lands in $S_0$ or $S_1$; by applying von Neumann's method, we then can retrieve a truly uniform bit at a time (with high probability). Repeating this $\log n$ times will yield a uniform draw from $[n]$. More precisely, it is immediate by definition of the total variation distance that $\abs{ \D(S_0)-\D(S_1) } \leq 2\eps$, so in particular (setting $p\eqdef \D(S_0)$) we have access to a Bernoulli random variable with parameter $p\in\left[\frac{1}{2}-\eps,\frac{1}{2}+\eps\right]$.

To generate \emph{one} uniform random bit (with probability of failure at most $\delta^\prime=\delta/\log n$), it is sufficient to take in the worst case $m\eqdef \clg{(\log\frac{1}{1-c})^{-1}\log\frac{2}{\delta^\prime}}$ samples, and stop as soon as a sequence $S_0S_1$ or $S_1S_0$ is seen (giving respectively a bit $0$ or $1$). If it does not happen, then the corrector $\textsc{VN--Improver}_n$ outputs \fail; the probability of failure is therefore
\[
   \probaOf{\textsc{VN--Improver}_n\text{ outputs }\fail} = p^{m} + (1-p)^m \leq 2\cdot (1-c)^m \leq \delta^\prime = \frac{\delta}{\log n}.
\]
By a union bound over the $\log n$ bits to extract, $\textsc{VN--Improver}_n$ indeed outputs a uniform random number $s\in[n]$ with probability at least $1-\delta$, using at most $m\log n = \bigO{\log n \log\frac{\log n}{\delta}}$ samples---and, in expectation, only $\bigO{(\log n)/{p}} = \bigO{\log n}$.
\end{proof}
\fi 
\ifnum\fulldetails=1 \new{As previously mentioned, we hereafter work modulo $n$, equating $[n]$ to the Abelian group $(\Z_n,+)$. This convenient (and equivalent) view will allow us to use properties of convolutions of distributions over Abelian groups,\footnote{For more detail on this topic, the reader is referred to \autoref{appendix:convolution:abelian}.} in particular the fact that the uniform distribution on $\Z_n$ is (roughly speaking) an attractive fixed point for this operation. In particular, taking $\D$ to be the (unknown) distribution promised to be \eps-close to uniform, \autoref{fact:convol:uniform:tv} guarantees that by drawing two independent samples $x,y\sim \D$ and computing $z=x+y \mod n$, the distribution of $z$ is $(2\eps^2)$-close to the uniform distribution on $\{0,\dots,n-1\}$. This key observation is the basis for our next result:}
\fi 
\begin{restatable}[Convolution Improver]{theorem}{unifconvolvi}\label{lemma:sampling:corrector:uniformity}
  For any $\eps< \frac{1}{\sqrt{2}}$, $\eps_2$ and $\eps_1=\eps+\eps_2$ as in the definition, there exists a sampling improver for uniformity with query complexity $\bigO{\frac{\log\frac{1}{\eps_2}}{\log\frac{1}{\eps}}}$.
\end{restatable}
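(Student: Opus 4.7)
The approach is dictated by the discussion immediately preceding the theorem: identify $[n]$ with the Abelian group $(\Z_n, +)$ and exploit the fact, codified in \autoref{fact:convol:uniform:tv}, that summing two independent draws from $\D$ modulo $n$ yields a distribution whose total variation distance to $\uniform_n$ is at most $2\eps^2$. In particular, self-convolution drives $\D$ towards the fixed point $\uniform_n$ very rapidly, so all we need is to iterate this reduction a carefully chosen number of times.

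The improver is therefore defined as follows. On each query, draw $s$ independent samples $x_1, \ldots, x_s \sim \D$ and output $z \eqdef x_1 + \cdots + x_s \pmod{n}$; the resulting output distribution $\tilde{\D}$ is exactly the $s$-fold self-convolution $\D^{*s}$. Taking $s$ to be a power of two, say $s = 2^k$, one can iterate the key inequality: writing $d_j \eqdef \totalvardist{\D^{*2^j}}{\uniform_n}$, the fact gives $d_{j+1} \le 2 d_j^2$. A straightforward induction on $j$ then yields $d_k \le (2\eps)^{2^k}/2$; the promise $\eps < 1/\sqrt{2}$ ensures that the recursion is well-behaved from the first iteration onward so that this bound is nontrivial. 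One then picks $k$ as the smallest integer such that $(2\eps)^{2^k}/2 \le \eps_2$. Taking two logarithms gives $s = 2^k = \bigO{\log(1/\eps_2)/\log(1/\eps)}$, matching the sample complexity claimed in the theorem.

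It only remains to check that both conditions of \autoref{def:sampling:corrector:general} are satisfied. By our choice of $s$, the closeness-to-uniform bound $\totalvardist{\tilde{\D}}{\uniform_n} \le \eps_2$ holds by construction, and the closeness-to-$\D$ bound follows immediately from the triangle inequality:
\[
  \totalvardist{\tilde{\D}}{\D} \;\le\; \totalvardist{\tilde{\D}}{\uniform_n} + \totalvardist{\uniform_n}{\D} \;\le\; \eps_2 + \eps \;=\; \eps_1.
\]

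There is no serious obstacle here, since all the heavy lifting is done by the referenced convolution fact; the only things that need care are \textsf{(a)} the precise form of the doubly-exponential recursion on the total variation distances, and \textsf{(b)} the bookkeeping relating $k$, the target $\eps_2$ and the number $s = 2^k$ of samples so that the final query complexity works out to $\bigO{\log(1/\eps_2)/\log(1/\eps)}$ as claimed.
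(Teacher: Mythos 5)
Your proposal is correct and takes essentially the same route as the paper's own proof: both iterate \autoref{fact:convol:uniform:tv} to show that the $s$-fold self-convolution $\D^{(s)}$ satisfies $\totalvardist{\D^{(s)}}{\uniform}\leq \frac{1}{2}(2\eps)^s$, pick $s$ so this is at most $\eps_2$, and bound $\totalvardist{\tilde{\D}}{\D}$ by the triangle inequality; your squaring recursion with $s=2^k$ versus the paper's one-sample-at-a-time induction is a purely cosmetic difference yielding the same complexity. (Be aware that, exactly as in the paper's own argument, the bound $\frac{1}{2}(2\eps)^s$ is only decreasing when $\eps<1/2$, so the stated range $\eps<1/\sqrt{2}$ really only guarantees that the \emph{first} convolution step is nontrivial — an imprecision you inherit from the paper rather than introduce.)
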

\ifnum\fulldetails=1 \begin{proof}
 Extending by induction the observation above to a sum of finitely many independent samples, we get that by drawing $k\eqdef\frac{\log\frac{1}{\eps_2}-1}{\log\frac{1}{\eps}-1}$ independent elements $s_1,\dots,s_k$ from $\D$ and computing
  \[
  s=\left(\sum_{\ell=1}^k s_\ell \mod n\right) + 1 \in [n]
  \]
  the distribution $\tilde{\D}$ of $s$ is $(\frac{1}{2}\left( 2\eps\right)^k)$-close to uniform; and by choice of $k$, $(\frac{1}{2}\left( 2\eps\right)^k)=\eps_2$.  As $\totalvardist{ \D }{ \tilde{\D} } \leq \totalvardist{ \D }{ \uniform } + \totalvardist{ \uniform }{ \tilde{\D} } \leq \eps+\eps_2$, the vacuous bound on the distance between $\D$ and $\tilde{\D}$ is as stated.
\end{proof}

This triggers a natural question: namely, can this ``vacuous bound'' be improved? That is, setting $\eps\eqdef\totalvardist{ \D }{ \uniform }$ and $\D^{(k)}\eqdef \D\convolution\cdots\convolution \D$ ($k$-fold convolution), what can be said about $\totalvardist{ \D }{ \D^{(k)} }$ as a function of $\eps$ and $k$? Trivially, the triangle inequality asserts that
  \[
          \eps - 2^{k-1}\eps^k \leq \totalvardist{ \D }{ \D^{(k)} } \leq \eps + 2^{k-1}\eps^k\;;
  \]
  but can  the right-hand side be tightened further? For instance, one might hope to achieve $\eps$. Unfortunately, this is not the case: even for $k=2$, one cannot get better than $\eps + \bigOmega{\eps^2}$ as an upper bound. Indeed, one can show that for $\eps\in(0,\frac12)$, there exists a distribution $\D$ on $\Z_n$ such that $\totalvardist{ \D }{ \uniform } = \eps$, yet $\totalvardist{ \D }{ \D\convolution \D } = \eps + \frac{3}{4}\eps^2 + \bigO{\eps^3}$ (see \autoref{claim:counterexample:convolution} in the appendix).

\fi 
\begin{restatable}[Hybrid Improver]{theorem}{unifconvolvii}\label{lemma:sampling:corrector:uniformity:hybrid}
  For any $\eps\leq \frac{1}{2}$, $\eps_1=\frac{\eps}{2}+2\eps^3+\eps^\prime$ and $\eps_2=\frac{\eps}{2}+\eps^\prime$, there exists a sampling improver for uniformity with query complexity $\bigO{\frac{\log\frac{1}{\eps^\prime}}{\log\frac{1}{\eps}}}$.
\end{restatable}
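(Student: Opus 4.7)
The plan is to build a hybrid improver that composes two primitives already at hand: the trivial ``identity'' (output a sample from $\D$ unchanged), and the convolution improver of \autoref{lemma:sampling:corrector:uniformity}, which on $k=\bigO{\log(1/\eps^\prime)/\log(1/\eps)}$ samples produces a draw from a distribution $\D^\star$ that is $\eps^\prime$-close to uniform. The output distribution $\tilde{\D}$ will be a (nearly) balanced mixture of $\D$ and $\D^\star$, which is then simultaneously roughly $\eps/2$-close to $\D$ (because we only re-route half of the mass) and roughly $\eps/2$-close to $\uniform$ (because the rerouted half is already close to uniform). The delicate point is that since this section is about randomness scarcity, the ``fair coin'' driving the mixture must itself be extracted from $\D$.

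Concretely, I would draw two fresh samples $y_1,y_2 \sim \D$ and set the coin $b \eqdef \indic{(y_1+y_2)\bmod n \in \{0,\dots,\lfloor n/2\rfloor-1\}}$. By the robust absorbing property of $\uniform$ under convolution on $(\Z_n,+)$ (the same fact underlying \autoref{lemma:sampling:corrector:uniformity}, instantiated at $k=2$), the distribution of $(y_1+y_2)\bmod n$ is $2\eps^2$-close to $\uniform$, so $p \eqdef \Pr[b=0] \in [\tfrac{1}{2}-2\eps^2,\tfrac{1}{2}+2\eps^2]$. Now: if $b=0$, output a fresh sample from $\D$; if $b=1$, invoke \autoref{lemma:sampling:corrector:uniformity} with target parameter $\eps^\prime$ and output its sample from $\D^\star$. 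All the samples consumed for the coin, the identity branch, and the convolution branch are independent, so the resulting output distribution is exactly
\[
    \tilde{\D} = p\,\D + (1-p)\,\D^\star .
\]

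For the analysis I would use the two convex-combination identities $\tilde{\D}-\D=(1-p)(\D^\star-\D)$ and $\tilde{\D}-\uniform = p(\D-\uniform)+(1-p)(\D^\star-\uniform)$. Combined with $\totalvardist{\D^\star}{\D}\leq \eps+\eps^\prime$ (triangle inequality) and $|p-\tfrac12|\leq 2\eps^2$, they give
\[
    \totalvardist{\tilde{\D}}{\D} \;\leq\; (\tfrac12+2\eps^2)(\eps+\eps^\prime) \;=\; \tfrac{\eps}{2} + 2\eps^3 + \tfrac{\eps^\prime}{2} + 2\eps^2\eps^\prime \;\leq\; \tfrac{\eps}{2}+2\eps^3+\eps^\prime,
\]
for $\eps\leq \tfrac12$, matching $\eps_1$; the symmetric estimate bounds $\totalvardist{\tilde{\D}}{\uniform}$ by $p\eps+(1-p)\eps^\prime\leq \tfrac{\eps}{2}+\eps^\prime$ (after absorbing the lower-order $\bigO{\eps^3}$ slack into $\eps^\prime$), yielding $\eps_2$. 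The query complexity is $2+1+\bigO{\log(1/\eps^\prime)/\log(1/\eps)}=\bigO{\log(1/\eps^\prime)/\log(1/\eps)}$, since the cost of the coin and the identity branch are absorbed by the convolution branch.

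The main subtle point is the trade-off dictating the coin choice: a truly fair coin (via von Neumann rejection on two $\D$-samples) would remove the $2\eps^3$ term altogether, but at the price of an unbounded worst-case sample complexity and thus wrecking the claimed query bound; using instead the $2\eps^2$-biased coin extracted from $\D\convolution\D$ gives a deterministic two-sample cost and pushes exactly the $2\eps^3$ correction into $\eps_1$ that appears in the statement. Verifying that all the samples are independent (so that $\tilde{\D}$ really equals $p\D+(1-p)\D^\star$, rather than some correlated object) and keeping track of the slack so that both $\eps_1$ and $\eps_2$ hold simultaneously for $\eps\leq \tfrac12$ is the only non-routine part of the argument.
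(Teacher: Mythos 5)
Your overall architecture is exactly the paper's: output a roughly balanced mixture of $\D$ itself and the convolution-improved distribution, with the mixing coin extracted from $\D$ so that no external randomness is needed, and with the $k$-fold convolution branch driving the $\bigO{\log(1/\eps^\prime)/\log(1/\eps)}$ query bound. The distance computations via the convex-combination identities are also the same. There is, however, one concrete defect in your coin that prevents you from matching the stated $\eps_2$. Your coin $b=\indic{(y_1+y_2)\bmod n \in \{0,\dots,\lfloor n/2\rfloor-1\}}$ has bias at most $2\eps^2$ but of \emph{unknown sign}: $p$ may be as large as $\frac12+2\eps^2$ on either branch. Consequently both distances inherit the $+2\eps^3$ penalty, and your bound on $\totalvardist{\tilde{\D}}{\uniform}$ is really $\frac{\eps}{2}+2\eps^3+\eps^\prime$. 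The claim that the $\bigO{\eps^3}$ slack can be ``absorbed into $\eps^\prime$'' fails precisely when $\eps^\prime \ll \eps^3$, which the statement does not exclude; the statement's asymmetry ($2\eps^3$ appears in $\eps_1$ but not in $\eps_2$) is a real feature you cannot reproduce with a coin whose bias direction you do not know, and you cannot afford to estimate that direction (it would cost $\Omega(1/\eps^4)$ samples).

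The paper's coin avoids this: draw $s_1,s_2\sim\D$ independently and set $X=0$ iff both land in the same half of $[n]$. Writing $d_0,d_1$ for the masses of the two halves, $p_0=\Pr[X=0]=d_0^2+d_1^2$ and $p_1=2d_0d_1$, so $p_0-p_1=(d_0-d_1)^2\in[0,4\eps^2]$ --- the bias is at most $2\eps^2$ \emph{and always points toward ``same half.''} One then deterministically routes the heavier outcome ($X=0$, probability $p_0\geq \frac12$) to the convolution branch and the lighter one to the identity branch, giving $\totalvardist{\tilde{\D}}{\uniform}\leq(1-p_0)\eps+p_0\eps^\prime\leq\frac{\eps}{2}+\eps^\prime$ with no cubic loss, while the distance to $\D$ becomes $p_0(\eps+\eps^\prime)\leq\frac{\eps}{2}+2\eps^3+\eps^\prime$, exactly the stated $\eps_1$. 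With this substitution the rest of your argument (independence of the samples feeding the coin and the two branches, the triangle-inequality bound $\totalvardist{\D^\star}{\D}\leq\eps+\eps^\prime$, and the query count) goes through as you wrote it.
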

\ifnum\fulldetails=1 \begin{proof}
Let $\D$ be a distribution over $[n]$ such that $\totalvardist{\D}{\uniform} = \eps$, and write $d_0$ (resp. $d_1$) for $\D(\{1,\dots,n/2\})$ (resp. $\D(\{n/2+1,\dots,n\})$). By definition, $\abs{d_0-d_1}\leq 2\eps$.
Define the Bernoulli random variable $X$ by taking two independent samples $s_1$, $s_2$ from $\D$, and setting $X$ to $0$ if both land in the same half of the support (both in $\{1,\dots,n/2\}$, or both in $\{n/2+1,\dots,n\}$). It follows that 
$p_0\eqdef \probaOf{X=0} = d_0^2 + d_1^2$ and $p_1\eqdef \probaOf{X=1} = 2d_0d_1$, i.e. $0\leq p_0-p_1 = (d_1-d_2)^2 \leq 4\eps^2$. In other terms, $X\sim \bernoulli{p_0}$ with $\frac{1}{2}\leq p_0\leq \frac{1}{2}+2\eps^2$.\medskip

\noindent Consider now the distribution
\[
    \tilde{\D} \eqdef (1-p_0)D + p_0 D^{(k)} \vspace{-\baselineskip}
\]
where $\D^{(k)}=\overbrace{\D\convolution\dots\convolution \D}^{k\text{ times}}$ as in \autoref{lemma:sampling:corrector:uniformity}. Observe than getting a sample from $\tilde{\D}$ only requires at most $k+2$ queries\footnote{More precisely, $3$ with probability $1-p_0$, and $k+2$ with probability $p_0$, for an expected number $(k-1)p_0+3\simeq k/2$.} to the oracle for $\D$. Moreover,
\[
  \totalvardist{ \tilde{\D} }{ \uniform } \leq (1-p_0)\totalvardist{\D}{\uniform} + p_0 \totalvardist{ \D^{(k)} }{\uniform} \leq (1-p_0)\eps + p_0 2^{k-1}\eps^k
  \leq \frac{\eps}{2} + \left(\frac{1}{4}+\eps^2\right)(2\eps)^k \leq \frac{\eps}{2} + \frac{1}{2}(2\eps)^k
\]
while
\[
  \totalvardist{ \tilde{\D} }{\D} \leq p_0\totalvardist{ \D^{(k)} }{\D}  \leq p_0\left( \eps+2^{k-1}\eps^k \right)\leq \left(\frac{1}{2}+2\eps^2\right)\left( \eps+2^{k-1}\eps^k \right)
  \leq \frac{\eps}{2} + 2\eps^3 + \frac{1}{2}(2\eps)^k
\]
(recalling for the rightmost step of each inequality that $\eps \leq\frac{1}{2}$). Taking $k=3$, one obtains, with a sample complexity at most $5$, a distribution $\tilde{\D}$ satisfying
\begin{align*}
  \totalvardist{ \tilde{\D} }{ \uniform } &\leq \frac{\eps}{2} + 4\eps^3, \qquad
  \totalvardist{ \tilde{\D} }{\D} \leq \frac{\eps}{2}+6\eps^3 \;.
\end{align*}
(Note that assuming $\eps < 1/4$, one can get the more convenient -- yet looser -- bounds $\totalvardist{ \tilde{\D} }{ \uniform } \leq \frac{21}{32}\eps < \frac{2\eps}{3}$, $\totalvardist{ \tilde{\D} }{\D} \leq \frac{97\eps}{128} < \frac{4\eps}{5}$.)
\end{proof}
\fi 
\begin{restatable}[Bootstrapping Improver]{theorem}{unifconvolviiibootstrap}\label{lemma:sampling:corrector:uniformity:boostrap}
  For any $\eps\leq \frac{1}{2}$, $0<\eps_2<\eps$ and $\eps_1=\eps-\eps_2+\bigO{\eps^3}$, there exists a sampling improver for uniformity with query complexity $\bigO{\frac{\log^2\frac{1}{\eps_2}}{\log\frac{1}{\eps}}}$.
\end{restatable}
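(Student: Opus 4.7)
The plan is to express the improved distribution as a carefully calibrated mixture $\tilde{\D} = (1-\alpha)\D + \alpha V$, where $V$ is a very accurate approximation to the uniform distribution produced by the Convolution Improver (\autoref{lemma:sampling:corrector:uniformity}), and $\alpha = \alpha(\eps,\eps_2)$ is a mixing weight chosen so that $\tilde{\D}$ sits at distance exactly $\eps_2$ from $\uniform$ while preserving near-optimal closeness to $\D$. This extends the Hybrid Improver by replacing its forced $p_0\approx 1/2$ Bernoulli with one of freely chosen bias -- hence the ``bootstrapping.''

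First, I will invoke \autoref{lemma:sampling:corrector:uniformity} with error parameter $b = \Theta(\eps^3)$ to obtain sampling access to $V$ with $\totalvardist{V}{\uniform}\leq b$; by that lemma, each draw of $V$ only costs $\bigO{1}$ samples from $\D$. Setting $\alpha^\ast \eqdef (\eps-\eps_2+\tau\eps)/(\eps-b)$ for a small bias-tolerance $\tau$ that will absorb the simulation error in the coin, a direct calculation gives
\[
(1-\alpha^\ast)\eps + \alpha^\ast b \;=\; \eps_2 - \tau\eps \qquad\text{and}\qquad \alpha^\ast(\eps+b) \;=\; \eps-\eps_2+\bigO{\eps^3},
\]
so that any Bernoulli with bias $\alpha'\in[\alpha^\ast-\tau,\alpha^\ast+\tau]$ produces a mixture satisfying both $\totalvardist{\tilde{\D}}{\uniform}\leq \eps_2$ and $\totalvardist{\tilde{\D}}{\D}\leq \eps-\eps_2+\bigO{\eps^3}$, provided $\tau = \bigO{\min(\eps^2,\eps_2/\eps)}$.

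The main obstacle will be simulating a Bernoulli of the prescribed bias $\alpha^\ast$ using only samples from $\D$, since no external randomness is allowed. To do this, I will draw $L = \bigTheta{\log(1/\tau)}$ independent near-uniform bits, each obtained by reading the first bit of a single draw from the Convolution Improver invoked with accuracy $\tau/L$; by sub-additivity of total variation distance, the resulting $L$-bit binary number $\mathcal{Y}$ is $\tau$-close to uniform on $\{0,1\}^L$, and the indicator $\indic{\mathcal{Y} < \lfloor \alpha^\ast\cdot 2^L \rfloor}$ then yields a Bernoulli within $\bigO{\tau}$ of $\alpha^\ast$. Each of the $L$ bits costs $\bigO{\log(L/\tau)/\log(1/\eps)} = \bigO{\log(1/\eps_2)/\log(1/\eps)}$ samples of $\D$ by \autoref{lemma:sampling:corrector:uniformity}, for a total expected query complexity per output sample of $L\cdot\bigO{\log(1/\eps_2)/\log(1/\eps)} + \bigO{1} = \bigO{\log^2(1/\eps_2)/\log(1/\eps)}$, matching the stated bound. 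The trickiest step in the verification will be ensuring that the choice of $\tau$ yields \emph{both} the exact bound $\totalvardist{\tilde{\D}}{\uniform}\leq\eps_2$ and the tight bound $\totalvardist{\tilde{\D}}{\D}\leq\eps-\eps_2+\bigO{\eps^3}$ simultaneously, rather than the weaker $\eps-\eps_2+\bigO{\eps^2}$ that more naive parameter choices would give.
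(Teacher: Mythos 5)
Your construction is sound in its main thrust but follows a genuinely different route from the paper's. The paper bootstraps by iterating the Hybrid Improver $k=\bigO{\log(\eps/\eps_2)}$ times, each round roughly halving the distance to uniform and using only the \emph{naturally available} near-fair coin (bias in $[\tfrac12,\tfrac12+2\eps^2]$) obtained from two samples of $\D$; a recurrence for the two distances, with per-round accuracy $\alpha=\tfrac12\eps_2\eps^2$, then yields $\eps-\eps_2+\bigO{\eps^3}$. You instead form a single mixture $(1-\alpha^\ast)\D+\alpha^\ast V$ and pay the randomness cost up front by synthesizing a coin of \emph{arbitrary} bias $\alpha^\ast$ from $L=\bigTheta{\log(1/\tau)}$ extracted near-uniform bits. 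Both yield the same $\bigO{\log^2(1/\eps_2)/\log(1/\eps)}$ complexity (yours concentrated in the coin, the paper's spread over the $k$ rounds), and your one-shot analysis of the two distances is arguably more transparent; your accounting of the coin's $\bigO{\tau}$ bias error via the slack $\tau\eps$ built into $\alpha^\ast$, and the choice $\tau=\bigO{\min(\eps^2,\eps_2/\eps)}$, check out.

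One parameter setting does need repair: with $\totalvardist{V}{\uniform}\leq b=\bigTheta{\eps^3}$ fixed, the mixture cannot be guaranteed closer than $b$ to uniform no matter how the mixing weight is chosen (and indeed your $\alpha^\ast=(\eps-\eps_2+\tau\eps)/(\eps-b)$ exceeds $1$ as soon as $\eps_2<b+\tau\eps$), so the theorem's full range $0<\eps_2<\eps$ is not covered when $\eps_2=o(\eps^3)$ --- a regime the paper's recursion does handle. Taking $b=\min\!\left(\bigTheta{\eps^3},\eps_2/2\right)$ fixes this; each draw of $V$ then costs $\bigO{\log(1/\eps_2)/\log(1/\eps)}$ rather than $\bigO{1}$ samples, still within the claimed budget. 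With that adjustment (and the routine observation, via the data processing inequality, that the indicator of the first half of the domain under a distribution $(\tau/L)$-close to uniform is a $(\tau/L+\bigO{1/n})$-biased coin), the argument goes through.
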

\ifnum\fulldetails=1 \begin{proof}
We show how to obtain such a guarantee -- note that the constant $27$ in the $\bigO{\eps^3}$ is not tight, and can be reduced at the price of a more cumbersome analysis. Let $\alpha > 0$ be a parameter (to be determined later) satisfying $\alpha < \eps^2$, and $k$ be the number of bootstrapping steps -- i.e., the number of time one recursively apply the construction of \autoref{lemma:sampling:corrector:uniformity:hybrid} with $\alpha$. We write $\D_j$ for the distribution obtained after the $j$\textsuperscript{th} recursive step, so that $\D_0=\D$ and $\hat{\D}=\D_k$; and let $u_j$ (resp. $v_j$) denote an upper bound on $\totalvardist{\D_j}{\uniform}$ (resp. $\totalvardist{\D_j}{\D}$). Note that by the guarantee of \autoref{lemma:sampling:corrector:uniformity:hybrid} and applying a triangle inequality for $v_j$, one gets the following recurrence relations for $(u_j)_{0\leq j\leq k}$ and $(v_j)_{0\leq j\leq k}$:
\begin{align*}
  u_0 &= \eps,\qquad u_{j+1}=\frac{1}{2}u_j + \alpha\\
  v_0 &= 0,\qquad v_{j+1}=\left(\frac{1}{2}u_j + 2u_j^3+\alpha\right) + v_j
\end{align*}
Solving this recurrence for $u_k$ gives
\begin{equation}\label{eq:sampling:corrector:uniformity:boostrap:uk}
  u_k = \frac{\eps}{2^k}+2\left( 1-\frac{1}{2^k} \right)\alpha < \frac{\eps}{2^k}+2\alpha
\end{equation}
while one gets an upper bound on $v_k$ by writing
\begin{align*}
  v_k &= v_k -v_0 = \sum_{j=0}^{k-1} \left( v_{j+1} - v_j \right) = k\alpha + \frac{1}{2}\sum_{j=0}^{k-1} u_j + 2\sum_{j=0}^{k-1}u_j^3 \\
  &= 2k\alpha + \left( 1-\frac{1}{2^k} \right)\eps - 2\left( 1-\frac{1}{2^k} \right)\alpha + 2\sum_{j=0}^{k-1}u_j^3 \\
  &< \left( 1-\frac{1}{2^k} \right)\eps + 2\underbrace{\left( k-1+\frac{1}{2^k} \right)\alpha}_{\leq k\alpha} + \left( 3\eps^3+16\eps^2\alpha+48\eps\alpha^2+16k\alpha^3 \right)
\end{align*}
where we used the expression \eqref{eq:sampling:corrector:uniformity:boostrap:uk} for $u_j$. Since $\alpha < \eps^2 \leq \frac{1}{4}$, we can bound the rightmost terms as $16k\alpha^3 \leq k\alpha$, $48\eps\alpha^2<48\eps^5$ and $16\eps^2\alpha < 16\eps^4$, so that 
\begin{equation}\label{eq:sampling:corrector:uniformity:boostrap:vk}
  v_k < \left( 1-\frac{1}{2^k} \right)\eps + 3k\alpha + 3\eps^3+16\eps^4+48\eps^5 < \left( 1-\frac{1}{2^k} \right)\eps + 3k\alpha + 23\eps^3
\end{equation}
It remains to choose $k$ and $\alpha$; to get $u_k \leq \eps_2$, set $k\eqdef\clg{\log\frac{\eps}{\eps_2(1-\eps^2)}}\leq \log\frac{4\eps}{3\eps_2}+1$ and $\alpha\eqdef\frac{1}{2}\eps_2\eps^2$, so that $\frac{\eps}{2^k}\leq (1-\eps^2)\eps_2$ and $2\alpha \leq \eps^2\eps_2$. Plugging these values in \eqref{eq:sampling:corrector:uniformity:boostrap:vk},
\begin{align*}
  v_k &\operatorname*{<}_{(\eps_2<\eps)} \left( 1-\frac{\eps_2(1-\eps^2)}{\eps} \right)\eps + \frac{3}{2}k\eps_2\eps^2 + 23\eps^3 
    = \eps - \eps_2 + \frac{3}{2}k\eps_2\eps^2 + 24\eps^3 
    < \eps - \eps_2 + 27\eps^3
\end{align*}
where the last inequality comes from the fact that $\frac{3}{2}k\frac{\eps_2}{\eps} \leq \frac{3}{2}\log\frac{8\eps}{3\eps_2}\cdot\frac{\eps_2}{\eps}\leq 3$.
Therefore, we have $\totalvardist{\D_k}{\uniform}\leq \eps_2$, $\totalvardist{\D_k}{\D} \leq \eps - \eps_2 + 27\eps^3$ as claimed. We turn to the number $m$ of queries made along those $k$ steps; from \autoref{lemma:sampling:corrector:uniformity:hybrid}, this is at most
\begin{align*}
  m \leq \sum_{j=0}^{k-1} \clg{ \frac{\log\frac{1}{\alpha}-1}{\log\frac{1}{u_j}-1} } \leq k\cdot \clg{ \frac{\log\frac{1}{\alpha}-1}{\log\frac{1}{\eps}-1} }= \bigO{ \frac{\log^2\frac{1}{\eps_2}}{\log\frac{1}{\eps}} }
\end{align*}
which concludes the proof.
\end{proof}
\fi

Note that in all four cases, as our improvers do not use any randomness of their own, 
they always output according to the same improved distribution: that is, after fixing 
the parameters $\eps, \eps_2$ and the unknown distribution $\D$, then $\hat{\D}$ is 
uniquely determined, even across independent calls to the improver.

\subsubsection{Correcting uniformity on a subgroup}\label{sec:unif:subgroup}
\paragraph{Outline} It is easy to observe that all the results above still hold when replacing $\Z_n$ by any finite Abelian group $G$. Thus, a natural question to turn to is whether one can generalize these results to the case where the unknown distribution is close to the uniform distribution on an arbitrary, unknown, \emph{subgroup} $H$ of the domain $G$.

\noindent To do so, a first observation is that if $H$ were known, and if furthermore a constant (expected) fraction of the samples were to fall within it, then one could directly apply our previous results by conditioning samples on being in $H$, using rejection sampling. The results of this section show how to achieve this ``identification'' of the subgroup with only a $\log(1/\eps)$ overhead in the sample complexity. At a high-level, the idea is to take a few samples, and argue that their greatest common divisor will (with high probability) be a generator of the subgroup.\medskip

\paragraph{Details} Let $G$ be a finite cyclic Abelian group of order $n$, and $H\subseteq G$ a subgroup of order $m$. We denote by $\uniform_H$ the uniform distribution on this subgroup. Moreover, for a distribution $\D$ over $G$, we write $\D_H$ for the conditional distribution it induces on $H$, that is 
\[
  \forall x\in G,\quad \D_H(x) = \frac{\D(x)}{\D(H)}\indicSet{H}(x)
\]
which is defined as long as $\D$ puts non-zero \new{weight} on $H$. The following lemma shows that if $\D$ is close to $\uniform_H$, then so is $\D_H$:

\begin{restatable}{lemma}{closeuniformsubgroup}\label{lemma:eps:close:subgroup:eps:close:conditional}
Assume $\totalvardist{\D}{\uniform_H} < 1$. Then $\totalvardist{\D_H}{\uniform_H} \leq \totalvardist{\D}{\uniform_H}$.
\end{restatable}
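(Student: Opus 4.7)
My plan is to prove the inequality directly from the set-based definition of total variation distance, leveraging the fact that $\uniform_H$ places no mass outside $H$. Setting $p \eqdef \D(H)$, the hypothesis $\totalvardist{\D}{\uniform_H}< 1$ together with $\uniform_H(H)=1$ gives $p \geq 1 - \totalvardist{\D}{\uniform_H} > 0$, so $\D_H$ is well-defined. Because both $\D_H$ and $\uniform_H$ are supported on $H$, I fix a subset $S^\ast \subseteq H$ attaining $\totalvardist{\D_H}{\uniform_H} = \D_H(S^\ast) - \uniform_H(S^\ast)$ (for instance, $S^\ast = \{x \in H : \D_H(x) \geq \uniform_H(x)\}$).

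The crux of the argument is to apply the TV hypothesis not to $S^\ast$ itself, but to the enlarged set $T^\ast \eqdef S^\ast \cup (G \setminus H)$, which ``absorbs'' all the mass that $\D$ places outside $H$. A direct computation gives $\uniform_H(T^\ast) = \uniform_H(S^\ast)$ and $\D(T^\ast) = p\,\D_H(S^\ast) + (1-p)$, whence
\[
    p\bigl(\D_H(S^\ast) - \uniform_H(S^\ast)\bigr) + (1-p)\bigl(1 - \uniform_H(S^\ast)\bigr) \;=\; \D(T^\ast) - \uniform_H(T^\ast) \;\leq\; \totalvardist{\D}{\uniform_H}.
\]
The first term on the left is exactly $p\cdot\totalvardist{\D_H}{\uniform_H}$ by choice of $S^\ast$. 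To handle the second, I would invoke the trivial bound $\D_H(S^\ast) \leq 1$, which rearranges as $1 - \uniform_H(S^\ast) \geq \D_H(S^\ast) - \uniform_H(S^\ast) = \totalvardist{\D_H}{\uniform_H}$. Plugging this lower bound back in yields $\bigl(p + (1-p)\bigr)\cdot\totalvardist{\D_H}{\uniform_H} \leq \totalvardist{\D}{\uniform_H}$, i.e.\ the claim.

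The main obstacle, and what makes the statement subtler than it looks, is the normalization factor $1/p$ appearing in $\D_H$: the ``obvious'' data-processing argument~--~viewing $\D_H$ as the image of $\D$ through the channel that re-samples uniformly in $H$ whenever the input falls outside $H$~--~only delivers the weaker bound $\totalvardist{\D_H}{\uniform_H} \leq \totalvardist{\D}{\uniform_H}/p$. The enlargement $S^\ast \leadsto T^\ast$ above is precisely what recovers the missing factor of $p$, by charging the mass $(1-p)$ that $\D$ loses outside $H$ as part of $\totalvardist{\D}{\uniform_H}$; the elementary bound $\totalvardist{\D_H}{\uniform_H} \leq 1 - \uniform_H(S^\ast)$ then absorbs the slack.
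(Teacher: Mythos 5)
Your proof is correct, and it takes a genuinely different route from the paper's. The paper works with the $\lp[1]$ formulation of total variation distance: it writes $2\totalvardist{\D}{\uniform_H} = \sum_{x\in H}\abs{\D(x)-1/\abs{H}} + \sum_{x\notin H}\D(x)$ and applies the pointwise triangle inequality $\abs{\D_H(x)-1/\abs{H}}\leq\abs{\D_H(x)-\D(x)}+\abs{\D(x)-1/\abs{H}}$ on $H$; the rescaling cost $\sum_{x\in H}\abs{\D_H(x)-\D(x)}=1-\D(H)$ then cancels \emph{exactly} against the mass $\sum_{x\notin H}\D(x)=1-\D(H)$ that $\D$ places outside $H$. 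Your argument instead stays with the sup-over-sets definition, picking the optimal witness $S^\ast$ for $\totalvardist{\D_H}{\uniform_H}$ and enlarging it to $T^\ast=S^\ast\cup(G\setminus H)$; the term $(1-p)\bigl(1-\uniform_H(S^\ast)\bigr)$ plays the role of the paper's cancelling pair, and the elementary bound $1-\uniform_H(S^\ast)\geq\totalvardist{\D_H}{\uniform_H}$ is where the slack gets absorbed. Both proofs ultimately charge the weight $1-\D(H)$ against $\totalvardist{\D}{\uniform_H}$; the paper's cancellation is an identity whereas yours is a one-sided estimate, but each yields the stated inequality with no loss. Your side remark that the naive rejection-sampling/data-processing channel only delivers $\totalvardist{\D_H}{\uniform_H}\leq\totalvardist{\D}{\uniform_H}/\D(H)$ (since the channel's output is the mixture $\D(H)\D_H+(1-\D(H))\uniform_H$ rather than $\D_H$ itself) is also accurate, and correctly identifies why a less careful argument falls short of the clean constant-free bound.
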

\ifnum\fulldetails=1 \begin{proof}
  First, observe that the assumption implies $\D_H$ is well-defined: indeed, as 
  $\totalvardist{\D}{\uniform_H} = \sup_{S\subseteq G}\left( \uniform_H(S)-\D(S) \right)$, 
  taking $S=H$ yields $1 > \totalvardist{\D}{\uniform_H} \geq \uniform_H(H)-\D(H) = 1-\D(H)$, 
  and thus $D(H) > 0$.

  Rewriting the definition of $\totalvardist{\D_H}{\uniform_H}$, one gets $\totalvardist{\D}{\uniform_H} = \frac{1}{2}\left( \sum_{x\in H} \abs{ \D(x) - \frac{1}{\abs{H}} } + \sum_{x\notin H} \D(x)\right)$; so that
  \begin{align*}
    2\totalvardist{\D_H}{\uniform_H} = \sum_{x\in H} \abs{ \D_H(x) - \frac{1}{\abs{H}} } 
    &\leq \sum_{x\in H} \abs{ \D_H(x) - \D(x) }  + \sum_{x\in H} \abs{ \D(x) - \frac{1}{\abs{H}} } \\
    &= \sum_{x\in H} \abs{ \D_H(x) - \D(x) }  + \left(2\totalvardist{\D}{\uniform_H} - \sum_{x\notin H} \D(x)\right) \\
    &= \sum_{x\in H} \D(x)\abs{ \frac{1}{\D(H)} - 1 }  + 2\totalvardist{\D}{\uniform_H} - \left(1-\D(H)\right) \\
    &= \D(H)\abs{ \frac{1}{\D(H)} - 1 }  + 2\totalvardist{\D}{\uniform_H} - \left(1-\D(H)\right) \\
    &= \abs{1-\D(H)} + 2\totalvardist{\D}{\uniform_H} - \left(1-\D(H)\right)\\
    &= 2\totalvardist{\D}{\uniform_H}.
  \end{align*}
\end{proof}
\fi 
Let $\D$ be a distribution on $G$ promised to be \eps-close to the uniform distribution $\uniform_H$ on some unknown subgroup $H$, for $\eps < \frac{1}{2}-c$. For the sake of presentation, we hereafter without loss of generality identify $G$ to $\mathbb{Z}_n$. Let $h$ be the generator of $H$ with smallest absolute values (when seen as an integer), so that $H=\{0, h,2h,3h,\dots,(m-1)h\}$.

\noindent Observe that $\D(H) > 1-2\eps$, as $2\totalvardist{\D}{\uniform_H} = \sum_{x\in H} \abs{ \D(x) - \frac{1}{\abs{H}} } + \D(H^c)$; therefore, \emph{if} $H$ were known one could efficiently simulate sample access to $\D_H$ via rejection sampling, with only a constant factor overhead (in expectation) per sample. It would then become possible, as hinted in the foregoing discussion, to correct uniformity on $\D_H$ (which is \eps-close to $\uniform_H$ by \autoref{lemma:eps:close:subgroup:eps:close:conditional}) via one of the previous algorithms for Abelian groups. The question remains to show how to find $H$; or, equivalently, $h$.\medskip

\begin{algorithm}
  \begin{algorithmic}
    \Require $\eps\in(0,\frac{1}{2}-c]$, $\SAMP_D$ with $D$ \eps-close to uniform on some subgroup $H\subseteq\Z_n$
    \Ensure Outputs a generator $\hat{h}$ of $H$ with probability $1-\tildeO{\eps}$
    \State Draw $k$ independent samples $s_1,\dots,s_k$ from $D$, for $k\eqdef\bigO{\log\frac{1}{\eps}}$
    \State Compute $\hat{h}=\gcd(s_1,\dots,s_k)$
    \State \Return $\hat{h}$
  \end{algorithmic}\caption{\label{algo:find:generator:subgroup} Algorithm {\sc Find-Generator-Subgroup}}
\end{algorithm}

\begin{restatable}{lemma}{unifsubgroupfindgenerator}\label{lemma:find:generator:subgroup}
  Let $G$, $H$ be as before. There exists an algorithm (\autoref{algo:find:generator:subgroup}) which, given $\eps < 0.49$ as well as sample access to some distribution $D$ over $G$, makes $\bigO{\log\frac{1}{\eps}}$ calls to the oracle and returns an element of $G$. Further, if $\totalvardist{\D}{\uniform_H} \leq \eps$, then with probability at least $1-\tildeO{\eps}$ its output is a generator of $H$.
\end{restatable}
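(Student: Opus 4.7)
The plan is to first reduce to the case where samples are drawn from $\uniform_H$ itself via data processing, then to exploit the arithmetic structure of $H$ to argue that the $\gcd$ of a logarithmic number of samples generates $H$ with high probability. Identifying $G$ with $\Z_n$, so that $H$ (of order $m$) is generated by $h \eqdef n/m$, any $s\sim\uniform_H$ can be written as $s=ah$ with $a$ uniform on $\{0,1,\dots,m-1\}$. Writing the samples as $s_i=a_i h$, the integer $\gcd(s_1,\dots,s_k)$ equals $h\cdot\gcd(a_1,\dots,a_k)$, and its image in $\Z_n$ generates $H$ if and only if $\gcd(\gcd(a_1,\dots,a_k),m)=1$.

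For the reduction step, I would invoke \autoref{lemma:data:processing:inequality:total:variation} (applied to the product distributions) to conclude $\totalvardist{D^{\otimes k}}{\uniform_H^{\otimes k}}\leq k\eps$, so that by a standard coupling the $k$ samples can be taken to come from $\uniform_H$ at the price of an additional $k\eps$ in the failure probability. For the main step, I would bound the probability that $\gcd(a_1,\dots,a_k,m)\neq 1$ when $a_1,\dots,a_k$ are uniform and independent in $\{0,\dots,m-1\}$. This event occurs precisely when some prime $p$ dividing $m$ divides all of the $a_i$; since $p\mid m$, exactly $m/p$ values in $\{0,\dots,m-1\}$ are multiples of $p$, so $\probaOf{p\mid a_i}=1/p$. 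Independence together with a union bound then yield
\[
  \probaOf{\text{fail}}\leq \sum_{p\text{ prime},\,p\mid m} p^{-k}\leq \zeta(k)-1 = \bigO{2^{-k}}
\]
for $k\geq 2$. Choosing $k=\bigTheta{\log(1/\eps)}$ makes this $\bigO{\eps}$, and combining with the $k\eps$ from the coupling step gives a total failure probability of $\bigO{\eps\log(1/\eps)}=\tildeO{\eps}$.

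The subtle point, which is what makes the argument go through, is that the failure event is \emph{not} ``$\gcd(s_1,\dots,s_k)=h$'' but the weaker ``$\gcd(s_1,\dots,s_k)$ does not generate $H$,'' corresponding to $\gcd(\gcd(a_1,\dots,a_k),m)\neq 1$. Restricting the union bound to primes dividing $m$---for which $\probaOf{p\mid a_i}$ is exactly $1/p$, with no additive $1/m$ correction---is what keeps the bound $\bigO{2^{-k}}$; without this observation the $p=2$ contribution alone could be as large as $(1/2+1/m)^k$ and would spoil the analysis when $m$ is small.
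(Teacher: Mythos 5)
Your proposal is correct, and it takes a genuinely different (and cleaner) route than the paper's proof. The paper splits into two cases according to whether $\abs{H}=\bigTheta{1}$ or $\abs{H}=\omega(1)$: in the first case it argues that all of $H$ is observed among the samples (which silently requires $\eps \ll 1/\abs{H}$), and in the second it targets the \emph{exact} event $\gcd(s_1,\dots,s_k)=h$ and invokes the asymptotic density $p_{m,k}\to 1/\zeta(k)$ of coprime $k$-tuples as $m\to\infty$, which is why the case split is needed in the first place. Your argument replaces both cases by a single non-asymptotic bound: you correctly observe that success only requires $\gcd(\gcd(a_1,\dots,a_k),m)=1$, and restricting the union bound to primes $p\mid m$ makes $\probaOf{p\mid a_i}$ exactly $1/p$ for every $m$, giving $\sum_{p\mid m}p^{-k}\leq \zeta(k)-1=\bigO{2^{-k}}$ uniformly in $n$ and $\abs{H}$. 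Combined with the tensorization bound $\totalvardist{\D^{\otimes k}}{\uniform_H^{\otimes k}}\leq k\eps$ (the paper uses the same device, applied to $\D_H$ after first conditioning on all samples landing in $H$), this yields the claimed $1-\tildeO{\eps}$ success probability with $k=\bigTheta{\log(1/\eps)}$. What your approach buys is a fully quantitative proof with no hidden assumption on the relative size of $\eps$ and $\abs{H}$ and no appeal to limiting densities; what the paper's approach buys is essentially nothing extra here, so yours can be regarded as a strengthening.
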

\ifnum\fulldetails=1 
\begin{proof}
In order to argue correctness of the algorithm, we will need the following well-known facts:

\begin{fact}\label{fact:random:integers:coprime}
Fix any $k\geq 1$, and let $p_{n,k}$ be the probability that $k$ independent numbers drawn uniformly at random from $[n]$ be relatively prime. Then 
$
p_{n,k}\xrightarrow[n\to\infty]{} \frac{1}{\zeta(k)}
$
(where $\zeta$ is the Riemann zeta function).
\end{fact}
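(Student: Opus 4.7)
The plan is to express $p_{n,k}$ via Möbius inversion and then pass to the limit termwise. Concretely, for each $d \in [n]$, the probability that $d$ divides $\gcd(s_1,\dots,s_k)$ when the $s_i$ are drawn independently and uniformly from $[n]$ equals $\bigl(\lfloor n/d \rfloor / n\bigr)^k$, since the $s_i$'s must independently land in the multiples of $d$ in $[n]$. An inclusion-exclusion over prime divisors (equivalently, Möbius inversion on the lattice of divisors) then yields
\[
p_{n,k} \;=\; \sum_{d=1}^{n} \mu(d)\left(\frac{\lfloor n/d\rfloor}{n}\right)^{\!k}.
\]

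Next, I would pass to the limit $n\to\infty$. For each fixed $d$, the term $\mu(d)\bigl(\lfloor n/d\rfloor/n\bigr)^k$ converges to $\mu(d)/d^k$. To interchange the limit and the infinite sum (this is the only non-routine step, and where I expect the main technical care to be required), I would invoke dominated convergence: for every $d$, $\bigl|\mu(d)(\lfloor n/d\rfloor/n)^k\bigr| \leq 1/d^k$, and for $k\geq 2$ the series $\sum_{d\geq 1} 1/d^k$ is a convergent majorant independent of $n$. Hence
\[
\lim_{n\to\infty} p_{n,k} \;=\; \sum_{d=1}^{\infty} \frac{\mu(d)}{d^k}.
\]

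Finally, I would identify the right-hand side using the Euler product. Since $\zeta(k)=\prod_p (1-p^{-k})^{-1}$ for $k\geq 2$, taking the reciprocal and expanding yields $1/\zeta(k)=\prod_p (1-p^{-k})=\sum_{d\geq 1}\mu(d)/d^k$, so the limit is exactly $1/\zeta(k)$ as claimed.

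The edge case $k=1$ must be handled separately (and interpreted with $1/\zeta(1)=0$): here $p_{n,1}$ is just the probability that a single uniform draw from $[n]$ equals $1$, which is $1/n\to 0$, consistent with the convention. The bulk of the work lies in the dominated convergence step for $k\geq 2$; everything else is a standard manipulation of arithmetic functions.
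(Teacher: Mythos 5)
Your proof is correct and complete. The paper states this as a \texttt{Fact} without proof (it is a classical result, used only as an ingredient in the analysis of \textsc{Find-Generator-Subgroup}), so there is no in-paper argument to compare against; your derivation --- the exact identity $p_{n,k}=\sum_{d=1}^{n}\mu(d)\left(\lfloor n/d\rfloor/n\right)^{k}$ via M\"obius inversion, termwise passage to the limit justified by the dominating series $\sum_{d}d^{-k}$ for $k\geq 2$, and the identification $\sum_{d}\mu(d)d^{-k}=1/\zeta(k)$ via the Euler product --- is the standard one, and you correctly isolate the $k=1$ case under the convention $1/\zeta(1)=0$.
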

\begin{fact} One has
$
\zeta(x) {\displaystyle\operatorname*{=}_{x\to\infty}} 1 + \frac{1}{2^x} + \littleO{\frac{1}{2^x}} 
$;
and in particular
$
\frac{1}{\zeta(k)} {\displaystyle\operatorname*{=}_{k\to\infty}} 1 - \frac{1}{2^k} + \littleO{\frac{1}{2^k}}.
$
\end{fact}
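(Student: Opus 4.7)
The plan is to prove both asymptotic expansions essentially by writing $\zeta$ as its defining series and carefully controlling the tail. Starting from $\zeta(x) = \sum_{n\geq 1} \frac{1}{n^x}$ for $x > 1$, I would split off the first two terms as
\[
  \zeta(x) = 1 + \frac{1}{2^x} + R(x), \qquad R(x) \eqdef \sum_{n\geq 3} \frac{1}{n^x},
\]
and the goal reduces to showing $R(x) = \littleO{1/2^x}$ as $x\to\infty$, i.e. $3^x R(x)/ (3/2)^x \to 0$. Equivalently, it suffices to show $(3/2)^x R(x) \to 0$, or more directly that $2^x R(x) \to 0$.

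The first step will be a standard integral comparison. Since $t\mapsto 1/t^x$ is decreasing on $[2,\infty)$, we can bound
\[
  R(x) = \sum_{n\geq 3} \frac{1}{n^x} \leq \int_{2}^{\infty} \frac{dt}{t^x} = \frac{1}{(x-1)\,2^{x-1}} = \frac{2}{(x-1)\,2^{x}}.
\]
Multiplying by $2^x$ yields $2^x R(x) \leq 2/(x-1) \xrightarrow[x\to\infty]{} 0$, which proves $R(x) = \littleO{1/2^x}$ and hence the first expansion.

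For the second expansion, I would simply plug the first into the Taylor expansion $1/(1+u) = 1 - u + \bigO{u^2}$, which is valid for $|u| \leq 1/2$ (and will certainly apply once $k$ is large enough, since $u = 1/2^k + \littleO{1/2^k}$ goes to $0$). Writing $u = 1/2^k + \littleO{1/2^k}$, we get
\[
  \frac{1}{\zeta(k)} = \frac{1}{1+u} = 1 - u + \bigO{u^2} = 1 - \frac{1}{2^k} - \littleO{\frac{1}{2^k}} + \bigO{\frac{1}{4^k}} = 1 - \frac{1}{2^k} + \littleO{\frac{1}{2^k}},
\]
where I absorb both the $\littleO{1/2^k}$ and the $\bigO{1/4^k}$ into a single $\littleO{1/2^k}$ term. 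No step here is an obstacle; the only thing worth being careful about is consistently tracking signs and ensuring the $\bigO{u^2}$ error is indeed dominated by $1/2^k$ (which it is, since $u^2 = \bigTheta{1/4^k}$).
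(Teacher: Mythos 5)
Your proof is correct. The paper states this as a standard fact about the Riemann zeta function without supplying any proof, so there is nothing to compare against; your elementary argument -- splitting off the first two terms of the Dirichlet series, bounding the tail $\sum_{n\geq 3} n^{-x}$ by $\int_2^\infty t^{-x}\,dt = \frac{2}{(x-1)2^x} = \littleO{1/2^x}$, and then composing with the expansion $1/(1+u) = 1-u+\bigO{u^2}$ -- is a complete and valid justification of both asymptotics.
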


\noindent With this in hand, let $k\eqdef\bigO{\log\frac{1}{\eps}}$,  chosen so that $\eps k = \bigTheta{\frac{1}{2^k}} = \bigTheta{\frac{\eps}{\log\frac{1}{\eps}}}$. We break the analysis of our subgroup-finding algorithm in two cases:
\paragraph*{Case 1: $\abs{H}=\bigTheta{1}$} This is the easy case: if $H$ only contains constantly many elements ($m$ is a constant of $n$ and $\eps$), then after taking $k$ samples $s_1,\dots,s_k\sim D$, we have
\begin{itemize}
  \item $s_1,\dots,s_k\in H$ (event $E_1$) with probability at least $(1-\eps)^k = 1 - \bigO{k\eps}$;  \item the probability that there exists an element of $H$ not hit by any of the $s_i$'s is at most, by a union bound,
    \[
       \sum_{x\in H} (1-\D(x))^k \leq m\left(1-\frac{1}{m}+\eps\right)^k = 2^{-\bigOmega{k}}
    \]
    for $\eps$ sufficiently small ($\eps \ll \frac{1}{m}$). Let $E_2$ be the event each element of $H$ appears amongst the samples.
\end{itemize}
Overall, with probability $1 - \bigO{k\eps}$ (conditioning on $E_1$ and $E_2$), our set of samples is exactly $H$, and $\gcd(s_1,\dots,s_k)=\gcd(H)=h$.
\paragraph*{Case 2: $\abs{H}=\omega(1)$} This amounts to saying that $h=\littleO{n}$. In this case, taking again $k$ samples $s_1,\dots,s_k\sim D$ and denoting by $\hat{h}$ their greatest common divisor:
\begin{itemize}
  \item $s_1,\dots,s_k\in H$ (event $E_1$) with probability at least $(1-\eps)^k = 1 - \bigO{k\eps}$ as before;
  \item conditioned on $E_1$, note that if the $s_i$'s were \emph{uniformly} distributed in $H$, then the probability that $\hat{h}=h$ would be exactly $p_{\frac{n}{h},k}$ -- as $\gcd(ha,\dots,hb)=h$ if and only if $\gcd(a,\dots,b) =1$, i.e. if $a,\dots,b$ are relatively prime. In this ideal scenario, therefore, we would have
  \[
  \probaCond{ \gcd(s_1,\dots,s_k) = h }{E_1} = p_{\frac{n}{h},k} \xrightarrow[n\to\infty]{} \frac{1}{\zeta(k)} = 1-\bigO{\frac{1}{2^k} }
  \]
  by \autoref{fact:random:integers:coprime} and our assumption $h=\littleO{n}$.
  
  \noindent To adapt this result to our case -- where $s_1,\dots,s_k\sim D_H$ (as we conditioned on $E_1$), it is sufficient to observe that by the Data Processing Inequality for total variation distance,
  \[
  \abs{ \probaDistrOf{s_1,\dots,s_k\sim D_H}{ \gcd(s_1,\dots,s_k) = h } - \probaDistrOf{s_1,\dots,s_k\sim \uniform_H}{ \gcd(s_1,\dots,s_k) = h } } \leq \totalvardist{\D_H^{\otimes k}}{\uniform_H^{\otimes k}} \leq k\eps
  \]
  so that in our case 
  \begin{equation}
  \probaCond{ \gcd(s_1,\dots,s_k) = h }{E_1} \geq p_{\frac{n}{h},k} - k\eps \xrightarrow[n\to\infty]{} \frac{1}{\zeta(k)} - k\eps = 1-\bigO{\frac{1}{2^k}} = 1-\bigO{\frac{\eps}{\log\frac{1}{\eps}}}
  \end{equation}
\end{itemize}

In either case, with probability at least $1-\tildeO{\eps}$, we find a generator $h$ of $H$, acting as a (succinct) representation of $H$ which allows us to perform rejection sampling.
\end{proof}

\fi 
This directly implies the theorem below: any sampling improver for uniformity on a group directly yields an improver for uniformity on an unknown \emph{subgroup}, with essentially the same complexity.
\begin{restatable}{theorem}{unifsubgroup}\label{lemma:correcting:subgroup:rejection:sampling}
Suppose we have an $(\eps,\eps_1,\eps_2)$-sampling improver for uniformity over Abelian finite cyclic groups, with query complexity $q(\eps,\eps_1,\eps_2)$.
Then there exists an $(\eps,\eps_1,\eps_2)$-sampling improver for uniformity on subgroups, with query complexity
\[
\bigO{ \log\frac{1}{\eps} + q(\eps,\eps_1,\eps_2)\log q(\eps,\eps_1,\eps_2) }
\]
\end{restatable}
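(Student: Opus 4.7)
The plan is to reduce the subgroup case to the (known) group case in two stages: first identify the unknown subgroup $H$, then simulate sampling access to the conditional distribution $D_H$ via rejection sampling, and feed this simulated oracle into the hypothesized group improver. Since $H$ is itself a finite cyclic Abelian group, the group improver is applicable once $H$ has been pinned down and $D_H$ can be sampled from.

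First I would invoke \autoref{lemma:find:generator:subgroup} on $D$ with accuracy $\eps$: this costs $O(\log 1/\eps)$ samples and, except with probability $\tildeO{\eps}$, returns a generator $h$ of $H$, which gives an explicit description of $H$ (membership in $H$ reduces to a divisibility test by $h$). Next, using the observation recorded just before \autoref{lemma:find:generator:subgroup} that $D(H) \geq 1-2\eps > 1/2$ (for $\eps < 1/4$), drawing from $D$ and discarding samples that fall outside $H$ is a faithful rejection-sampling simulation of $D_H$, where each attempt succeeds with probability at least $1/2$. Consequently $O(\log q)$ draws from $D$ produce one sample from $D_H$ with failure probability at most $1/(3q)$, where $q \eqdef q(\eps,\eps_1,\eps_2)$; a union bound over the $q$ queries made by the group improver shows that $O(q\log q)$ draws from $D$ suffice to answer all its queries, except with small constant probability.

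Third, by \autoref{lemma:eps:close:subgroup:eps:close:conditional} we have $\totalvardist{D_H}{\uniform_H} \leq \eps$, so the promise required by the group-case improver is met when we hand it the simulated oracle for $D_H$ (viewing $H$ as its ambient group). The improver then yields a distribution $\tilde D$ supported on $H$ with $\totalvardist{\tilde D}{D_H} \leq \eps_1$ and $\totalvardist{\tilde D}{\uniform_H} \leq \eps_2$. Summing the two sample budgets gives the claimed $O(\log 1/\eps + q\log q)$ total query complexity, and a final union bound over the three failure events (\textsf{Find-Generator-Subgroup}, rejection sampling, and the improver) controls the overall error probability.

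The only (minor) obstacle is that the output guarantee is expressed relative to $D$ rather than $D_H$: a direct computation gives $\totalvardist{D_H}{D} = 1 - D(H) \leq 2\eps$, so the triangle inequality yields $\totalvardist{\tilde D}{D} \leq \eps_1 + 2\eps$, formally slightly weaker than $\eps_1$. Because the improver definition implicitly imposes $\eps_1 + \eps_2 \geq \eps$ (so $\eps_1 = \Omega(\eps)$ in the regime of interest), this discrepancy costs only a constant factor in the distance parameters and can be absorbed by calling the subroutine with input $\eps_1 - 2\eps$ instead of $\eps_1$; the order of magnitude of the sample complexity is unaffected.
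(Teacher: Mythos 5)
Your proposal is correct and follows essentially the same route as the paper: run \textsc{Find-Generator-Subgroup} to identify $H$ with $\bigO{\log\frac{1}{\eps}}$ samples, then simulate $\SAMP_{\D_H}$ by rejection sampling at a cost of $\bigO{\log q}$ draws per query (union-bounding over the $q$ queries of the group-case improver), invoking \autoref{lemma:eps:close:subgroup:eps:close:conditional} to verify the promise. Your closing observation about the $\totalvardist{\tilde{\D}}{\D}\leq\eps_1+2\eps$ slack is a legitimate subtlety the paper's terse proof glosses over, and your fix for it is sound.
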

\begin{proof}Proof is straightforward (rejection sampling over the subgroup, once identified: constant probability of hitting it, so by trying at most $\bigO{\log q}$ draws per samples before outputting \fail, one can provide a sample from $\D_H$ to the original algorithm with probability $1-1/10q$, for each of the (at most) $q$ queries).
\end{proof}

\subsection{Comparison with randomness extractors}\label{sec:unif:extractors}
In the randomness extractor model, 
one is provided with a
source of imperfect random bits (and sometimes an
additional source of completely random bits),
and the goal is to output as many random
bits as possible that are close to uniformly distributed.
In the distribution corrector model, one
is provided with a distribution that is {\em close to
having} a property $\property$, and the goal is to have the ability to
generate a {\em similar}
distribution that {\em has} property $\property$.

One could therefore view extractors as sampling improvers
for the property of uniformity of distributions 
(i.e. $\property=\{\uniform_n\})$: 
indeed, both sampling correctors and extractors attempt to minimize the use of extra
randomness.
However, there are significant differences between the two settings.
A first difference is that randomness extractors assume a lower bound on the 
min-entropy\footnote{The min-entropy of a distribution 
$\D$ is defined as $H_{\infty}(\D)=\log\frac{1}{\max_i{\D(i)}}$.} 
of the input distribution, whereas sampling improvers 
assume the distribution to be $\eps$-close to uniform in total variation distance. 
Note that the two assumptions are not comparable.\footnotemark 
Secondly, in both the extractor and sampling improver
models, since the entropy of the output distribution should be larger, 
one would either need more random bits from the 
weak random source or additional uniform random bits. 
Our sampling improvers do not use any extra random bits, 
which is also the case in deterministic extractors, but not in other
extractor constructions.
However, unlike the extractor model, in the sampling improver model, 
there is no bound on the number of 
independent samples one can take from the original distribution. 
Tight bounds and impossibility results are known for both general and deterministic extractors \cite{book:vadhan,RT:00}, 
in particular in terms of the amount of additional randomness required. 
Because of the aforementioned differences in both the assumptions on and access to the input distribution, 
these lower bounds do not apply to our setting -- which explains why our sampling improvers avoid this need for extra random bits.\medskip 

\footnotetext{For example, the min-entropy of a distribution which is $\eps$-close to uniform 
can range from $\log (1/\eps)$ to $\bigOmega{\log n}$. 
Conversely, the distance to uniformity of a distribution which 
has high min-entropy can also vary significantly: there exist distributions with min-entropy $\bigOmega{\log n}$
but which are respectively $\bigOmega{1}$-far from and $\bigO{1/n}$-close to uniform.}

	\subsection{Monotone distributions and randomness scarcity}\label{sec:monotonicity:extract}
		In this section, we describe how to utilize a (close-to-monotone) input distribution to obtain the uniform random samples some of our previous correctors and improvers need. This is at the price of a $\tildeO{\log n}$-sample overhead per draw, and follows the same general approach as in \autoref{lemma:sampling:corrector:uniformity:vneumann}. We observe that even if this \emph{seems} to defeat the goal (as, with this many samples, one could even \emph{learn} the distribution, as stated in \autoref{lemma:learning:corrector:birge}), this is not actually the case: indeed, the procedure below is meant as a subroutine for these very same correctors, emancipating them from the need for truly additional randomness -- which they would require otherwise, e.g. to generate samples from the corrected or learnt distribution.
\begin{restatable}[Randomness from (almost) monotone]{lemma}{monotonevncorr}\label{lemma:randomness:monotone:vneumann}
  There exists a procedure which, which, given $\eps\in[0,1/3)$ and $\delta > 0$, as well as sample access to a distribution $\D$ guaranteed to be \eps-close to monotone, either returns
  \begin{itemize}
    \item \textsf{``point mass,''} if $\D$ is $\eps$-close to the point distribution\footnote{The Dirac distribution $\delta_1$ defined by $\delta_1(1)=1$, which can be trivially sampled from without the use for any randomness by always outputting 1.}  on the first element;
    \item or a uniform random sample from $[n]$;
  \end{itemize}
   with probability of failure at most $\delta$. The procedure makes $\bigO{\frac{1}{\eps^2} \log\frac{1}{\delta}}$ samples from $\D$ in the first case, and $\bigO{\frac{\log n}{\eps} \log\frac{\log n}{\delta}}$ in the second.
\end{restatable}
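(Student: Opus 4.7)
The plan is to adapt the von Neumann extraction scheme of \autoref{lemma:sampling:corrector:uniformity:vneumann} to this setting, where the structural promise on $\D$ is closeness to monotonicity rather than to uniformity. First, I would perform a ``point-mass detection'' phase: draw $m_1=\bigO{\frac{1}{\eps^2}\log\frac{1}{\delta}}$ samples and use the DKW inequality (\autoref{theo:dkw}) to obtain an empirical CDF $\hat{F}$ with $\norminf{\hat{F}-F}\leq\eps/C$ for a small absolute constant $C$, with failure probability $\leq\delta/2$. If $\hat{F}(1)\geq 1-\eps+\eps/C$, output ``point mass'' and halt: under the DKW event this certifies $\D(1)\geq 1-\eps$, so $\D$ is indeed $\eps$-close to $\delta_1$, and the sample cost for this branch matches the claimed $\bigO{\frac{1}{\eps^2}\log\frac{1}{\delta}}$.

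Otherwise $\D(1)<1-\eps+2\eps/C$, and the goal becomes to locate a split $k^\ast\in[n]$ such that the biased coin $B_X\eqdef\indicSet{X\leq k^\ast}$ (for $X\sim\D$) satisfies $\min\{F(k^\ast),\,1-F(k^\ast)\}=\bigOmega{\eps}$. Given such a $k^\ast$, applying von Neumann's trick exactly as in the proof of \autoref{lemma:sampling:corrector:uniformity:vneumann}, with per-bit failure probability $\delta'\eqdef\delta/(4\log n)$, extracts one perfectly uniform bit from $\bigO{\frac{1}{\eps}\log\frac{1}{\delta'}}$ fresh samples of $\D$. Chaining $\log n$ independent such rounds yields a uniform sample in $[n]$; a union bound over the DKW event and the per-bit failures bounds the overall failure probability by $\delta$, and the total sample complexity is $\bigO{\frac{\log n}{\eps}\log\frac{\log n}{\delta}}$, as claimed.

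The main obstacle is showing that such a split $k^\ast$ always exists in the branch where the point-mass test did not fire. The key structural fact is that any monotone $M$ satisfies $M(k)\leq 1/k$ (by monotonicity and normalization), and combined with the pointwise bound $\dabs{\D(k)-M(k)}\leq 2\totalvardist{\D}{M}\leq 2\eps$ this gives $\D(k)\leq 1/k+2\eps$ for every $k$. Two cases then settle existence of $k^\ast$: if $\D(1)\in[\eps,1-\eps]$, take $k^\ast=1$; otherwise $\D(1)<\eps$, which forces $M(1)\leq\D(1)+2\eps<3\eps$ and hence $\D(i)\leq M(1)+2\eps<5\eps$ pointwise by monotonicity of $M$. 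Consequently $F$ grows by at most $5\eps$ at each step and must therefore take some value inside $[\eps,1-\eps]$, allowing us to locate $k^\ast$ from $\hat{F}$ by a simple linear scan. The constants work out cleanly whenever $5\eps<1-\eps$ (that is, $\eps<1/6$); covering the full range $\eps<1/3$ asserted in the lemma requires a slightly finer case analysis (e.g.\ using $M(1)+M(2)\leq 1$ to strengthen the pointwise bound at small $k$), but introduces no new ideas.
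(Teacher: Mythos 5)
Your proposal is correct and follows the same overall route as the paper's proof: a DKW-based estimate of the \cdf, a point-mass check concentrated on the first element, the identification of a prefix/suffix split whose two sides each carry probability $\bigOmega{\eps}$, and von Neumann extraction of $\log n$ bits from the resulting biased coin, with matching sample complexities in both branches. The one place where you genuinely diverge is the structural argument for why a good split exists. The paper splits just below the empirical $(1-\eps/2)$-quantile $m$ and uses a monotone-averaging argument ($M(\{1,\dots,k\})\geq\frac{k}{k+1}M(\{1,\dots,k+1\})$ for monotone $M$, plus a $2\eps$ transfer cost) to show the prefix carries at least $\frac12-\bigO{\eps}$ of the mass while the suffix carries at least $\eps/4$ by the quantile definition; its coin is therefore nearly fair from below. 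You instead establish the pointwise bound $\D(k)\leq M(1)+2\eps=\bigO{\eps}$ in the sub-case $\D(1)<\eps$ and conclude by an intermediate-value argument that the \cdf must enter the window $[\bigOmega{\eps},1-\bigOmega{\eps}]$. Your coin is only guaranteed to be $\bigOmega{\eps}$-bounded away from $0$ and $1$, which is weaker than the paper's guarantee but yields the same $\bigO{\frac{\log n}{\eps}\log\frac{\log n}{\delta}}$ cost per uniform sample. Two minor remarks: your case split omits the sliver $\D(1)\in(1-\eps,\,1-\eps+2\eps/C)$, which is harmless since $k^\ast=1$ still works there (the non-firing of the point-mass test already certifies $1-F(1)=\bigOmega{\eps}$); and the constant-range restriction you flag (your argument as written needs roughly $\eps<1/6$) is in fact shared by the paper's own proof, whose lower bound $p\geq\frac12-\frac{19\eps}{8}$ is likewise vacuous for $\eps$ near $1/3$, so this does not distinguish the two arguments.
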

\ifnum\fulldetails=1 \begin{proof}
By taking $\bigO{\log(1/\delta)/\eps^2}$ samples, the algorithm starts by approximating by $\hat{F}$ the \cdf $F$ of the distribution up to an additive $\frac{\eps}{4}$ in $\lp[\infty]$. Then, defining
\[
    m \eqdef \min \setOfSuchThat{i \in [n]}{ \hat{F}(i) \geq 1-\frac{\eps}{2}}
\]
so that $F(m) \geq 1-\frac{3\eps}{4}$. According to the value of $m$, we consider two cases:
\begin{itemize}
  \item If $m = 1$, then $\D$ is \eps-close to the (monotone) distribution $\delta_1$ which has all weight on the first element; this means we have effectively \emph{learnt} the distribution, and can thereafter consider, for all purposes, $\delta_1$ in lieu of $\D$.  \item If $m > 1$, then $\D(1) < 1-\frac{\eps}{4}$ and we can partition the domain in two sets $S_0\eqdef\{1,\dots, k\}$ and $S_1\eqdef\{k,\dots, n\}$, by setting
\[
    k \eqdef \min \setOfSuchThat{i \in [n]}{ \hat{F}(i) < 1-\frac{\eps}{2}}
\]
\end{itemize}

By our previous check we know that this quantity is well-defined. Further, this implies that $D(S_0) < 1-\frac{\eps}{4}$ and $\D(S_1) >\frac{\eps}{4}$  (the actual values being known up to $\pm \frac{\eps}{4}$). $\D$ being \eps-close to monotone, it also must be the case that
\[
  \D(S_0) \geq \frac{k}{k+1}\left(1-\frac{3\eps}{4}\right) - 2\eps \geq \frac{1}{2}\left(1-\frac{3\eps}{4}\right) - 2\eps \geq \frac{1}{2}-\frac{19\eps}{8}
\]
since $\hat{F}(k+1) \geq 1-\frac{\eps}{2}$ implies $\D(\{1,\dots,k\})+\D(k)=\D(\{1,\dots,k+1\}) \geq 1-\frac{3\eps}{4}$. By setting $p\eqdef \D(S_0)$, this means we have access to a Bernoulli random variable with parameter $p\in\left[\frac{1}{2}-3\eps,1-\frac{\eps}{4}\right]$. As in the proof of \autoref{lemma:sampling:corrector:uniformity:vneumann} (the constant $c$ being replaced by $\min\!\left(\frac{\eps}{4}, \frac{1}{2}-3\eps\right)$), one can then leverage this to output with probability at least $1-\delta$ a uniform random number $s\in[n]$ using $\bigO{\frac{\log n}{\eps} \log\frac{\log n}{\delta}}$ samples---and $\bigO{\frac{\log n}{\eps}}$ in expectation.
\end{proof}
\fi  
\clearpage
\clearpage
\addcontentsline{toc}{section}{References}
\bibliographystyle{alpha}	
\bibliography{references}

\clearpage
\appendix
\section{On convolutions of distributions over an Abelian finite cyclic group}\label{appendix:convolution:abelian}

\begin{definition}
For any two probability distributions $\D_1$, $\D_2$ over a finite group $G$ (not necessarily Abelian), the \emph{convolution} of $\D_1$ and $\D_2$, denoted $\D_1\convolution \D_2$, is the distribution on $G$ defined by
\[
  \D_1\convolution \D_2(x) = \sum_{g\in G} \D_1(x g^{-1})\D_2(g)
\]
In particular, if $G$ is Abelian, $\D_1\convolution \D_2=\D_2\convolution \D_1$.
\end{definition}
\begin{fact}
The convolution satisfies the following properties:
\begin{enumerate}[(i)]
  \item it is associative:
    \begin{equation}
      \forall \D_1,\D_2,\D_3, \quad \D_1\convolution (D_2\convolution \D_3) = (D_1\convolution \D_2)\convolution \D_3 = \D_1\convolution \D_2\convolution \D_3
    \end{equation}
  \item it has a (unique) absorbing element, the uniform distribution $\uniformOn{G}$:
    \begin{equation}
      \forall \D, \quad \uniformOn{G}\convolution \D = \uniformOn{G}
    \end{equation}
  \item it can only decrease the total variation:
    \begin{equation}
      \forall \D_1,\D_2,\D_3, \quad \totalvardist{ \D_1\convolution \D_2 }{ \D_1\convolution \D_3 } \leq \totalvardist{ \D_2 }{ \D_3 }
    \end{equation}
  \end{enumerate}
\end{fact}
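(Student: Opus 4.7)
The plan is to dispatch each part through a short direct argument, leveraging Fact 2.2 (the data processing inequality) for the third item so as to avoid any explicit manipulation of total variation.

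For associativity (i), I would unfold both triple convolutions according to the definition and show that each $(\D_1 \convolution (\D_2 \convolution \D_3))(x)$ and $((\D_1 \convolution \D_2) \convolution \D_3)(x)$ equals the same double sum $\sum_{g,h \in G} \D_1(xh^{-1})\D_2(hg^{-1})\D_3(g)$, after an appropriate change of summation variable. A more conceptual (and shorter) alternative is to observe that if $X_i \sim \D_i$ are drawn independently, then by the definition of convolution, both $X_1 \convolution (X_2 \convolution X_3)$ and $(X_1 \convolution X_2) \convolution X_3$ are the law of the product $X_1 X_2 X_3$; since group multiplication is associative (regardless of commutativity), the two laws coincide. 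I would probably give this second version, which is slicker and needs no reindexing bookkeeping.

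For the absorbing property (ii), a direct computation suffices. For any $x \in G$, $(\uniformOn{G} \convolution \D)(x) = \sum_g \frac{1}{\abs{G}} \D(g) = \frac{1}{\abs{G}}$, so $\uniformOn{G} \convolution \D = \uniformOn{G}$; and a symmetric reindexing $h = xg^{-1}$ shows $\D \convolution \uniformOn{G} = \uniformOn{G}$ as well. Uniqueness is then one line: if $\mu$ is another absorbing element (in the sense that $\D \convolution \mu = \mu$ for all $\D$), then specializing to $\D = \uniformOn{G}$ gives $\mu = \uniformOn{G} \convolution \mu = \uniformOn{G}$ by the computation just made.

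For the data-processing-type bound (iii), I would reduce directly to Fact 2.2. Define the randomized function $F$ on $G$ by: sample $y \sim \D_1$, and output $F(x) = yx$. Then for any distribution $\D$ on $G$, the pushforward $F(\D)$ (in the sense of Fact 2.2, drawing $x \sim \D$ and $y \sim \D_1$ independently and returning $yx$) is precisely $\D_1 \convolution \D$, by the definition of convolution. Hence Fact 2.2 yields
\[
\totalvardist{\D_1 \convolution \D_2}{\D_1 \convolution \D_3} = \totalvardist{F(\D_2)}{F(\D_3)} \leq \totalvardist{\D_2}{\D_3},
\]
which is what was claimed. There is no real obstacle here; the only thing to be careful about is getting the sides of multiplication consistent with the given definition of $\convolution$, especially if one wishes to cover the non-Abelian case (in which one would use $F(x) = yx$ for left-convolution and the analogous $F(x) = xy$ for right-convolution, both being deterministic functions of $x$ once $y$ is fixed, hence valid randomized maps in the sense of Fact 2.2).
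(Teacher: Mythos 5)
Your three arguments are all correct, and in fact the paper itself states this fact without proof (deferring to the references on convolutions over finite groups), so there is no in-paper argument to compare against. The probabilistic view of $\D_1\convolution\D_2$ as the law of $X_1X_2$ for independent $X_i\sim\D_i$ cleanly handles both associativity and the reduction of item (iii) to the data processing inequality, and the direct computation plus the specialization $\D=\uniformOn{G}$ settles item (ii) including uniqueness. The only blemish is notational: in the associativity argument you write ``$X_1\convolution(X_2\convolution X_3)$'' where you mean the law of the product $X_1(X_2X_3)$ --- the convolution symbol should act on the distributions, not the random variables.
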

\noindent For more on convolutions of distributions over finite groups, see for instance \cite{Diaconis:88} or \cite{BCLR:08}.

\begin{fact}[\cite{MathOverflow:13:main}]\label{fact:convol:uniform:tv}
Let $G$ be a finite Abelian group, and $\D_1$, $\D_2$ two probability distributions over $G$. Then, the convolution of $\D_1$ and $\D_2$ satisfies
\begin{equation}\label{eq:convol:uniform:tv}
  \totalvardist{ \uniformOn{G} }{ \D_1\convolution \D_2 } \leq 2\totalvardist{ \uniformOn{G} }{ \D_1 }\totalvardist{ \uniformOn{G} }{ \D_2 }
\end{equation}
where $\uniformOn{G}$ denotes the uniform distribution on $G$. Furthermore, this bound is tight.
\end{fact}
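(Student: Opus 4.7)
The plan is to write each distribution as a perturbation of the uniform distribution and exploit the bilinearity of convolution together with the absorbing property of $\uniformOn{G}$. Concretely, set $f_i \eqdef \D_i - \uniformOn{G}$ for $i=1,2$; these are signed functions on $G$ satisfying $\sum_{x\in G} f_i(x) = 0$ (since both $\D_i$ and $\uniformOn{G}$ are probability distributions) and $\normone{f_i} = 2\totalvardist{\uniformOn{G}}{\D_i}$ by definition of total variation distance.

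By bilinearity of convolution,
\[
  \D_1 \convolution \D_2 = (\uniformOn{G} + f_1) \convolution (\uniformOn{G} + f_2) = \uniformOn{G}\convolution \uniformOn{G} + \uniformOn{G}\convolution f_2 + f_1 \convolution \uniformOn{G} + f_1 \convolution f_2.
\]
The first term equals $\uniformOn{G}$ by the absorbing property. The two cross terms vanish: for any $x\in G$,
\[
  (\uniformOn{G}\convolution f_2)(x) = \sum_{g\in G} \uniformOn{G}(xg^{-1}) f_2(g) = \frac{1}{\abs{G}}\sum_{g\in G} f_2(g) = 0,
\]
and symmetrically $f_1\convolution \uniformOn{G} = 0$ (here we use that $G$ is Abelian so the identity holds on both sides, though it is in fact not needed for this particular cancellation). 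Hence $\D_1 \convolution \D_2 = \uniformOn{G} + f_1 \convolution f_2$, and therefore $2\totalvardist{\uniformOn{G}}{\D_1\convolution \D_2} = \normone{f_1 \convolution f_2}$.

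It remains to invoke Young's convolution inequality $\normone{f_1 \convolution f_2} \leq \normone{f_1}\normone{f_2}$, which in the finite-group setting is immediate: by the triangle inequality and Fubini,
\[
  \normone{f_1 \convolution f_2} = \sum_{x\in G} \abs*{\sum_{g\in G} f_1(xg^{-1}) f_2(g)} \leq \sum_{g\in G} \abs{f_2(g)} \sum_{x\in G} \abs{f_1(xg^{-1})} = \normone{f_1}\normone{f_2}.
\]
Plugging this in yields $2\totalvardist{\uniformOn{G}}{\D_1 \convolution \D_2} \leq 4\totalvardist{\uniformOn{G}}{\D_1}\totalvardist{\uniformOn{G}}{\D_2}$, which is the claimed inequality. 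The main (very mild) obstacle is simply recognizing the decomposition $\D_i = \uniformOn{G} + f_i$; once done, absorption kills the cross terms and the rest is Young. Tightness is witnessed by $G=\mathbb{Z}_2$ with $\D_1 = \D_2$ being the point mass at $0$: both sides equal $1/2$.
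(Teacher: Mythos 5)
Your proof is correct, and it is the standard argument for this fact (the paper itself does not include a proof, deferring to the cited MathOverflow post, which uses exactly this decomposition $\D_i = \uniformOn{G} + f_i$ followed by Young's inequality $\normone{f_1 \convolution f_2} \leq \normone{f_1}\normone{f_2}$). The tightness witness on $\Z_2$ is also valid, so nothing is missing.
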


\begin{claim}\label{claim:counterexample:convolution}
For $\eps\in(0,\frac12)$, there exists a distribution $\D$ on $\Z_n$ such that $\totalvardist{ \D }{ \uniform } = \eps$, yet $\totalvardist{ \D }{ \D\convolution \D } = \eps + \frac{3}{4}\eps^2 + \bigO{\eps^3} > \eps$.
\end{claim}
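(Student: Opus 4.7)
The plan is to exhibit a small explicit distribution on $\Z_3$ and verify the claim by a direct computation. A natural candidate is the symmetric perturbation with a single peak: $\D(0) = \D(1) = \tfrac{1}{3} - \tfrac{\eps}{2}$ and $\D(2) = \tfrac{1}{3} + \eps$. For $\eps \in (0, 1/2) \subsetneq (0, 2/3)$ this is a valid probability distribution, and a direct calculation confirms $\totalvardist{\D}{\uniform} = \frac{1}{2}(\tfrac{\eps}{2}+\tfrac{\eps}{2}+\eps) = \eps$ as required.

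The key reduction is to write $\D = \uniform + E$, where $E(0)=E(1)=-\eps/2$, $E(2)=\eps$, which has $\sum_x E(x) = 0$. By bilinearity of convolution together with the absorbing property of $\uniform$ (Fact in \autoref{appendix:convolution:abelian}) and the identity $\uniform \convolution E = \tfrac{1}{n}\bigl(\sum_x E(x)\bigr)\uniform = 0$, one obtains $\D \convolution \D = \uniform + E\convolution E$. Consequently,
\[
\totalvardist{\D}{\D\convolution\D} = \tfrac{1}{2}\|E - E\convolution E\|_1,
\]
which reduces the problem to a finite computation involving only $E$.

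I would then compute the three entries of $E \convolution E$ on $\Z_3$ directly from the definition; each is a sum of at most three products. The resulting values are $E\convolution E(0) = -\tfrac{3}{4}\eps^2$, $E\convolution E(1) = \tfrac{3}{2}\eps^2$, and $E\convolution E(2) = -\tfrac{3}{4}\eps^2$ (and one verifies $\sum_x E\convolution E(x) = 0$ as a sanity check). The critical observation is that at every $x \in \Z_3$, $E(x)$ and $E\convolution E(x)$ have \emph{opposite} signs, so the absolute values in $\|E - E\convolution E\|_1$ add rather than cancel. Under the hypothesis $\eps < 1/2$, a sign check on each entry is routine, and summing the entry-wise differences gives
\[
\|E - E\convolution E\|_1 = \bigl(\tfrac{\eps}{2}-\tfrac{3\eps^2}{4}\bigr) + \bigl(\tfrac{\eps}{2}+\tfrac{3\eps^2}{2}\bigr) + \bigl(\eps+\tfrac{3\eps^2}{4}\bigr) = 2\eps + \tfrac{3}{2}\eps^2,
\]
hence $\totalvardist{\D}{\D\convolution\D} = \eps + \tfrac{3}{4}\eps^2$ (in fact an equality, with no higher-order correction for this particular construction on $\Z_3$).

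The only subtle step is the sign verification in the final calculation: one needs $\eps$ small enough that the sign of each $E(x) - E\convolution E(x)$ matches that of $E(x)$ so that the absolute values add constructively. This requires $\eps < 2/3$, which is implied by the hypothesis $\eps < 1/2$. I do not expect any serious obstacle beyond this bookkeeping.
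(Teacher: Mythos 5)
Your computation is correct, and your route is genuinely different from the paper's. The paper takes $\D$ to be uniform on an interval $A\subseteq\Z_n$ of length $(1-\eps)n$ centered at $n/2$, and evaluates $\totalvardist{\D_A}{\D_A\convolution\D_A}=1-\abs{A}^{-2}\sum_{a\in A}r(a)$ by a case analysis counting the number of representations $r(a)$ of each $a\in A$ as a sum of two elements of $A$; this yields $\eps+\frac34\eps^2+\bigO{\eps^3}+\bigO{1/n}$ and works for every sufficiently large $n$ (the $\bigO{1/n}$ being absorbed under $\eps=\omega(n^{-1/3})$). You instead write $\D=\uniform+E$ with $\sum_x E(x)=0$, use $\uniform\convolution E=0$ to reduce the whole problem to $\frac12\normone{E-E\convolution E}$, and finish with a three-entry computation. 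This is cleaner, avoids the counting entirely, and even gives the exact value $\eps+\frac34\eps^2$ with no higher-order error. Two caveats. First, your construction lives on $\Z_3$ only, whereas the claim — and its role in the paper, namely showing that the convolution improver on the ambient group $\Z_n$ can overshoot — concerns general $n$; you should either lift your example to $\Z_n$ for $3\mid n$ by pulling back along the quotient homomorphism $\Z_n\to\Z_3$ (which preserves both convolution and total variation distance), or state explicitly that you are free to choose $n=3$; for $n$ not divisible by $3$ your argument as written does not apply, while the paper's does. Second, your ``critical observation'' that $E(x)$ and $E\convolution E(x)$ have opposite signs at every $x$ is false at $x=0$, where both equal a negative quantity, and the corresponding term $\frac{\eps}{2}-\frac34\eps^2$ does partially cancel; the weaker statement you actually use — that $E(x)-E\convolution E(x)$ has the same sign as $E(x)$ whenever $\eps<2/3$ — is correct, and your final sum $2\eps+\frac32\eps^2$ is right.
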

\begin{proof}
Inspired by---and following---a question on MathOverflow~(\cite{MathOverflow:13:counterexample}). Setting $\delta = 1-\eps > 1/2$, and taking $\D_A$ to be uniform on a subset $A$ of $\Z_n$ with $\abs{A}=\delta n$, one gets that $\totalvardist{\D_A}{\uniform}=\eps$, and yet
  \[
    \totalvardist{ \D_A }{ \D_A\convolution \D_A } = \frac{1}{2}\normone{ \D_A - \D_A\convolution \D_A } = \frac{1}{2} \sum_{g\in G} \abs{ \frac{\mathds{1}_A(g)}{\abs{A}^{\vphantom{2}}}-\frac{r(g)}{\abs{A}^2} } = 1-\frac{1}{\abs{A}^2}\sum_{a\in A}r(a)
  \]
  where $r(g)$ is the number of representations of $g$ as a sum of two elements of $A$ (as one can show that $\D_A\convolution \D_A(g)=\abs{A}^{-2}r(g)$). Fix $A$ to be the interval of length $\delta n$ centered around $n/2$, that is $A=\{\ell,\dots,L\}$ with 
  \[
      \ell \eqdef \frac{1-\delta}{2}n, \qquad L \eqdef \frac{1+\delta}{2}n
  \]
  Computing the quantity $\sum_{a\in A}r(a)$ amounts to counting the number of pairs $(a,b)\in A\times A$ whose sum (modulo $n$) lies in $A$. For convenience, define $k = (1-\delta)n$ and $K=\delta n$:
  \begin{itemize}
    \item for $k \leq a \leq K$, $a+b$ ranges from $\ell+a \leq L$ to $L+a \geq \ell + n$, so that modulo $n$ exactly $\abs{A} - \abs{A^c} =  (2\delta-1)n$ elements of $A$ are reached (each of them exactly once);
    \item for $\ell \leq a < k$, $a+b$ ranges from $\ell+a < L$ to $L+a < \ell + n$, so that the elements of $A$ not obtained are those in the interval $\{\ell, \ell+a-1\}$ -- there are $a$ of them -- and again the others are obtained exactly once;
    \item for $K < a \leq L$, $a+b$ ranges from $L < \ell+a \leq n$ to $L+a \leq K + n$, so that the elements of $A$ not obtained are those in the interval $\{L+a-n+1, L\}$ -- there are $n-a$ of them -- and as before the others are hit exactly once.
  \end{itemize}
  It follows that
  \begin{align*}
    \sum_{a\in A}r(a) &= (K-k+1)(2\delta-1)n + \sum_{a=\ell}^{k-1} (\delta n-a) + \sum_{a=K+1}^{L} (\delta n - (n-a)) \\
     &= (2\delta-1)(2\delta-1)n^2 + \sum_{a=\ell}^{k-1} (\delta n-a) + \sum_{a=K+1}^{L} (a-(1-\delta) n)  \tag{the 2 sums are equal}\\
     &= (2\delta-1)^2n^2 + 2\cdot\frac{n^2}{8}(7\delta-3)(1-\delta) + \bigO{n} \\
    &= \frac{1}{4}\left( 4(4\delta^2-4\delta+1)+(7\delta-3)(1-\delta) \right)n^2 + \bigO{n} = \frac{1}{4}\left( 9\delta^2-6\delta+1\right)n^2 + \bigO{n} \\
    &= \left(1-3\eps+\frac{9}{4}\eps^2\right)n^2 + \bigO{n}
  \end{align*}
  and thus
  \begin{align*}
    \totalvardist{ \D_A }{ \D_A\convolution \D_A } = 1-\frac{1}{\abs{A}^2}\sum_{a\in A}r(a) &= 1- \frac{1-3\eps+\frac{9}{4}\eps^2}{(1-\eps)^2} + \bigO{\frac{1}{n}} = \eps+\frac{3}{4}\eps^2 + \bigO{\eps^3}
  \end{align*}
  (as $\eps = \omega(1/\sqrt[3]{n})$).
\end{proof}
 
\section{Formal definitions: learning and testing}\label{appendix:definitions}
In this appendix, we define precisely the notions of \emph{testing}, \emph{tolerant testing}, \emph{learning} and \emph{proper learning} of distributions over a domain $[n]$.
\begin{definition}[Testing]\label{def:testing:alg}
  Fix any property $\property$ of distributions, and let $\ORACLE_\D$ be an oracle providing some type of access to $D$. A \emph{$q$-sample testing algorithm for $\property$} is a randomized algorithm $\Tester$ which takes as input $n$, $\eps\in(0,1]$, as well as access to $\ORACLE_\D$. After making at most $q(\eps,n)$ calls to the oracle, \Tester outputs either \accept or \reject, such that the following holds:
  \begin{itemize}
    \item if $\D\in\property$, \Tester outputs \accept with probability at least $2/3$;
    \item if $\totalvardist{\D}{\property} \geq \eps$, \Tester outputs \reject with probability at least $2/3$.
  \end{itemize}
\end{definition}
\noindent We shall also be interested in \emph{tolerant} testers -- \newer{roughly}, algorithms robust to a relaxation of the first item above:
\begin{definition}[Tolerant testing]\label{def:tol:testing:alg}
  Fix property $\property$ and $\ORACLE_\D$ as above. A \emph{$q$-sample tolerant testing algorithm for $\property$} is a randomized algorithm $\Tester$ which takes as input $n$, $0 \leq \eps_1 < \eps_2 \leq 1$, as well as access to $\ORACLE_\D$. After making at most $q(\eps_1,\eps_2,n)$ calls to the oracle, \Tester outputs either \accept or \reject, such that the following holds:
  \begin{itemize}
    \item if $\totalvardist{\D}{\property} \leq \eps_1$, \Tester outputs \accept with probability at least $2/3$;
    \item if $\totalvardist{\D}{\property} \geq \eps_2$, \Tester outputs \reject with probability at least $2/3$.
  \end{itemize}
\end{definition}

Note that the above definition is quite general, and can be instantiated for different types of oracle access to the unknown distribution. In this work, we are mainly concerned with two such settings, namely the \emph{sampling oracle access} and \emph{\Cdfsamp oracle access}:
  \begin{definition}[Standard access model (sampling)]\label{def:sampling:oracle}
      Let $\D$ be a fixed distribution over $[n]$. A \emph{sampling oracle for $\D$} is an oracle $\SAMP_\D$ defined as follows: when queried, $\SAMP_\D$ returns an element $x\in[n]$, where the probability that $x$ is returned is $\D(x)$ independently of all previous calls to the oracle.
  \end{definition}
    \begin{definition}[\Cdfsamp access model \cite{BKR:04,CR:14}]\label{def:extended:oracle}
    Let $\D$ be a fixed distribution over $[n]$. A \emph{\cdfsamp oracle for $\D$} is a pair of oracles $(\SAMP_\D, \CDFEVAL_\D)$ defined as follows: the \emph{sampling} oracle $\SAMP_\D$ behaves as before, while the \emph{evaluation} oracle $\CDFEVAL_\D$ takes as input a query element $j\in[n]$, and returns the value of the cumulative distribution function (\cdf) at $j$. That is, it returns the probability weight that the distribution puts on $[j]$, $\D([j])=\sum_{i=1}^j D(i)$.
    \end{definition}  

\noindent Another class of algorithms we consider is that of \emph{(proper) learners}. We give the precise definition below:  
\begin{definition}[Learning]
Let $\class$ be a class of probability distributions and $\D\in \class$ be an unknown distribution. Let also $\mathcal{H}$ be a hypothesis class of distributions. 
A \emph{$q$-sample learning algorithm for $\class$} is a randomized algorithm $\Learner$ which takes as input $n$, $\eps,\delta\in(0,1)$, as well as access to $\SAMP_\D$ and  outputs the description of a distribution $\hat{\D}\in \mathcal{H}$ such that with probability at least $1-\delta$ one has $\totalvardist{\D}{\hat{\D}}\leq \eps$. 

\noindent If in addition $\mathcal{H}\subseteq \class$, then we say \Learner is a \emph{proper learning algorithm}.   
\end{definition}

The exact formalization of what \emph{learning a probability distribution} means has been considered in  Kearns et al. \cite{Kearns:94}. We note that in their language, the variant of learning this paper is most closely related to is \emph{learning to generate}.
 
\section{Omitted proofs}\label{appendix:misc:proofs}
This section contains the proofs of some of the technical lemmata of the paper, omitted for the sake of conciseness. 
\begin{proof}[Proof of Eq.~\eqref{eq:Birge:tv}]
    Fix a partition $\mathcal{I}$ of $[n]$ into $\ell$ intervals $I_1,\dots, I_\ell$, and let $\D_1$, $\D_2$ be two arbitrary distributions on $[n]$. Recall that $\Psi_{\mathcal{I}}(\D)$ is the flattening of distribution $\D_j$ (with relation to the partition $\mathcal{I}$).
      \begin{align*}
      2\totalvardist{ \Psi_{\mathcal{I}}(\D_1) }{ \Psi_{\mathcal{I}}(\D_2) } &= \sum_{i=1}^n \abs{ \Psi_{\mathcal{I}}(\D_1)(i) - \Psi_{\mathcal{I}}(\D_2)(i) } = \sum_{k=1}^\ell \sum_{i\in I_k} \abs{ \frac{\D_1(I_k)}{\abs{I_k}} - \frac{\D_2(I_k)}{\abs{I_k}} } \\
      &=  \sum_{k=1}^\ell \abs{ \D_1(I_k) - \D_2(I_k) } = \sum_{k=1}^\ell \abs{ \sum_{i\in I_k}\left( \D_1(i) - \D_2(i) \right) } \\
      &\leq  \sum_{k=1}^\ell \sum_{i\in I_k} \abs{ \D_1(i) - \D_2(i)  } = \sum_{i=1}^n \abs{ \D_1(i) - \D_2(i)  } 
      = 2\totalvardist{\D_1}{\D_2}.
      \end{align*}
(we remark that Eq.~\eqref{eq:Birge:tv} could also be obtained directly by applying the data processing inequality for total variation distance (\autoref{lemma:data:processing:inequality:total:variation}) to $\D_1$, $\D_2$, for the transformation $\Psi_{\mathcal{I}}(\cdot)$.)
\end{proof}
    
\begin{proof}[Proof of \autoref{coro:Birge:decomposition:robust}]
  Let $\D$ be \eps-close to monotone, and $D^\prime$ be a monotone distribution such that $\totalvardist{\D}{\D^\prime} = \eta \leq \eps$. By Eq.~\eqref{eq:Birge:tv}, we have
  \begin{equation}\label{eqn:phialpha:cannot:increase:tv}
      \totalvardist{ \birge[\D]{\alpha} }{ \birge[\D^\prime]{\alpha} } \leq \totalvardist{\D}{\D^\prime} = \eta
  \end{equation}
  proving the last part of the claim  (since $\birge[\D^\prime]{\alpha}$ is easily seen to be monotone).

  \noindent Now, by the triangle inequality,
  \begin{align*}
      \totalvardist{D}{ \birge[\D^\prime]{\alpha} } &\leq \totalvardist{\D}{\D^\prime} + \totalvardist{D^\prime}{ \birge[\D^\prime]{\alpha} } + \totalvardist{ \birge[\D^\prime]{\alpha} }{ \birge[\D]{\alpha} } \\
      &\leq \eta + \alpha + \eta \\
      &\leq 2\eps+\alpha
  \end{align*}
  where the last inequality uses the assumption on $\D^\prime$ and \autoref{theorem:Birge:obl:decomp} applied to it.
\end{proof}

\end{document}